\newcommand{\abs}[1]{\left \lvert #1 \right \rvert}
\newcommand\J{\mathrm{J}_{a,b}}
\newcommand\Lag{\mathrm{L}_b}
\newcommand{\av}[1]{\left \langle #1 \right \rangle}
\DeclareMathOperator{\tr}{Tr}
\theoremstyle{plain}
\newtheorem{theorem}{Theorem}[section]
\newtheorem{proposition}[theorem]{Proposition}
\newtheorem{lemma}[theorem]{Lemma}
\newtheorem{corollary}[theorem]{Corollary}
\theoremstyle{definition}
\newtheorem{remark}[theorem]{Remark}
\newtheorem{conjecture}[theorem]{Conjecture}
\newtheorem{example}[theorem]{Example}
 \theoremstyle{definition}
\begin{document}
\title{Moments of the transmission eigenvalues, proper delay times 
    and random matrix theory II}
\author{F.~Mezzadri}
\address{School of Mathematics, University of Bristol, Bristol BS8
 1TW, UK}
\email{f.mezzadri@bristol.ac.uk}

\author{N.~J.~Simm}
\address{School of Mathematical Sciences, Queen Mary, University of
  London, Mile End Road, London E1 4NS, UK§}
\email{n.simm@qmul.ac.uk}

 \thanks{Research partially supported by EPSRC, grant no: EP/G019843/1,
   and by the Leverhulme Trust, Research Fellowship no:
   RF/4/RFG/2009/0092.}

\begin{abstract}
  We systematically study the first three terms in the asymptotic
  expansions of the moments of the transmission eigenvalues and proper
  delay times as the number of quantum channels $n$ in the leads goes
  to infinity. The computations are based on the assumption that the
  Landauer-B\"{u}ttiker scattering matrix for chaotic ballistic
  cavities can be modelled by the circular ensembles of Random Matrix
  Theory (RMT).  The starting points are the finite-$n$ formulae that
  we recently discovered.\cite{MS11a}  Our analysis includes all the
  symmetry classes $\beta \in \{1,2,4\}$; in addition, it applies to
  the transmission eigenvalues of Andreev billiards, whose symmetry
  classes were classified by Zirnbauer~\cite{Zir96} and Altland and
  Zirnbauer.\cite{AZ97}  Where applicable, our results are in
  complete agreement with the semiclassical theory of mesoscopic
  systems developed. by Berkolaiko \textit{et al.}~\cite{BHN08} and
  Berkolaiko and Kuipers.\cite{BK10,BK11} Our approach also
  applies to the Selberg-like integrals. We calculate the first two
  terms in their asymptotic expansion explicitly.
\end{abstract}
\maketitle

\tableofcontents
\section{Introduction}
\label{se:introduction}
A quantum dot is often modelled by a two-dimensional billiard with
holes on the boundary, whose sizes are proportional to the number of
quantum channels in the leads.  Because of the mesoscopic dimensions
of the ballistic cavity, quantum mechanical phase coherence plays an
important role in the dynamics of an electron inside it.  Therefore,
the electric current has an intrinsic stochastic nature, whose
fluctuations are of theoretical and experimental
interest. Furthermore, whenever the classical limit of the dynamics is
chaotic, it is expected that such fluctuations should be characterized
by universal features that are well described by \emph{Random Matrix
  Theory} (RMT).\cite{BS88,BS90,Bee93,BM94,JPB94,Bee97}

At low temperatures and voltage the scattering inside the cavity is
elastic and is described by the Landauer-B\"uttiker scattering matrix
\begin{equation}
  \label{eq:landauer_buttiker}
  S := \begin{pmatrix} r_{m \times m} & t'_{m \times n} \\
                      t_{n \times m} & r'_{n \times n}
      \end{pmatrix},
\end{equation}
where $m$ and $n$ are the number of quantum channels in the left and
right leads respectively.  The sub-blocks $r_{m \times m}$, $t_{n
  \times m}$ and $r'_{n \times n}$ and $t'_{m \times n}$ are the
reflection and transmission matrices through the incoming and outgoing
lead. Without loss of generality we shall assume that $m >n$. The
dimensionless quantum conductance at zero temperature is given by
\begin{equation}
  \label{eq:conductance}
  \tr tt^\dagger = T_1 + \dotsb + T_n,
\end{equation}
$T_1,\dotsc,T_n$ are the eigenvalues of the
transmission matrix $tt^\dagger$.  Since $S$ is unitary,
$T_1,\dotsc,T_n$ lie in the interval $[0,1]$.  

The Wigner-Smith time delay matrix is defined by
\begin{equation}
  \label{eq:time_delay}
  Q := -i\hbar S^{-1}\frac{\partial S}{\partial E}.
\end{equation}
The eigenvalues $\tau_1,\dotsc,\tau_n$ of $Q$ are called
\emph{proper delay times}, and their average
\begin{equation}
  \label{eq:delay_time}
  \tau_{\mathrm{W}} := \frac{1}{n} \tr Q = \frac{1}{n}\left(\tau_1 +
    \dotsb + \tau_n\right)
\end{equation}
is the \emph{Wigner delay time}.  The number of proper delay times
$n$ is the total number of quantum channels in the leads.  As the name
suggests, $\tau_{\mathrm{W}}$ is a measure of the time that the
electron spends in the ballistic cavity.  The purpose of this paper is
to provide a comprehensive asymptotic study in the limit as $n\to
\infty$ of the moments of the densities of the transmission
eigenvalues and of the delay times in chaotic quantum dots.

In two pioneering letters Bl\"umel and Smilansky\cite{BS88,BS90}
discovered that if the classical limit of the dynamics of a scattering
system is chaotic, then the spectral correlations of the scattering
matrix are well described by those of matrices in one of the circular
Dyson's ensembles: the Circular Orthogonal Ensemble (COE), the
Circular Unitary Ensemble (CUE) and the Circular Symplectic Ensemble
(CSE). If the dynamics is time-reversal invariant and $K^2=1$, where
$K$ is the time-reversal operator, then the appropriate ensemble is
the COE; if $K^2=-1$ it is the CSE; if the system does not have any
symmetry, then the appropriate ensemble is the CUE.

Under the assumption that the quantum dynamics can be modelled by RMT,
the transmission eigenvalues have the joint probability density
function (\emph{j.p.d.f.}) (Refs.~\citenum{BM94,Bee97,DBB10,For06b} and~\citenum{JPB94}) 
\begin{equation}
 \label{eq:tr_eig_Aqd}
 p^{(\beta,\delta)}(T_1,\dotsc,T_n) := \frac{1}{C
 }\prod_{j=1}^{n}T_{j}^{\alpha}\left(1 - T_j\right)^{\delta/2} 
 \prod_{1 \leq j < k \leq n}\left.
   \lvert T_{k}-T_{j}\right \rvert^{\beta},
\end{equation}
where $\beta \in \left \{1,2,4\right\}$, $\alpha =
\frac{\beta}{2}\left(m-n +1\right) - 1$ and $\delta \in \left
  \{-1,0,1,2\right \}$.  Throughout this paper, $C$ refers to a
normalization constant that may change at each occurrence.  The
right-hand side of~\eqref{eq:tr_eig_Aqd} is the \emph{j.p.d.f.} of the
eigenvalues of matrices in the Jacobi ensembles:
\begin{equation}
  \label{eq:jacobi_density}
   p_{\J}(x_1,\dotsc,x_n) :=  \frac{1}{C}\prod_{j=1}^n x_j^{\beta/2 (b+1)-1}
  (1-x_j)^{\beta/2(a+1)-1}\prod_{1\le j < k \le n}\abs{x_k - x_j}^{\beta},
\end{equation}
where $x_j \in [0,1]$, $j=1,\dotsc,n$,
\begin{equation}
  \label{eq:jacobi_parameters}
  a = \frac{2}{\beta}\left(1 + \frac{\delta}{2}\right) - 1 \quad
  \text{and} \quad b = m - n.
\end{equation}

The parameter $\delta$ is different from zero only for \emph{Andreev
  quantum dots,} which are ballistic cavities in contact with a
superconductor.  For such systems the symmetry classes are not Dyson's
ensembles, but those associated to symmetric spaces, which were
classified by Zirnbauer,\cite{Zir96} Altland and
Zirnbauer,\cite{AZ97} and Due\~{n}ez.\cite{Due01,Due04}  These
symmetries are parametrized by the pair of integers
\begin{equation}
\label{eq:symmetries}
(\beta,\delta) \in \left \{(1,-1),(2,-1),(4,2),(2,1)\right \}.
\end{equation}

The moments of the density of the transmission eigenvalues are
\begin{equation}
 \label{trans_mom}
 M^{(\beta)}_{\mathrm{J}_{a,b}}(k,n) := 
\av{\tr \bigl [\left(tt^\dagger\right)^k\bigr]} = \av{T_{1}^{k}+\ldots+T_{n}^{k}},
\end{equation}
where the angle brackets denote the average taken with respect to the
\emph{j.p.d.f.}~\eqref{eq:jacobi_density}. 


The assumption that the statistical fluctuations for the electric
current in quantum dots are modelled by RMT leads to an explicit
formula for the \emph{j.p.d.f.} of the proper delay times
too.\cite{BFB97} We have
\begin{equation}
\label{eq:jpdfdelaytime}
P_{\beta}(\gamma_{1},\ldots,\gamma_{n}) := \frac{1}{C} 
\prod_{j=1}^{n}\gamma_{j}^{ n\beta/2}e^{-\beta \tau_{\mathrm{H}} 
\gamma_{j}/2} \prod_{1 \leq j < k \leq n} \abs{\gamma_{k}-\gamma_{j}}^{\beta},
\end{equation}
where $\gamma_j = \tau_j^{-1}$ and $\tau_{\mathrm{H}}$ is the
Heisenberg time.  In our setting $\tau_{\mathrm{H}}=n$.  The
\emph{j.p.d.f.}~\eqref{eq:jpdfdelaytime} is the density of the
Laguerre ensemble, namely   
\begin{equation}
  \label{eq:laguerre_density}
  p_{\Lag}(x_1,\dotsc,x_n) :=  \frac{1}{C}\prod_{j=1}^n
  x^{\beta/2 (b+1)-1}_je^{-\beta x_j/2}\prod_{1\le j < k \le n}\abs{x_k - x_j}^{\beta},
\end{equation}
for $x_j \in [0,\infty)$, $j=1,\dotsc,n$.

From~\eqref{eq:jpdfdelaytime} the moments of the density of proper
delay times are the \emph{negative} moments of the density of the
Laguerre ensembles:
\begin{equation}
   \label{eq:mom_time_del}
   M^{\left(\beta\right)}_{\mathrm{L}_b}(-k,n) :=
   \frac{1}{n^k}\left \langle \tr Q^k\right \rangle = \frac{1}{n^k}
   \av{\gamma_{1}^{-k}+\ldots+\gamma_{n}^{-k}}, \quad k < n\beta/2 +
   1, 
\end{equation}
The above average is taken with respect to the
\textit{j.p.d.f.}~\eqref{eq:laguerre_density} and the parameter $b$ is
given by
\begin{equation}
\label{eq:b_neg_lag}
b=n-1 + 2/\beta.
\end{equation}
For convenience, we have used the same notation to denote the
parameters $b$ in the Jacobi and Laguerre ensembles.  In the rest of
the article, it will be clear from the context to which ensembles it
refers to.
  
We compute the first three terms of the asymptotic expansions of the
averages~\eqref{trans_mom} and~\eqref{eq:mom_time_del} in the limit as
$n \to \infty$. The starting points of our analysis are the finite-$n$
formulae that we computed in a previous publication.\cite{MS11a}
Explicit expressions for finite moments usually are rather involved
and not suitable for an asymptotic analysis.  However, our
results\cite{MS11a} have the advantage of allowing an asymptotic study
beyond the leading order.

Due to their applications to physics and multivariate analysis, the
leading order asymptotics of the moments of the eigenvalue density for
the Laguerre and Jacobi ensembles have been studied in detail both
within RMT and using semiclassical
techniques.\cite{BYK87,BB96,CC04,Col05,Nov07,BHN08,BK10,BK11,CDLV10,Kra10}
The finite-$n$ moments of the transmission eigenvalues for $\beta=2$
were computed by Novaes,\cite{Nov08} and Vivo and Vivo.\cite{VV08}
Besides the first part of this work,\cite{MS11a} recently another
paper appeared\cite{LV11} where the finite moments of the transmission
eigenvalues for $\beta \in \{1,2,4\}$ were computed using a different
approach.  The leading order term of the density of the proper delay
times was studied in Refs.~\citenum{SFS01,SS03} and~\citenum{SSS01}.
However, in these articles our assumptions of ideal physical settings
do not hold.

In a pioneering paper Johansson\cite{Joh98} looked at the
fluctuations of the linear statistics $\sum_{j}f(x_j)$ as $n \to
\infty$ for ensembles whose \emph{j.p.d.f.} is
\begin{equation*}
  p(x_1,\dotsc,x_n) = \frac{1}{C}\prod_{j=1}^n\exp\bigl(-nV(x_j)\bigr)
  \prod_{1\le j < k \le n}\abs{x_k-x_j}^\beta.
\end{equation*}
The potential $V(x)$ is a polynomial of even degree and $f(x)$ is a
suitable test function.  He proved that $\sum_j f(x_j)$ converges to a
normal random variable with finite variance.  In the same paper he
computed the next to leading order correction for arbitrary values of
$\beta$. Subsequently, Dumitriu and Edelman\cite{DE06} studied the
global fluctuations of the spectra of the $\beta$-Hermite and
$\beta$-Laguerre ensembles, that is ensembles in which $\beta$ can be
any positive real number. They also computed the next to leading order
correction to the Mar\v{c}enko-Pastur law.  Recent results on the
global fluctuations of the spectra of $\beta$-ensembles include
Refs.~\citenum{KS10} and~\citenum{BG11}.

The statistical properties of quantum transport have been studied with
semiclassical techniques too.  The semiclassical approach is
completely independent of a RMT analysis.  Although the RMT conjecture
for closed quantum systems with a chaotic classical limit has a long
history,\cite{BT77,BGS84,Ber85,SR01,MHBHA04,MHBHA05} semiclassical
techniques have been applied to quantum transport relatively
recently.\cite{RS02,HMBH06,BHMH06,BHN08,KS07,KS08,BK10,BK11,BK12,Nov11}
Within this framework, the elements of the scattering matrix are
approximated by sums over classical trajectories connecting incoming
and outgoing channels in the leads. What is most important is that the
outcome of semiclassical calculations are predictions for
\emph{individual} energy-averaged chaotic systems. This should be
distinguished from the RMT approach, which involves calculating an
average over an appropriate \emph{ensemble} of different systems. The
semiclassical limit $\hbar \to 0$ corresponds to the limit $n \to
\infty$ in RMT.

Semiclassical calculations are perturbative in nature.  The inverse
channel number $n^{-1}$ is proportional to $\hbar$ and is the small
parameter for the theory. Higher order terms are constructed by
including successively more intricate families of classical
trajectories into the sums and then performing the appropriate
combinatorics. This task was completed to all orders for the
conductance\cite{HMBH06} and for the shot noise.\cite{BHMH06} Higher
moments of the density of the transmission eigenvalues are also known
at leading order from semiclassical theory.\cite{BHN08} Berkolaiko and
Kuipers\cite{BK10,BK11} computed the leading order generating function
of the moments of the proper delay times and the first two subleading
corrections of the generating functions of the moments of both the
transmission eigenvalues and the proper delay times for $\beta=1$ and
$\beta=2$.  Recently, Berkolaiko and Kuipers\cite{BK12} and
independently Novaes\cite{Nov11} announced two distinct combinatorial
treatments of the correlations of the scattering trajectories that
imply exact agreement at all orders in $n^{-1}$ between the
semiclassical and RMT calculations of the moments of the transmission
eigenvalues. At present, however, it seems quite difficult to extract
explicit expressions for the asymptotic expansions from the
combinatorial formulae.  The consistency of RMT with the semiclassical
analysis of mesoscopic systems is a major success of these two
approaches.  Indeed, in closed systems we expect semiclassical
theories to be consistent with RMT only when they describe the local
fluctuations of the energy levels; the averages that we compute are
affected by the global fluctuations of the spectra of random matrices.
It is far from being obvious a priori that we should expect an exact
agreement at all orders.

Not many semiclassical results are available for Andreev quantum
dots. Adagideli and Beenakker\cite{AB02} and subsequently Kuipers
\textit{et al.}\cite{KWPBR10,KEBPWR11} developed a semiclassical
theory of the energy gap in the density of states of Andreev
billiards. Berkolaiko and Kuipers\cite{BK11} computed the first two
corrections to the density of states. The leading order term of the
density of states was previously computed using RMT by Melsen
\textit{et al.}\cite{MBFB96,MBFB97} using a perturbative expansion
developed by Brouwer and Beenakker,\cite{BB96} which is based on the
computation of integrals of matrix elements of matrices in the 
circular ensembles.

Since the parameter $b$ in the integrals~\eqref{trans_mom}
and~\eqref{eq:mom_time_del} depends on the number of channels in the
leads, we need to specify how the limit is taken.  In the
moments~\eqref{trans_mom} both $m$ and $n$ tend to infinity, but the
scaling parameter
\begin{equation}
  \label{eq:cond_scaling}
  u = \frac{m}{n}
\end{equation}
remains finite.  In other words, we fix $a$ but let $b=n(u-1)$ grow to
infinity.   For the Laguerre ensemble it will be convenient to
introduce a new variable $w$ by defining 
\begin{equation}
  \label{eq:lag_par}
  b = n(w-1) + 2/\beta - 1.
\end{equation}
The moments~\eqref{eq:mom_time_del} are then recovered by setting
$w=2$. A similar generalization was also considered in
Ref.~\citenum{BFP98}.

Equations.~\eqref{eq:jacobi_parameters} and~\eqref{eq:lag_par} contain
$n$-independent terms which arise due to physical reasons and are
rather unusual from a purely mathematical point of view.  Indeed, they
have non trivial effects on the subleading order terms of the
averages~\eqref{trans_mom} and~\eqref{eq:mom_time_del}. (See,
\textit{e.g.}, Remark~\ref{re:de_com}.) For this reason the moments of
the proper delay times cannot be extracted from the results of
Dumitriu and Edelman\cite{DE06} beyond the leading order.

When $\beta >0$ the density~\eqref{eq:jacobi_density} is often referred
to as the \emph{Selberg density} and the averages
\begin{equation}
\label{eq:selberg_like}
\frac{1}{C}\int_{0}^{1}\dotsi
\int_{0}^{1}\left(\sum_{j=1}^{n}x_{j}^{k}
\right)
 \prod_{j=1}^{n}x_{j}^{\beta/2(b+1)-1}(1-x_{j})^{\beta/2(a+1)-1} 
\prod_{1 \leq j < l \leq n}\left
   \lvert x_{l}-x_{j}\right \rvert^{\beta}dx_{1}\dotsm dx_{n}
\end{equation}
as \emph{Selberg-like} integrals (see, \emph{e.g.},
Refs.~\citenum{For10,FW08} and~\citenum{Meh04}). Their name originates
from the fact that by setting $k=0$ and $C=n$, the right-hand side
of~\eqref{eq:selberg_like} becomes Selberg's integral, which was
introduced for the first time in quantum transport by Savin and
Sommers.\cite{SS06} They computed the average of the shot noise and
the Fano factor for arbitrary $\beta$ nonperturbatively.
Subsequently, Sommers \textit{et al.}\cite{SWS07} and Savin \textit{et
  al.}\cite{SSW08} applied this approach to calculate moments and
cumulants up to the $4$-th order of linear and non linear statistics
of the electric current.  Novaes\cite{Nov08} used the Selberg-like
integrals to study the moments of the transmission eigenvalues for
$\beta=2$ nonperturbatively. Khoruzhenko \textit{et al.}\cite{KSS09}
combined such integrals with the theory of symmetric functions to
compute higher order moments and cumulants of the conductance and shot
noise.
 
One may ask what the asymptotic expansion of~\eqref{eq:selberg_like}
is if both $a,b \to \infty$.  This problem was studied in
Refs.~\citenum{CDLV10,Kra10} and~\citenum{Nov10} when both $a$ and
$b$ are proportional to $n$.  The scaling that they used was
\begin{equation}
\label{eq:ab_selberg}
a =(v-1)n \quad   \text{and} \quad b=(u-1)n,
\end{equation}
with $u$ and $v$ positive constants.  The leading order term is known
when $\beta=2$ (Refs.~\citenum{CDLV10} and~\citenum{Kra10}) as well as
for general $\beta$ (Ref.~\citenum{Nov10}). It turns out that at leading order
the integral~\eqref{eq:selberg_like} is independent of $\beta$.  We
compute the first subleading correction for $\beta \in \{1,2,4\}$.  We
also show that the leading order terms are consistent with the results
in Refs.~\citenum{Col05} and~\citenum{Nov10}.

The structure of the article is the following: in
Sec.~\ref{se:preliminaries_us} we outline some of the main ideas
behind the proofs; in Sec.~\ref{se:unitary_sym} we treat the
asymptotics of the moments for $\beta=2$ ensembles;
Sec.~\ref{se:orth_sym} is devoted to $\beta=1$ and $\beta=4$
ensembles; in Sec.~\ref{se:selberg_like} we study Selberg-like
integrals; Sec.~\ref{se:conclusions} concludes the paper with comments
and open problems.

\section*{Acknowledgements}
We would like to express our gratitude to Gregory Berkolaiko, Peter
Forrester, Jonathan Keating, Jack Kuipers and Dmitry Savin for
stimulating and helpful discussions.

\section{Preliminaries}
\label{se:preliminaries_us}

As remarked in the introduction, in the first part of this
work\cite{MS11a} we computed the averages~\eqref{trans_mom}
and~\eqref{eq:mom_time_del} for any number of open channels. Our
results were expressed in terms of finite sums involving binomial
coefficients and \textit{Pochhammer Symbols}, defined by the ratio
\begin{equation}
  \label{eq:poc_sym_def}
  (x)_{(n)} = \frac{\Gamma(x+n)}{\Gamma(x)}.  
\end{equation}
This simple fact has two important consequences.  Firstly, the
moments~\eqref{trans_mom} and~\eqref{eq:mom_time_del} admit the
asymptotic expansions in integer powers of $1/n$:
\begin{subequations}
  \label{asymptoticexpansions}
\begin{align}
\label{eq:asy_ex_tra}
n^{-1}M^{(\beta)}_{\J}(k,n) & \sim
\sum_{p=0}^{\infty}\mathcal{T}_{k,p}^{(\beta,\delta)}(u) n^{-p}, \\
\label{eq:asy_ex_td}
n^{k-1} M^{(\beta)}_{\mathrm{L}_{b}}(-k,n) & \sim \sum_{p=0}^{\infty}
\mathcal{D}_{k,p}^{(\beta)}(w)n^{-p}.
\end{align}
\end{subequations}
Here and in the rest of the paper whenever the parameters $a$ and $b$
appear, they will depend implicitly on $u$ or $w$ according
to~\eqref{eq:jacobi_parameters} and~\eqref{eq:lag_par}.

The second important feature of our results\cite{MS11a} is that
the finite-$n$ formulae for~\eqref{trans_mom}
and~\eqref{eq:mom_time_del} contain meromorphic functions that are
particularly suitable to an asymptotic analysis.  In order to
understand this point, it is useful to consider an example for
$\beta=2$.

\subsection{An Example for $\beta=2$}
\label{ss:examp}

Let us define
\begin{equation}
 \label{eq:difference_jacobi}
 \Delta M^{(2)}_{\J}(k,n) := M^{(2)}_{\J}(k,n) -
 M^{(2)}_{\J}(k+1,n).
\end{equation}
In Ref.~\citenum{MS11a} we found that
\begin{equation}
\label{transmissionexample}
\Delta M^{(2)}_{\J}(k,n) = \frac{1}{k}
\sum_{j=0}^{k}\binom{k}{j}\binom{k}{j-1}U^{a,b}_{n,k,j},
\end{equation}
where
\begin{equation}
\label{coefficientexample}
U^{a,b}_{n,k,j} =
\frac{(a+b+2n-2j+k+1)(a+b+n)_{(k-j+1)}(a+n-j+1)_{(j)}
(b+n)_{(k-j+1)}}{(a+b+2n-j)_{(k+2)}(a+b+2n-j+1)_{(k)}(n+1)_{(-j)}}
\end{equation}
The moments are given explicitly by
\begin{equation}
  \label{eq:expl_mom_j}
  M^{(2)}_{\J}(k,n) = M^{(2)}_{\J}(1,n)- \sum_{j=1}^{k-1} \frac{1}{j}
\sum_{i=0}^{j}\binom{j}{i}\binom{j}{i-1}U^{a,b}_{n,j,i},
\end{equation}
where
\begin{equation}
\label{eq:Aomoto_int}
  M^{(\beta)}_{\J}(1,n) =  \frac{n(b+n)}{a + b + 2n}  
\end{equation}
is a particular case of \textit{Aomoto's integral} (see, \textit{e.g.},
Ref.~\citenum{Meh04}, pp. 309--310).

We also found a formula for the negative moments in the Laguerre ensemble,
\begin{equation}
\label{wignerexample}
 M_{\mathrm{L}_b}^{(2)}(-k,n) = \frac{1}{k}
 \sum_{j=0}^{n-1}\binom{k+j}{k-1}
 \binom{k+j-1}{k-1}\frac{(b+n)_{(-k-j)}}{(1+n)_{(-j-1)}}.
\end{equation}

The moments of the eigenvalue densities in ensembles with orthogonal
and symplectic symmetries have a similar structure: the summands are
products of two factors. The first one is $n$-independent and consists
of binomial coefficients. 
This feature will obviously persist to all orders in the asymptotic
expansions of the right-hand sides of
Eqs.~\eqref{transmissionexample} and~\eqref{wignerexample}. 

The coefficients
\begin{equation}
  \label{eq:narayana_numbers}
  N(k,j) := \frac{1}{k}\binom{k}{j}\binom{k}{j-1}
\end{equation}
appear frequently in enumerative combinatorics, where they are known
as \textit{Narayana numbers}.  The $n$-independent factor in the
summands of~\eqref{wignerexample} are obtained from the Narayana
numbers by extending them to negative integers:
\begin{equation}
  \label{e q:neg_narayana}
  N(-k,j)= \frac{1}{k}\binom{k+j}{k-1}
 \binom{k+j-1}{k-1}, \quad k > 0.
\end{equation}
This identity follows from the elementary relation
\begin{equation}
  \label{eq:binom_id}
  \binom{-k}{j} = (-1)^j \binom{k + j -1}{k-1}.
\end{equation}

The $n$-dependent part of the factors in the summands
of~\eqref{transmissionexample} and~\eqref{wignerexample} are ratios of
Gamma functions, whose asymptotic expansion is\cite{TE51}
\begin{equation}
\label{fullgammaratio}
\frac{\Gamma(z+\alpha)}{\Gamma(z+\beta)} \sim z^{\alpha-\beta}
\sum_{i=0}^{\infty}(-1)^{i}\frac{(\beta-\alpha)_{(i)}}{i!}
B_{i}^{(\alpha-\beta + 1)}(\alpha)z^{-i}, \quad z \to \infty,
\end{equation}  
where $B^{(\gamma)}_i(x)$ are the generalized Bernoulli polynomials.
Now, substituting the leading order of~\eqref{fullgammaratio} into,
say, the $j$-th term in the sum~\eqref{transmissionexample} leads to a
power of $n$ whose exponent is obtained by summing or subtracting the
subscripts of the Pochhammer symbols in the
ratio~\eqref{coefficientexample}:
\begin{equation}
\label{simpleadding}
n^{1+k-j+1+j+k-j+1-k-2-k+j}=n.
\end{equation}
Crucially the exponent is independent of $j$.  Surprisingly this
phenomenon happens at all orders of $n$ for the moments of both the
transmission eigenvalues and proper delay times in all the symmetry
classes.  Thus, at least in principle, to compute the asymptotic
series~\eqref{eq:asy_ex_tra} up to a given order $p$ we need to
substitute into~\eqref{transmissionexample} the first $p$ terms of the
expansion~\eqref{fullgammaratio}.

\begin{remark}
 In addition to the Narayana numbers we will often need the
 \textit{Narayana polynomials}
\begin{equation}
\label{eq:narayanapolynomial}
N_{k}(u):=\frac{1}{k}\sum_{j=1}^{k}\binom{k}{j}\binom{k}{j-1}u^{j}
\end{equation}
as well as their generating function (see,
\textit{e.g.}, Ref.~\citenum{BSS93})
\begin{equation}
\label{narayanagenerator}
\begin{split}
\rho(u,s) & := \sum_{k=1}^\infty N_k(u)s^k \\
& = \frac{1-s(u+1)-\sqrt{1-2s+s^{2}-2us-2us^{2}+u^{2}s^{2}}}{2s}
\end{split}
\end{equation}

Indeed our results have a distinct combinatorial flavour. It is
interesting to note that also the semiclassical approach to these
problems relies heavily on combinatorics, although of a different
kind.  From the RMT point of view, it is not obvious a priori why
combinatorics should play such an important role in these
calculations.
\end{remark}

\subsection{Generating Functions} The coefficients in the
series~\eqref{eq:asy_ex_tra} and~\eqref{eq:asy_ex_td} become
increasingly involved as we go higher in the order of the expansion.
Then, it becomes convenient to express our results in terms of
\emph{generating functions.}

Given the sets of coefficients $\mathcal{T}^{(\beta,\delta)}_{k,p}(u)$ and
$\mathcal{D}^{(\beta)}_{k,p}(w)$, $k=0,1,\dotsc$, at a given order $p$
in expansions~\eqref{eq:asy_ex_tra} and~\eqref{eq:asy_ex_td}, we
define the generating functions
\begin{subequations}
 \label{genfns}
\begin{align}
\label{eq:gen_fun_momteig}
\mathcal T^{(\beta,\delta)}_p(u,s) & :=
\sum_{k=1}^{\infty}\mathcal{T}_{k,p}^{(\beta,\delta)}(u)s^{k}, \\
\label{eq:gen_fun_momttd}
 \mathcal{D}^{(\beta)}_p(w,s) & := 
\sum_{k=1}^{\infty}\mathcal{D}_{k,p}^{(\beta)}(w)s^{k}.
\end{align}
\end{subequations}

If these series are convergent in a neighbourhood of the origin
$B_\epsilon$, then $\mathcal D^{(\beta)}_p(w,s)$ and $\mathcal
T^{(\beta,\delta)}_p(u,s)$ are analytic in $B_\epsilon$ and define the
moments uniquely.

As already appeared in the remarks about
formula~\eqref{transmissionexample}, our results on the moments of the
transmission eigenvalues are based on computing the differences
between consecutive moments.  It is therefore convenient to know their
generating function too, which is defined by   
\begin{equation}
\label{gendiffs}
\Delta \mathcal T^{(\beta,\delta)}_p(u,s)  := \sum_{k=1}^{\infty}
\Delta \mathcal{T}^{(\beta,\delta)}_{k,p}(u) s^{k}.
\end{equation}
The coefficients of the powers of $s$ in this formula are defined by
\begin{equation}
  \label{eq:diff_mom_coeff}
  \Delta \mathcal{T}_{k,p}^{(\beta,\delta)}(u) :=
  \mathcal{T}_{k,p}^{(\beta,\delta)}(u)- \mathcal{T}_{k+1,p}^{(\beta,\delta)}(u).
\end{equation}
The moments at a given order $p$ in the asymptotic 
expansion~\eqref{eq:asy_ex_tra}  are then obtained from the formula
\begin{equation}
  \label{eq:totmom}
  \mathcal{T}_{k,p}^{(\beta,\delta)}(u) = \mathcal{T}_{1,p}^{(\beta,\delta)}(u) -
  \sum_{j=1}^{k-1}\Delta \mathcal{T}^{(\beta,\delta)}_{j,p}(u).
\end{equation}
The first moments $\mathcal{T}_{1,p}^{(\beta,\delta)}(u)$ can be computed directly from the asymptotic expansion of Aomoto's
integral~\eqref{eq:Aomoto_int}, which gives
\begin{equation}
  \label{eq:aom_ex}
  \mathcal{T}_{1,p}^{(\beta,\delta)}(u) =  \frac{u}{(u+1)^{p+1}}
\left[1 - \frac{2}{\beta}\left(\frac{\delta}{2}+ 1 \right)\right]^p.
\end{equation}
An immediate consequence of the definition~\eqref{gendiffs} is the following functional relationship with~\eqref{eq:gen_fun_momteig}:
\begin{equation}
\label{diffscorrespondence}
\mathcal T^{(\beta,\delta)}_{p}(u,s) = \frac{s}{s-1}\left(\Delta 
\mathcal
T^{(\beta,\delta)}_{p}(u,s)-\mathcal{T}_{1,p}^{(\beta,\delta)}(u)\right). 
\end{equation}

\section{Unitary Symmetry}
\label{se:unitary_sym}

The techniques and tools that we use are broadly similar for all the
symmetry classes.  However, as is usual in RMT, the power $\beta$ in
the Vandermonde determinant of the
\emph{j.p.d.f.}'s~\eqref{eq:jacobi_density}
and~\eqref{eq:laguerre_density} affects the complexity of the
calculations.  Although when $\delta \neq 0$ in~\eqref{eq:tr_eig_Aqd}
the matrix ensembles that model Andreev billiards do not belong to
Dyson's symmetry classes, for the sake of simplicity we shall use the
usual terminology when referring to an ensemble characterized by a
given $\beta$, namely unitary, orthogonal and symplectic if $\beta=2$,
$\beta=1$ and $\beta=4$ respectively.

\subsection{Leading Order}
In Sec.~\ref{sse:exact_results}, Remark~\ref{re:beta_in}, we shall
prove that the leading order contributions are the same for all
symmetry classes. Therefore, we shall compute the limits
\[
\lim_{n\to \infty} n^{-1}M^{(\beta)}_{\J}(k,n) \quad \text{and} \quad
\lim_{n\to \infty}n^{k-1}M^{(\beta)}_{\Lag}(-k,n)
\]
only when $\beta=2$, since in this case the finite-$n$ formulae 
simplify considerably.
\begin{proposition}
 \label{pro:leading_order}
The leading order terms of the series~\eqref{eq:asy_ex_tra}
and~\eqref{eq:asy_ex_td} are independent of $\beta$ and $\delta$ and
are given by
 \begin{subequations}
\begin{align}
 \label{wignerdelayleading}
  \mathcal{D}_{k,0}^{(\beta)}(w) & = \begin{cases}
    \frac{1}{w-1} & \text{if $k=1$}\\
    \frac{(w-1)^{1-2k}}{k-1}\sum\limits_{j=1}^{k-1}
  \binom{k-1}{j}\binom{k-1}{j-1}w^{j}
& \text{if $k>1$}
  \end{cases} \\
 \label{2leadjacobileading}
 \mathcal{T}_{k,0}^{(\beta,\delta)}(u) & = \frac{u}{u+1} -  \sum_{j=1}^{k-1}
 \frac{1}{j}\sum_{i=1}^{j}\binom{j}{i}\binom{j}{
   i-1}\frac{u^{2i}}{(u+1)^{2j+1}}.
\end{align}
\end{subequations}
\end{proposition}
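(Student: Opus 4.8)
The plan is to exploit the $\beta$-independence of the leading order, which will be established separately in Remark~\ref{re:beta_in}, to reduce the entire computation to the unitary case $\beta=2$, where the explicit finite-$n$ formulae~\eqref{eq:expl_mom_j} and~\eqref{wignerexample} are available. For $\beta=2$ one has $a=\delta/2=O(1)$ while $b=n(u-1)$ (respectively $b=n(w-1)+2/\beta-1$) grows linearly in $n$, so that every Pochhammer symbol appearing in~\eqref{coefficientexample} and~\eqref{wignerexample} is a ratio of Gamma functions to which the expansion~\eqref{fullgammaratio} applies.

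For the Jacobi moments I would start from~\eqref{eq:expl_mom_j} and treat the Aomoto term and the double sum separately. Substituting $b=n(u-1)$ into~\eqref{eq:Aomoto_int} immediately gives $n^{-1}M^{(2)}_{\J}(1,n)=(b+n)/(a+b+2n)\to u/(u+1)$. Since the number of summands in the double sum is fixed, I may pass to the limit term by term: inserting the leading term of~\eqref{fullgammaratio} into each $U^{a,b}_{n,j,i}$, the exponents of $n$ cancel exactly as in~\eqref{simpleadding}, independently of $i$, leaving $n^{-1}U^{a,b}_{n,j,i}\to u^{2(j-i+1)}/(u+1)^{2j+1}$. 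The last step is to observe that the Narayana weight $\binom{j}{i}\binom{j}{i-1}$ is invariant under $i\mapsto j-i+1$, which converts $u^{2(j-i+1)}$ into $u^{2i}$ and reproduces~\eqref{2leadjacobileading}.

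The delay times are the part I expect to be the main obstacle, because in~\eqref{wignerexample} the index $j$ runs up to $n-1$, so the subscripts of the Pochhammer symbols are unbounded and~\eqref{fullgammaratio} cannot be applied uniformly. To deal with this I would first show that only $j=O(1)$ contributes at leading order: writing $j=xn$ and using Stirling's formula, the summand is governed by $\exp\!\bigl(n\,\phi(x)\bigr)$ with $\phi(x)=(w-x)\log(w-x)-w\log w-(1-x)\log(1-x)$; since $\phi(0)=0$ and $\phi'(x)=\log\frac{1-x}{w-x}<0$ for $w>1$, the terms with $j=\Theta(n)$ are exponentially suppressed and a dominated-convergence argument is available. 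For fixed $j$ the leading term of~\eqref{fullgammaratio} then gives $n^{k-1}$ times $\binom{k+j}{k-1}\binom{k+j-1}{k-1}w^{-k-j}$, whence
\[
\lim_{n\to\infty}n^{k-1}M^{(2)}_{\Lag}(-k,n)=\frac{1}{k}\sum_{j=0}^{\infty}\binom{k+j}{k-1}\binom{k+j-1}{k-1}w^{-k-j}=w^{-k}\,{}_2F_1(k+1,k;2;1/w).
\]

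Evaluating this hypergeometric sum in closed form is the second half of the obstacle. For $k=1$ both binomials equal $1$ and the series is geometric, giving $1/(w-1)$. For $k>1$ I would apply Euler's transformation ${}_2F_1(k+1,k;2;x)=(1-x)^{1-2k}\,{}_2F_1(1-k,2-k;2;x)$, which turns the series into a polynomial of degree $k-1$; setting $x=1/w$ and matching coefficients shows $w^{k-1}\,{}_2F_1(1-k,2-k;2;1/w)=N_{k-1}(w)$, with $N_{k-1}$ the Narayana polynomial~\eqref{eq:narayanapolynomial}. This produces $\mathcal{D}^{(\beta)}_{k,0}(w)=(w-1)^{1-2k}N_{k-1}(w)$, which is precisely~\eqref{wignerdelayleading}.
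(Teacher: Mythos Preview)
Your proposal is correct and follows essentially the same route as the paper's own proof: reduce to $\beta=2$ via the $\beta$-independence of Remark~\ref{re:beta_in}, expand $U^{a,b}_{n,j,i}$ and the Pochhammer ratio in~\eqref{wignerexample} to leading order via~\eqref{fullgammaratio}, and then identify the resulting Laguerre series with ${}_2F_1(k,k+1;2;w^{-1})$, which Euler's transformation collapses to the Narayana polynomial $N_{k-1}(w)$.

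The one genuine addition in your argument is the treatment of the tail in the Laguerre sum. The paper simply substitutes the fixed-$j$ estimate~\eqref{leadingestimatewig} into~\eqref{wignerexample} and writes down the infinite series~\eqref{wignerdelayinfseries} without comment, implicitly assuming that the terms with $j$ growing with $n$ are negligible. Your Stirling/large-deviations bound with $\phi(x)=(w-x)\log(w-x)-w\log w-(1-x)\log(1-x)$ makes this step rigorous; since $\phi'(x)<0$ for $w>1$, the contribution from $j=\Theta(n)$ is indeed exponentially small and dominated convergence is legitimate. This is a welcome clarification of a point the paper leaves tacit.
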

\begin{proof}
 We first prove \eqref{wignerdelayleading}. We have
\begin{equation}
\label{leadingestimatewig}
\frac{(b+n)_{(-k-j)}}{(1+n)_{(-1-j)}} =
\frac{(wn)_{(-k-j)}}{(1+n)_{(-1-j)}} = n^{1-k}w^{-k-j} + O(n^{-k}),
\quad  n \to \infty.
\end{equation}
Inserting the right-hand side of this equation
into~\eqref{wignerexample} gives
\begin{equation}
\label{wignerdelayinfseries}
 \mathcal{D}^{(\beta)}_{k,0}(w) =
 \frac{1}{k}\sum_{j=0}^{\infty}\binom{k+j-1}{k-1}\binom{k+j}{k-1}w^{-k-j}.
\end{equation}

The infinite series on the right-hand side of
\eqref{wignerdelayinfseries} is convergent for $|w|>1$. Indeed, since
the ratio of successive terms is a rational function of the summation
index $j$, it is a special case of a hypergeometric series. We find
\begin{equation}
\label{leadingwignerderivation}
\frac{1}{k}\sum_{j=0}^{\infty}\binom{k+j-1}{k-1}\binom{k+j}{k-1}w^{-k-j}
= w^{-k}{}_2F_1(k,k+1;2;w^{-1}),
\end{equation}
where the hypergeometric function ${}_2F_1(a,b;c;z)$ is defined in
Appendix~\ref{jp_hf}, Eq.~\eqref{hypergeomdef}. If $k=1$ this sum
is a geometric series which converges to $1/(w-1)$; if $w=2$ it is the
mean of the Wigner time delay.  When $k>1$ applying Euler's
transformation formula (see, \emph{e.g.}, Ref.~\citenum{AS72}, Eq.
(15.3.3))
\begin{equation*}
{}_2F_1(a,b;c;z) = (1-z)^{c-a-b}{}_2F_1(c-a,c-b;c;z)
\end{equation*}
converts the hypergeometric series in \eqref{leadingwignerderivation}
into the finite sum
\begin{equation*}
\begin{split}
\mathcal{D}_{k,0}^{(\beta)}(w) &=
\frac{w^{k-1}}{(w-1)^{2k-1}}\,{}_2F_1(2-k,1-k;2;w^{-1}) \\
&= \frac{1}{(w-1)^{2k-1}}
\sum_{j=0}^{k-1}
\frac{w^{j}}{(k-1)_{(j-k+1)}(k)_{(j-k+1)}(2)_{(k-j-1)}(k-j-1)!},
\end{split}
\end{equation*}
where we used that $(-x)_{(j)} = 1/(x+1)_{(-j)}$. Finally, the identity
\begin{equation}
 \label{eq:id_bin_poc}
 \frac{1}{(k-1)_{(j-k+1)}(k)_{(j-k+1)}(2)_{(k-j-1)}(k-j-1)!}=
 \frac{1}{k-1}\binom{k-1}{j}\binom{k-1}{j-1}
 \end{equation}
yields the limit~\eqref{wignerdelayleading}.

In order to prove~\eqref{2leadjacobileading} we first set $a=0$ and
$b=(u-1)n$ in \eqref{coefficientexample}:
\begin{equation}
\label{leadingestimatetrans}
\begin{split}
U^{a,b}_{n,k,j} &= \frac{\bigl((u+1)n-2j+k+1\bigr)
 (un)_{(k-j+1)}(n-j+1)_{(j)}(un)_{(k-j+1)}}{((u+1)n-j)_{(k+2)}
((u+1)n-j+1)_{(k)}(n+1)_{(-j)}}\\
& \sim n\frac{u^{2k-2j+2}}{(u+1)^{2k+1}}+O(1), \quad n \to \infty.
\end{split}
\end{equation}
Inserting this expression into (\ref{transmissionexample}) and
relabelling the summation index gives
\begin{equation}
\label{eq:limit_diff}
\lim_{n \to \infty} n^{-1}\Delta M^{(2)}_{\J}(k,n)=
  \frac{1}{k}\sum_{j=1}^{k}\binom{k}{j}\binom{k}{
   j-1}\frac{u^{2j}}{(u+1)^{2k+1}}.
\end{equation}
Finally, Eq.~\eqref{2leadjacobileading} follows by taking the
limit as $n \to \infty$ of the identity
\begin{equation}
 \label{eq:gen_mom}
n^{-1} M^{(2)}_{\J}(k,n) = n^{-1} M^{(2)}_{\J}(1,n) -
n^{-1}\sum_{j=1}^{k-1} \Delta M^{(2)}_{\J}(j,n),
\end{equation}
where $M^{(2)}_{\J}(1,n)$ is Aomoto's integral~\eqref{eq:Aomoto_int}.
\end{proof}

An immediate consequence of this proposition is the following.
\begin{corollary}
\label{co:lead_gen_fs}
  At leading order, the generating
  functions~\eqref{eq:gen_fun_momteig} and~\eqref{eq:gen_fun_momttd}
  are independent of $\beta$ and $\delta$ and are given by

   \begin{subequations}
    \begin{align}
        \mathcal{D}_{0}^{(\beta)}(w,s) &= \frac{w-1-s-\sqrt{(w-1)^{2}
       -2s(w+1)+s^{2}}}{2} \label{eq:gen_fun_w} \\
        \mathcal{T}_0^{(\beta,\delta)}(u,s) &= 
    \frac{1}{2}\left(\sqrt{1+\frac{4us}{(u+1)^2(1-s)}}-1\right)(u+1) \label{eq:lead_gen_fun_te}.
    \end{align}
  \end{subequations}

\end{corollary}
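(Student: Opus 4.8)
The plan is to recognise the closed forms in Proposition~\ref{pro:leading_order} as rescaled Narayana polynomials and then resum the defining power series~\eqref{eq:gen_fun_momteig} and~\eqref{eq:gen_fun_momttd} using the Narayana generating function~\eqref{narayanagenerator}. Throughout I would lean on the functional relation~\eqref{diffscorrespondence} to pass from the differences to the moments themselves.

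For the delay times, I first observe that for $k>1$ the inner sum in~\eqref{wignerdelayleading} equals $(k-1)N_{k-1}(w)$, so that $\mathcal{D}_{k,0}^{(\beta)}(w) = (w-1)^{1-2k}N_{k-1}(w)$; adopting the convention $N_0\equiv 1$ this identity extends to $k=1$ and reproduces $\mathcal{D}_{1,0}^{(\beta)}(w)=1/(w-1)$. Substituting into~\eqref{eq:gen_fun_momttd} and reindexing $m=k-1$ gives
\[
\mathcal{D}_0^{(\beta)}(w,s) = \frac{s}{w-1}\sum_{m=0}^{\infty}N_m(w)\left(\frac{s}{(w-1)^2}\right)^m = \frac{s}{w-1}\left(1 + \rho\!\left(w,\frac{s}{(w-1)^2}\right)\right),
\]
where I have split off the $m=0$ term and invoked~\eqref{narayanagenerator}. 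It then remains to substitute the explicit radical in $\rho$ and simplify: after regrouping by powers of its second argument the radicand becomes $1 - 2(1+w)\sigma + (w-1)^2\sigma^2$, and setting $\sigma = s/(w-1)^2$ and clearing the factor $(w-1)^{-2}$ collapses it to $(w-1)^2 - 2s(w+1) + s^2$, which yields~\eqref{eq:gen_fun_w}.

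For the transmission eigenvalues, I read off from~\eqref{eq:limit_diff} that the difference coefficients are $\Delta\mathcal{T}_{k,0}^{(\beta,\delta)}(u) = N_k(u^2)/(u+1)^{2k+1}$, which I sum directly,
\[
\Delta\mathcal{T}_0^{(\beta,\delta)}(u,s) = \frac{1}{u+1}\sum_{k=1}^{\infty}N_k(u^2)\left(\frac{s}{(u+1)^2}\right)^k = \frac{1}{u+1}\,\rho\!\left(u^2,\frac{s}{(u+1)^2}\right),
\]
again by~\eqref{narayanagenerator}. With $\mathcal{T}_{1,0}^{(\beta,\delta)}(u)=u/(u+1)$ taken from~\eqref{eq:aom_ex} at $p=0$, the functional relation~\eqref{diffscorrespondence} then converts $\Delta\mathcal{T}_0$ into $\mathcal{T}_0^{(\beta,\delta)}(u,s)$; the multiplier $s/(s-1)$ is precisely what introduces the factor $(1-s)$ into the denominator of the radical in~\eqref{eq:lead_gen_fun_te}.

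The main obstacle will be the final algebraic simplification of the nested radicals, particularly in the transmission case, where one must verify that the square root arising from $\rho(u^2,\cdot)$ combines with the rational prefactor and the $s/(s-1)$ multiplier to collapse into the single compact form $\sqrt{1 + 4us/[(u+1)^2(1-s)]}$. To control sign and branch choices, and to guard against errors in this manipulation, I would cross-check by expanding each candidate closed form in powers of $s$ and matching the first coefficients against~\eqref{eq:aom_ex} and~\eqref{eq:limit_diff}; in particular the $s^1$ coefficient of $\mathcal{T}_0$ must return $u/(u+1)$ and that of $\mathcal{D}_0$ must return $1/(w-1)$, which fixes the correct branch of each square root.
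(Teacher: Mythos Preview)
Your proposal is correct and follows essentially the same approach as the paper: both recognise the leading-order coefficients as rescaled Narayana polynomials, resum via~\eqref{narayanagenerator}, and for the transmission case pass through $\Delta\mathcal T_0^{(\beta,\delta)}(u,s)=\frac{1}{u+1}\rho\bigl(u^2,s/(u+1)^2\bigr)$ before applying~\eqref{diffscorrespondence} with~\eqref{eq:aom_ex}. The only cosmetic difference is that the paper splits off the $k=1$ term explicitly rather than invoking the convention $N_0\equiv 1$, and it records the intermediate form~\eqref{transgenleadingdiff} for $\Delta\mathcal T_0$ before the final simplification.
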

\begin{proof}
  By definition and using formula~\eqref{wignerdelayleading} we have
\begin{equation}
\begin{split}
\mathcal{D}^{(\beta)}_{0}(w,s)& = \frac{s}{w-1} +
\sum_{k=2}^{\infty}\frac{N_{k-1}(w)}{(w-1)^{2k-1}}s^k\\
& = \frac{s}{w-1} + \frac{s}{w-1}\rho\bigl((w,s/(w-1)^2\bigr),
\end{split}
\end{equation}
where $N_k(u)$ are the Narayana
polynomials~\eqref{eq:narayanapolynomial} and $\rho(u,s)$ is their
generating function.  Inserting formula~\eqref{narayanagenerator}
for $\rho(u,s)$ gives~\eqref{eq:gen_fun_w}.

We first compute the generating function $\Delta
\mathcal{T}_0^{(\beta,\delta)}(u,s)$. From the explicit expression of
the moments~\eqref{2leadjacobileading} we obtain
  \begin{equation}
    \label{eq:hg-nag}
    \Delta \mathcal{T}_0^{(\beta,\delta)}(u,s) =
    \frac{1}{u+1}\rho\left(u^2,s/(u+1)^2\right),
  \end{equation}
  where, as in the proof of Eq.~\eqref{eq:gen_fun_w}, $\rho(u,s)$
  is the generating function~\eqref{narayanagenerator}. Elementary
  manipulations give
   \begin{equation}
  \label{transgenleadingdiff}
  \Delta \mathcal{T}^{(\beta,\delta)}_0(u,s) = \frac{1}{2s}
  \left(\sqrt{1+\frac{4us}{(u+1)^2(1-s)}}-1\right)(u+1)(s-1)+\frac{u}{u+1}.
\end{equation}
Finally, Eq.~\eqref{eq:lead_gen_fun_te}
follows from~\eqref{diffscorrespondence} and~\eqref{eq:aom_ex}.
\end{proof}

\begin{remark}
 Using Selberg's integral Novaes\cite{Nov10} derived the formula
\begin{equation}
\mathcal{T}^{(\beta,0)}_{k,0}(u) =
\sum_{j=0}^{k-1}\binom{k-1}{j}\binom{2j}{j}
\frac{(-1)^{j}}{j+1}\frac{u^{j+1}}{(u+1)^{2j+1}}. \label{marcelasymptresult}
\end{equation}
Taking the differences of this sum and comparing to
(\ref{eq:limit_diff}) leads to the combinatorial identity
\begin{equation}
\label{combidentity1}
 \frac{1}{k}\sum_{j=0}^{k}\binom{k}{j}\binom{k}{j-1}u^{2j}
=\sum_{j=0}^{k}\binom{k-1}{j-1}\binom{2j}{j}
\frac{(-1)^{j+1}}{j+1}u^{j+1}(u+1)^{2k-2j} .
\end{equation}
This formula has already received attention in enumerative
combinatorics, where it was proved using generating functions by
Coker.\cite{Cok03} Subsequently, Chen \textit{et al.},\cite{CYY08}
who referred to it as Coker's \textit{second} problem, gave a
bijective proof.  Combining Novaes' expression
and~\eqref{2leadjacobileading} gives an independent proof
of~\eqref{combidentity1}.

The identity (\ref{combidentity1}) is intrinsically connected to
averages in the Jacobi ensemble.  Therefore, it seems natural to ask
whether similar identities should appear when integrating over other
ensembles, in particular the Laguerre ensembles. It is a
straightforward consequence of our exact results\cite{MS11a} (see
also Refs.~\citenum{Dum03} and~\citenum{HT03}) that
\begin{equation}
\label{mylaguerreleading}
\lim_{n \to \infty}n^{-k-1}M_{\mathrm{L}_{b}}^{(\beta)}(k,n)
= \frac{1}{k}\sum_{j=1}^{k}\binom{k}{j}\binom{k}{j-1}w^{j}.
\end{equation}
In the same paper Novaes also computed this limit using
Selberg's integral and showed that
\begin{equation}
\label{marcellaguerreleading}
\lim_{n \to \infty}n^{-k-1}M_{\mathrm{L}_{b}}^{(\beta)}(k,n)
= \sum_{j=0}^{\left \lfloor \frac{k-1}{2} \right \rfloor}
\frac{1}{j+1}\binom{2j}{j}\binom{k-1}{2j}w^{j+1}(1+w)^{k-2j-1}.
\end{equation}
The equivalence between the sums (\ref{mylaguerreleading}) and
(\ref{marcellaguerreleading}) appeared in Ref.~\citenum{CYY08} too,
where it was referred to as Coker's \textit{first} problem.

It is very interesting and surprising that these combinatorial
identities can be proved solely using techniques from RMT.
\end{remark}
\begin{remark}
  When $w=2$ the generating function~\eqref{eq:gen_fun_w} reduces to
  \begin{equation}
    \label{leadingwigdelay}
     \mathcal{D}^{(\beta)}_0(2,s) = \frac{1-s-\sqrt{1-6s+s^{2}}}{2}, 
 \end{equation}
 which was computed in Ref.~\citenum{BK10} using both periodic orbit
 theory and RMT.
\end{remark}
\begin{remark}
  The leading order generating function~\eqref{eq:lead_gen_fun_te}
  agrees with the semiclassical formula computed by Berkolaiko
  \textit{et al.}\cite{BHN08} Their notation is slightly different
  from ours.  Their main asymptotic parameter is the total number of
  quantum channels $N = n + m = (u+1)n$; they also introduce the
  variable $\xi = u/(u+1)^2$.  From our point of view, it is more
  convenient to use the dimension of the ensemble as asymptotic
  variable.
\end{remark}
\begin{remark}
 Instead of deriving exact formulae for the moments and extracting
 their asymptotics, one could instead study the asymptotic expansion
 for the \textit{mean eigenvalue density}
\begin{equation}
\label{meanevdensity}
\rho_{n}(x) := \left\langle \sum_{j=1}^{n}\delta(x-x_{j})\right \rangle.
\end{equation}
The moments are by definition $\int_{}x^{k}\rho_{n}(x)dx$.   For
example, if the average is taken with respect to the \textit{j.p.d.f}
of the Laguerre ensemble, then the limit
\begin{equation}
\label{marcenkopastur}
\lim_{n \to \infty}n^{-1}\rho_{n}(x)
= \frac{\sqrt{(x-\nu_{-})(\nu_{+}-x)}}{2\pi x}
\end{equation}
is the Mar\v{c}enko-Pastur law.\cite{MP67}  The support
of~\eqref{marcenkopastur} is $\nu_{\pm} = (\sqrt{w} \pm 1)^{2}$, where
$w$ is defined as in \eqref{eq:lag_par}.
This fact was exploited to obtain the leading contribution to the
density of the delay times.\cite{BFB97} The positive moments of this
density are known too,\cite{OP97,HT03} from which the corresponding
negative moments may be obtained from a substitution $x \to x^{-1}$ in
(\ref{marcenkopastur}). They are easily seen to be in agreement with
formula (\ref{wignerdelayleading}).
\end{remark}

\subsection{Beyond Leading Order}
\label{sse:unitary_corrections}
We now compute the next two terms in the asymptotic expansion of the
moments of the transmission eigenvalues and proper delay times when
$\beta=2$.

The starting point to compute higher order corrections to
formulae~\eqref{wignerdelayleading} and~\eqref{2leadjacobileading} is
to expand the $n$-dependent factor
\begin{equation}
 \label{eq:wig_time_del_fac}
 \frac{(b +n)_{(-k-j)}}{(1 + n)_{(-1-j)}}
\end{equation}
for the moments of the proper delay times and the
coefficients~\eqref{coefficientexample} for those of the transmission
eigenvalues.  For this purpose we need the first three terms in the
expansion~\eqref{fullgammaratio} (see, \textit{e.g.},
Ref.~\citenum{AS72}, Eq. (6.1.47)):
\begin{equation}
\label{subleadingpochhammer}
\begin{split}
\frac{\Gamma(z+\alpha)}{\Gamma(z+\beta)} & \sim
z^{\alpha-\beta}\left(1+\frac{1}{2z}(\alpha-\beta)(\alpha +\beta-1)
\right. \\
&\quad +  \left.\frac{1}{12z^{2}}\binom{\alpha - \beta}{2}
\bigl(3(\alpha+\beta-1)^{2}-(\alpha-\beta+1)\bigr)+O(z^{-3})\right),
\quad z \to \infty.
\end{split}
\end{equation}
The coefficients of $1/z$ and $1/z^2$ will give the terms of order
$O(n^{-1})$ and $O(n^{-2})$ respectively in the asymptotic
expansions~\eqref{eq:asy_ex_tra} and~\eqref{eq:asy_ex_td}.  Such terms
will be polynomials in $u$ (or $w$), whose coefficients will be the
Narayana numbers~\eqref{eq:narayana_numbers} or will closely resemble
them.

In the Wigner-Dyson symmetry classes there is no contribution at the
next to leading order when $\beta=2$. However, in Andreev quantum dots
($\delta \neq 0$)there is always a non-zero correction.

\begin{proposition}
\label{unitarytermsvanishing}
Let $\beta=2$. The next to leading order terms in the asymptotic
expansions~\eqref{eq:asy_ex_tra} and~\eqref{eq:asy_ex_td} are 
\begin{subequations}
\begin{align}
\label{neglagzero}
\mathcal{D}_{k,1}^{(2)}(w) &= 0, \\
\label{transfirstunitarycorr}
\Delta \mathcal{T}^{(2,\delta)}_{k,1}(u)&=
\frac{\delta}{2(u-1)}\left(\frac{u-1}{u+1}\right)^{2k+2}
\mathcal{T}^{(\beta,\delta)}_{k+1,0}(-u),
\end{align}
\end{subequations}
where $\mathcal{T}^{(\beta,\delta)}_{k,0}(u)$ is independent of
$\beta$ and $\delta$ and is given explicitly
in~\eqref{2leadjacobileading}.
\end{proposition}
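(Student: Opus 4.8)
The plan is to insert the exact finite-$n$ formulae \eqref{wignerexample} and \eqref{transmissionexample} into the three-term Gamma-ratio expansion \eqref{subleadingpochhammer} and to read off the coefficient of $n^{0}$ once the known leading behaviour has been divided out. For the delay times I would start from \eqref{wignerexample}, using that $\beta=2$ forces $b+n=nw$ in \eqref{eq:lag_par}. The $n$-dependent factor then factorises as
\begin{equation*}
\frac{(nw)_{(-k-j)}}{(1+n)_{(-1-j)}}
=\frac{\Gamma(nw-k-j)}{\Gamma(nw)}\cdot\frac{\Gamma(n+1)}{\Gamma(n-j)},
\end{equation*}
and applying \eqref{subleadingpochhammer} to each piece ($z=nw$ with difference of shifts $-k-j$ for the first, $z=n$ with difference of shifts $1+j$ for the second) reproduces the leading term \eqref{leadingestimatewig} together with a relative $O(1/n)$ correction of coefficient $\tfrac{1}{2w}(k+j)(k+j+1)-\tfrac12 j(j+1)$.

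Inserting this correction into \eqref{wignerexample} presents $\mathcal{D}^{(2)}_{k,1}(w)$ as a single series in $w^{-1}$, which I expect to vanish by a term-by-term cancellation rather than by any special-function identity. The part carrying the extra factor $w^{-1}$ can be matched against the remaining part after the shift $j\mapsto j-1$, because both sides of
\begin{equation*}
\binom{k+j-1}{k-1}\binom{k+j-2}{k-1}(k+j-1)(k+j)
=\binom{k+j}{k-1}\binom{k+j-1}{k-1}\,j(j+1)
\end{equation*}
reduce to $\dfrac{(k+j)!\,(k+j-1)!}{[(k-1)!]^{2}\,j!\,(j-1)!}$ once the binomials are written out; this yields \eqref{neglagzero}.

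For the transmission eigenvalues I would substitute $a=\delta/2$ and $b=(u-1)n$ into \eqref{coefficientexample} and expand every Pochhammer factor through \eqref{subleadingpochhammer}. The decisive structural observation is that for each Gamma-ratio factor produced here the difference of its two parameter shifts is independent of $a$, whereas their sum is linear in $a$; since the coefficient of $1/z$ in \eqref{subleadingpochhammer} is proportional to the product of this difference and this sum, it is itself linear in $a$. Hence the leading term $u^{(0)}_{j}=u^{2k-2j+2}/(u+1)^{2k+1}$ of \eqref{leadingestimatetrans} is $a$-independent (re-establishing the $\delta$-independence of the leading order), while the $O(1)$ correction is linear in $a=\delta/2$. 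Writing this correction as $u^{(0)}_{j}(P_{j}+aQ_{j})$ and summing against the Narayana numbers \eqref{eq:narayana_numbers}, the $a$-independent contribution $\tfrac1k\sum_{j}\binom{k}{j}\binom{k}{j-1}u^{(0)}_{j}P_{j}$ must vanish — this is the $\delta=0$ Wigner--Dyson case — and I would check its vanishing by the same shift-and-cancel device as above, leaving only the piece linear in $\delta$.

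The main obstacle is the final identification: showing that the surviving sum $\tfrac{\delta}{2k}\sum_{j}\binom{k}{j}\binom{k}{j-1}u^{(0)}_{j}Q_{j}$ collapses to the closed form \eqref{transfirstunitarycorr}, in which the leading moment $\mathcal{T}^{(\beta,\delta)}_{k+1,0}$ appears evaluated at the \emph{negative} argument $-u$. The occurrence of $-u$ is the delicate point: the explicit polynomial $Q_{j}$ must conspire, through the relation $(-x)_{(j)}=1/(x+1)_{(-j)}$ and the Narayana structure of $u^{(0)}_{j}$, to reconstruct the expression \eqref{2leadjacobileading} with $u\mapsto-u$ up to the prefactor $\tfrac{1}{2(u-1)}\bigl((u-1)/(u+1)\bigr)^{2k+2}$. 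I would tackle this either by resumming through the Narayana generating function $\rho(u,s)$ of \eqref{narayanagenerator} and comparing the two sides as power series in $s$, or by a direct telescoping of the $j$-sum after separating the parts of $Q_{j}$ that shift the summation index; the generating-function route seems the more likely to expose the $u\mapsto-u$ symmetry cleanly.
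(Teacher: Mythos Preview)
Your treatment of the delay-time part \eqref{neglagzero} is correct and essentially identical to the paper's: extract the $O(n^{-k})$ correction, split into two pieces, and cancel after the shift $j\mapsto j-1$.

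For the transmission-eigenvalue part your setup is right --- the expansion is linear in $a=\delta/2$, the $\delta$-independent piece vanishes by elementary shifts, and what survives is a single $\delta$-linear sum. The gap is at the final identification, which you flag as the ``main obstacle'' but do not resolve. The paper does not use your proposed generating-function resummation or telescoping. Instead it proceeds in two concrete steps. First, the surviving $\delta$-linear sum is simplified (again by elementary shifts on the Narayana summand) to the compact ``interlaced'' form
\[
\Delta\mathcal{T}^{(2,\delta)}_{k,1}(u)
=-\frac{\delta}{2(u+1)^{2k+2}}\sum_{j=0}^{2k}\binom{k}{\lfloor j/2\rfloor}\binom{k}{\lceil j/2\rceil}(-u)^{j+1}.
\]
Second --- and this is the step your plan lacks --- the paper invokes an external combinatorial identity (Ref.~\citenum{CDLV10}, Proposition~V7), namely \eqref{eq:vivocombidentity}, which equates the floor/ceiling sum with Novaes' alternative expression \eqref{marcelasymptresult} for the leading moment. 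That identity is precisely what produces the argument $-u$ and the prefactor $\tfrac{1}{2(u-1)}\bigl((u-1)/(u+1)\bigr)^{2k+2}$ in one stroke. Your generating-function route may well work, but the paper's argument is shorter because it outsources the hard combinatorics to a known result; if you want a self-contained proof you would need to reprove \eqref{eq:vivocombidentity} along the way.
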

\begin{proof}
  For the negative Laguerre moments, we set $b=(w-1)n$ and insert the
  term of order $O(n^{-k})$ in equation \eqref{leadingestimatewig}
  into \eqref{wignerexample} leading to
\begin{align}
\label{easysimpwig}
\mathcal{D}_{k,1}^{(2)}(w) &= \frac{1}{2k}\sum_{j=0}^{\infty}
\binom{k+j-1}{k-1}\binom{k+j}{k-1}w^{-k-j-1}\bigl((k+j)(k+j+1)-wj(1+j)
\bigr) \\
\label{easysimpwigl}
&=\frac{k}{2}\sum_{j=0}^{\infty}\binom{k+j}{k}\binom{k+j+1}{k}w^{-k-j-1}
\tag{61'}\\
 & \quad
 -\frac{k}{2}\sum_{j=0}^{\infty}\binom{k+j-1}{k}\binom{k+j}{k}w^{-k-j}=0 \notag 
\end{align}
Equation~\eqref{neglagzero} follows from shifting the summation index
$j \to j-1$ in the sum~\eqref{easysimpwigl}.

We now prove~\eqref{transfirstunitarycorr}. We first use
\eqref{subleadingpochhammer} to expand \eqref{transmissionexample} in
an asymptotic series and obtain 
\begin{equation}
\label{easysimptra}
\Delta \mathcal{T}^{(2,\delta)}_{k,1}(u) = \frac{1}{k}\sum_{j=0}^{k}
\binom{k}{j}\binom{k}{j-1}
\frac{u^{2k-2j+1}}{(u+1)^{2k+2}}F^{(2,\delta)}_{k,1}(u,j)
\end{equation}
where 
\[
F^{(2,\delta)}_{k,1}(u,j) = 
j(\delta/2+1-j)u^2-uk\delta/2+(k-j+1)(k-j+\delta/2).
\]
Collecting the coefficients of $\delta$ in~\eqref{easysimptra} gives
\begin{equation}
\label{eq:delta_coef}
\begin{split}
  \Delta \mathcal{T}^{(2,\delta)}_{k,1}(u) &=  
\frac{\delta}{2k}\sum_{j=0}^{k}\binom{k}{j}\binom{k}{j-1}
\frac{u^{2k-2j+1}}{(u+1)^{2k+2}}\bigl((k-j+1)-uk+u^{2}j\bigr) \\
  &\quad +\frac{1}{k}\sum_{j=0}^{k}\binom{k}{j}\binom{k}{j-1}
 \frac{u^{2k-2j+1}}{(u+1)^{2k+2}}\bigl((k-j+1)(k-j)+u^{2}j(1-j)\bigr).
\end{split}
\end{equation}
Elementary manipulations show that the second sum in the right-hand
side of \eqref{eq:delta_coef}
vanishes, while the first simplifies to 
\begin{equation}
\label{floorceilingpoly}
\begin{split}
\Delta \mathcal{T}^{(2,\delta)}_{k,1}(u) &= 
\frac{\delta}{2(u+1)^{2k+2}}\left(\sum_{j=0}^{k}
\binom{k}{j}^{2}u^{2j+1}-\sum_{j=0}^{k}\binom{k}{j}\binom{k}{j+1}
u^{2j+2}\right)  \\
&=-\frac{\delta}{2(u+1)^{2k+2}}\sum_{j=0}^{2k}\binom{k}{\lfloor j/2 \rfloor}
\binom{k}{\lceil j/2 \rceil}(-u)^{j+1}. 
\end{split}
\end{equation}

Finally, the statement of the proposition follows from the
combinatorial identity (see Ref.~\citenum{CDLV10}, Proposition V7)
\begin{equation}
\label{eq:vivocombidentity}
\frac{u}{(u+1)^{2k+1}}\sum_{j=0}^{2k}
\binom{k}{\lfloor j/2 \rfloor}\binom{k}{\lceil j/2 \rceil}u^{j} 
= \sum_{j=0}^{k}\binom{k}{j}\binom{2j}{j}
\frac{(-1)^{j}}{j+1}\frac{u^{j+1}}{(u+1)^{2j+1}},
\end{equation}
which allows the polynomial in (\ref{floorceilingpoly}) to be
expressed in terms of the leading order term
$\mathcal{T}^{(\beta,\delta)}_{k,0}(u)$ given by
(\ref{marcelasymptresult}).
\end{proof}
\begin{remark}
  \label{re:de_com}
  At first it is quite surprising that the next to leading order
  term~\eqref{transfirstunitarycorr} is different from zero.  The fact
  that the correction at order $O(1/n)$ of the density of the
  eigenvalues of $\beta=2$ ensembles is usually zero is a general
  phenomenon due to the symmetry of the ensembles.  It was proved by
  Johansson\cite{Joh98} for ensembles of random matrices with an even 
  degree polynomial potential. It also appeared in the work of Dumitriu 
  and Edelman\cite{DE06} for the $\beta$-Hermite and
  $\beta$-Laguerre ensembles.  Dumitriu and Edelman\cite{DE06} showed
  that for these ensembles, the term of order $O(1/n^{p})$ is
  multiplied by a \textit{palindromic} polynomial in $-2/\beta$ that
  is independent of $n$. At the next to leading order this polynomial
  is zero at $\beta=2$.  For the Jacobi ensembles
  this property is still unproved, but given that it is a direct
  consequence of the symmetries of the Jack polynomials, one would
  expect the same behaviour.

  Why, then, is formula~\eqref{transfirstunitarycorr} not zero?  The
  answer is that for physical reasons the exponent of the factors
  $(1-x_j)$ in the \textit{j.p.d.f.} of the eigenvalues of the Jacobi
  ensembles \textit{does not scale} with $n$ (see
  Eq.~\eqref{eq:jacobi_parameters}).  If we adopt the
  scaling~\eqref{eq:ab_selberg}, then the next to leading order term
  is zero for $\beta=2$.  Indeed, this is the content of
  Proposition~\ref{pr:ntlsi} for the Selberg-like integrals.
\end{remark}

\begin{remark} For higher order terms, it is no longer possible to
  simplify the expressions which would appear in place of
  Eqs.~\eqref{easysimpwig} and~\eqref{easysimptra} through elementary
  manipulation. Instead, one requires a more systematic procedure to
  handle more complex polynomials, which become increasingly involved
  as we go further in the asymptotic expansion. However, as we shall
  see, often the coefficients in such polynomials are products of two
  binomial coefficients similar to those appearing in the first line
  of~\eqref{floorceilingpoly}. Thus, in many instances they can be
  conveniently expressed in terms of Jacobi polynomials (see
  Definition~\ref{jacdef}), whose properties can be used to obtain
  manageable formulae.  A first example of this procedure is given in
  the proof of Proposition~\ref{delaysubleadingtwoprop}.
\end{remark}

The next to leading order generating functions will be discussed in
Sec.~\ref{se:orth_sym}, where we shall derive a formula for general
$\delta$ with $\beta \in \{1,2,4\}$.

\begin{proposition}
\label{delaysubleadingtwoprop}
For $\beta=2$ the third coefficient in the
expansion~\eqref{eq:asy_ex_td} of the moments of the proper delay times is
\begin{equation}
\label{delaysubleadingtwo}
\mathcal{D}_{k,2}^{(2)}(w) = \frac{(k+1)(k+2)w}{12(w-1)^{k+3}}
P^{(2,2)}_{k-2}(\tilde{w}),
\end{equation}
where $P^{(\alpha,\beta)}_j(x)$ refers to the Jacobi polynomial of
degree $j$ and parameters $\alpha$ and $\beta$ defined in
Eq.~\eqref{jacdef} and
\begin{equation}
  \label{eq:tildew}
  \tilde{w} := \frac{w +1}{w-1}.
\end{equation}
\end{proposition}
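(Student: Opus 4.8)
The plan is to start from the exact finite-$n$ formula~\eqref{wignerexample} and carry its asymptotic expansion one order beyond Propositions~\ref{pro:leading_order} and~\ref{unitarytermsvanishing}. Setting $b=(w-1)n$, so that $b+n=wn$, I would write the $n$-dependent factor~\eqref{eq:wig_time_del_fac} as a product of two Gamma-function ratios,
\[
\frac{(b+n)_{(-k-j)}}{(1+n)_{(-1-j)}} = \frac{\Gamma(wn-k-j)}{\Gamma(wn)}\cdot\frac{\Gamma(n+1)}{\Gamma(n-j)},
\]
and apply the three-term expansion~\eqref{subleadingpochhammer} to each ratio separately: the first with $z=wn$ and $(\alpha,\beta)=(-k-j,0)$, the second with $z=n$ and $(\alpha,\beta)=(1,-j)$. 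Multiplying the two resulting series and collecting powers of $n$ gives
\[
\frac{(b+n)_{(-k-j)}}{(1+n)_{(-1-j)}} = w^{-k-j}n^{1-k}\left(1+\frac{A_1(k,j,w)}{n}+\frac{A_2(k,j,w)}{n^2}+O(n^{-3})\right),
\]
where $A_2$ is an explicit polynomial in $j$ with coefficients rational in $w$ (carrying the powers $w^0,w^{-1},w^{-2}$ that arise from pairing the $1/z$- and $1/z^2$-terms of the two ratios). The $A_1$ contribution must return $\mathcal{D}^{(2)}_{k,1}(w)=0$ of~\eqref{neglagzero}, which I would use as a consistency check on the bookkeeping.

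Next I would substitute this expansion into~\eqref{wignerexample} and read off the coefficient of $n^{-2}$ in $n^{k-1}M^{(2)}_{\Lag}(-k,n)$, in exactly the manner of the passage from~\eqref{leadingestimatewig} to~\eqref{wignerdelayinfseries}. As there, the upper limit $n-1$ of the sum may be replaced by $\infty$, since the discarded tail contributes only at higher order, yielding
\[
\mathcal{D}^{(2)}_{k,2}(w) = \frac{1}{k}\sum_{j=0}^{\infty}\binom{k+j-1}{k-1}\binom{k+j}{k-1}w^{-k-j}A_2(k,j,w).
\]
This is a finite combination of series of the hypergeometric type evaluated in~\eqref{leadingwignerderivation}, each weighted by a polynomial in the summation index $j$ and by a fixed power $w^{-1}$ or $w^{-2}$.

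The evaluation of this sum is the core of the argument. Each polynomial-in-$j$ weight can be produced from the basic series $\sum_{j}\binom{k+j-1}{k-1}\binom{k+j}{k-1}w^{-k-j}=k(w-1)^{1-2k}N_{k-1}(w)$ of~\eqref{wignerdelayinfseries}--\eqref{wignerdelayleading} by applying powers of the Euler operator $w\,\frac{d}{dw}$, which acts on $w^{-k-j}$ by multiplication by $-(k+j)$; equivalently, each piece can be resummed by Euler's transformation, as at leading order, converting the nonterminating ${}_2F_1$ into a terminating one, hence a polynomial in $w$. I then expect the three resulting polynomials to combine with heavy cancellation into a single polynomial of degree $k-2$ whose coefficients are products $\binom{k}{s}\binom{k}{k-2-s}$. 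Using the identities $(\tilde{w}-1)/2=1/(w-1)$ and $(\tilde{w}+1)/2=w/(w-1)$ with $\tilde{w}$ as in~\eqref{eq:tildew}, together with the explicit expansion~\eqref{jacdef}, this polynomial is recognised as $(w-1)^{-(k-2)}P^{(2,2)}_{k-2}(\tilde{w})$, and assembling it with the prefactor $(k+1)(k+2)w/\bigl(12(w-1)^{k+3}\bigr)$ gives~\eqref{delaysubleadingtwo}.

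The main obstacle is precisely this last combination-and-recognition step: the three contributions coming from $A_2$, individually not polynomial in $w$, must conspire to collapse to a single degree-$(k-2)$ polynomial carrying the clean binomial coefficients of a Jacobi polynomial, and pinning down the overall constant $(k+1)(k+2)/12$ and the correct power of $(w-1)$ demands careful tracking of contiguous-relation and Euler-transformation algebra rather than any new idea. A promising way to control it is to first reorganise $A_2$ so that its $j$-dependence is packaged into the shifted binomials $\binom{k+j}{k}\binom{k+j+1}{k}$ that already forced the telescoping cancellation in~\eqref{easysimpwigl}; once the sum is brought to that manifestly Jacobi-type form, the identification with $P^{(2,2)}_{k-2}(\tilde{w})$ through its hypergeometric representation~\eqref{jacdef} becomes routine.
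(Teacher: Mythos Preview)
Your proposal is correct and follows essentially the same route as the paper: expand the Pochhammer ratio via~\eqref{subleadingpochhammer}, extract the $O(n^{-k-1})$ piece, substitute into~\eqref{wignerexample} with the tail sent to infinity, and resum each power of $w$ in the resulting weight as a terminating ${}_2F_1$/Jacobi polynomial through the identity~\eqref{wignerhypergeom}. The one place where the paper is more explicit than your sketch is the ``combination-and-recognition'' step you flag as the main obstacle: rather than hoping for a direct collapse, it lands on a specific linear combination of $P^{(0,2)}_k$, $P^{(2,0)}_k$, $P^{(1,1)}_k$, $P^{(1,2)}_{k-1}$, $P^{(2,1)}_{k-1}$ and then proves the reduction to $P^{(2,2)}_{k-2}$ systematically by rewriting everything in the $P^{(2,2)}_j$ basis via the connection coefficients~\eqref{connection1} and the three-term recurrence~\eqref{threetermrec}.
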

\begin{proof}
  As in previous proofs, we begin by inserting
  \eqref{subleadingpochhammer} into~\eqref{eq:wig_time_del_fac} and
  extract the component of order $O(n^{-k-1})$.  Then, by
  substituting it into~\eqref{wignerexample} we arrive at
\begin{equation}
\label{p2dksubsub}
\mathcal{D}_{k,2}^{(2)}(w) =\frac{1}{k}
\sum_{j=0}^{\infty}\binom{k+j}{k-1}
\binom{k+j-1}{ k-1}w^{-k-j-2}G_{k,2}^{(2)}(w,j),
\end{equation}
where
\begin{equation*}
\begin{split}
G^{(2)}_{k,2}(w,j) &= \frac{w^{2}}{24}j(j-1)(j+1)(3j+2)
-\frac{w}{4}(k+j)(k+j+1)j(1+j) \\
& \quad +\frac{1}{24}(k+j)(k+j+1)(k+j+2)(3k+3j+1).
\end{split}
\end{equation*}

The demonstration now proceeds with lengthy and systematic, although
elementary, algebraic manipulations.  It is useful to give an example
that clarifies the pattern of the calculations. Consider the
contribution to \eqref{p2dksubsub} coming from the linear term 
in $G_{k,2}^{(2)}(w,j)$:
\begin{multline}
\label{polyexample2k_1}
-\frac{1}{4k}\sum_{j=0}^{\infty}\binom{k+j}{k-1}\binom{k+j-1}{k-1}
w^{-k-j-1}(k+j)(k+j+1)j(1+j)  \\
=-\frac{k(1+k)^{2}}{4}\sum_{j=0}^{\infty}\binom{k+j}{k+1}
\binom{k+j+1}{k+1}w^{-k-j-1}.
\end{multline}
Using the definition of hypergeometric function~\eqref{hypergeomdef}
and the relation~\eqref{wignerhypergeom} between hypergeometric
functions and Jacobi polynomials, the right-hand side
of~\eqref{polyexample2k_1} becomes
\begin{equation}
\label{polyexampled2k}
-\frac{k(k+1)^{2}(k+2)}{4}w^{-k-2} \left({}_{2}F_{1}(k+2,k+3,2,w^{-1})\right)
=-\frac{k(k+1)(k+2)wP^{(1,1)}_{k}(\tilde{w})}{4(w-1)^{k+3}}. 
\end{equation}

Repeating the same procedure for the other powers of $w$ in
$G_{k,2}^{(2)}(w,j)$ eventually gives
\begin{equation}
\label{jacobicombination}
\begin{split}
\frac{24(w-1)^{k+3}}{w(k+1)(k+2)}\mathcal{D}_{k,2}^{(2)}(w) &= 3k
\bigl(P^{(0,2)}_{k}(\tilde{w})
+P^{(2,0)}_{k}(\tilde{w})\bigr)-6kP^{(1,1)}_{k}(\tilde{w})  \\
&\quad +(3k-2)(P^{(1,2)}_{k-1}\bigl(\tilde{w})-P^{(2,1)}
_{k-1}(\tilde{w})\bigr).
\end{split}
\end{equation}
Let us subtract the right-hand side of
Eq.~\eqref{delaysubleadingtwo} from \eqref{jacobicombination};
then, the statement of the proposition is equivalent to the identity
\begin{multline}
\label{showiszero}
3k\bigl(P^{(0,2)}_{k}(\tilde{w})+P^{(2,0)}_{k}(\tilde{w})\bigr)
-6kP^{(1,1)}_{k}(\tilde{w}) \\
+(3k-2)\bigl(P^{(1,2)}_{k-1}(\tilde{w})
-P^{(2,1)}_{k-1}(\tilde{w})\bigr)-2P^{(2,2)}_{k-2}(\tilde{w})=0.
\end{multline}
The proof follows from the orthogonality of the Jacobi polynomials and
the following general procedure:
\begin{enumerate}
\item write each Jacobi polynomial in \eqref{showiszero} in terms of
  the polynomials $P^{(2,2)}_{j}(\tilde{w})$ using the connection
  formulae~\eqref{connection1};
\item apply the three-term recurrence relation \eqref{threetermrec}
  to each Jacobi polynomial until only the polynomials
  $P^{(2,2)}_{k}(\tilde{w})$ and $P^{(2,2)}_{k-1}(\tilde{w})$ appear in
  the formula;
\item the resulting expression is of the form
\[
  A_{k}(w)P^{(2,2)}_{k}(\tilde{w})+B_{k}(w)P^{(2,2)}_{k-1}(\tilde{w}),
\]
  for some rational functions $A_{k}(w)$ and $B_{k}(w)$;
\item a direct computation shows
\[
A_{k}(w) = B_{k}(w) =  0.
\]
\end{enumerate}
\end{proof}

\begin{remark}
\label{linindepremark}
Note that step (1) of the above proof amounts to writing each Jacobi
polynomial in a consistent linearly independent basis. By
formula~\eqref{connection1} the number of non-zero coefficients in
this basis is typically quite small and independent of $k$. This
property is essential for our procedure to work. Indeed, consider the
identities
\begin{align*}
P^{(0,2)}_{k}(\tilde{w})+P^{(2,0)}_{k}(\tilde{w}) &= 
\frac{(k+3)(k+4)P^{(2,2)}_{k}(\tilde{w})}{(2k+3)(k+2)}
+\frac{(k+2)P^{(2,2)}_{k-2}(\tilde{w})}{2k+3}\\
P^{(1,2)}_{k-1}(\tilde{w})-P^{(2,1)}_{k-1}(\tilde{w}) &= 
-P^{(2,2)}_{k-2}(\tilde{w})\\
P^{(1,1)}_{k}(\tilde{w}) &= \frac{1}{2}
\left(\frac{(k+3)(k+4)P^{(2,2)}_{k}(\tilde{w})}{(2k+3)(k+2)}
-\frac{(k+1)P^{(2,2)}_{k-2}(\tilde{w})}{2k+3}\right),
\end{align*}
which are a direct consequence of~\eqref{connection1}.  When 
substituted into (\ref{showiszero}), they immediately yield the
desired cancellations. 

In some of the proofs of this paper (more specifically in Lemma
\ref{unitarygenfntrans} and Propositions \ref{unitarysubsubbeta2prop},
\ref{delaysubsub1prop}, \ref{transsubsub1prop} and \ref{pr:ntlsi}) we
will use this method to justify the equivalence of two apparently
different combinations of Jacobi polynomials. This task is most
conveniently performed using a computer algebra package such as Maple
or Mathematica, whereby the rather heavy algebra involved in steps
(1)--(4) may be executed algorithmically.
\end{remark}

In order to keep the notation as simple as possible, we now introduce
a convention that we shall often use when proving statements about
generating functions.  If $p(x)$ is a polynomial or a function
analytic in a neighbourhood of $x=0$, then we shall write
$\bigl[x^k\bigr]p(x)$ to indicate the Taylor coefficient of $x^k$ in
$p(x)$.  This formalism can be trivially extended to a function of
several variables.  Now, when in a proof we write $\bigl[s^k\bigr]
\mathcal{T}_{p}^{(\beta,\delta)}(u,s)$ or
$\bigl[s^k\bigr]\mathcal{D}^{(\beta)}_p(w,s)$, we do not refer to
$\mathcal{T}^{(\beta,\delta)}_{k,p}(u)$ or
$\mathcal{D}^{(\beta)}_{k,p}(w)$, but to the Taylor coefficients of a
function, usually an algebraic function, whose explicit expression is
the statement we intend to prove, like, for example, the right-hand
sides of Eqs.~\eqref{eq:gen_fun_w} and~\eqref{eq:lead_gen_fun_te}
in Corollary~\ref{co:lead_gen_fs}.  In other words, adopting this
slight abuse of notation, Corollary~\ref{co:lead_gen_fs} is equivalent
to
\begin{equation*}
  \bigl[s^k\bigr]\mathcal{D}^{(\beta)}_0(w,s) =
  \mathcal{D}^{(\beta)}_{k,0}(w) \quad \text{and} \quad
  \bigl[s^k\bigr]\mathcal{T}_0^{(\beta,\delta)} (u,s) =
  \mathcal{T}^{(\beta,\delta)}_{k,0}(u).
\end{equation*}

We now have
\begin{corollary}
\label{lemma1wiggen}
The generating function of the moments \eqref{delaysubleadingtwo} is given by
\begin{equation}
\label{beta2p2delaylemma}
\mathcal D_{2}^{(2)}(w,s) = \frac{ws^{2}}%
{\bigl(s^{2}-2(w+1)s+(w-1)^{2}\bigr)^{5/2}}.
\end{equation}
\end{corollary}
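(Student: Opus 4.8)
The plan is to feed the closed form for $\mathcal{D}_{k,2}^{(2)}(w)$ from Proposition~\ref{delaysubleadingtwoprop} into the definition~\eqref{eq:gen_fun_momttd} of the generating function and to recognise the resulting sum as a \emph{Gegenbauer} generating function. First I would substitute~\eqref{delaysubleadingtwo} into $\mathcal{D}_2^{(2)}(w,s)=\sum_{k\ge 1}\mathcal{D}_{k,2}^{(2)}(w)s^k$; since $P^{(2,2)}_{k-2}(\tilde{w})$ has negative degree and hence vanishes for $k=1$, the series effectively starts at $k=2$, consistently with the fact that the claimed answer~\eqref{beta2p2delaylemma} is $O(s^2)$. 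Re-indexing by $n=k-2$ then turns the series into
\[
\mathcal{D}_2^{(2)}(w,s)=\frac{w s^2}{(w-1)^5}\sum_{n=0}^{\infty}\frac{(n+3)(n+4)}{12}\,P_n^{(2,2)}(\tilde{w})\left(\frac{s}{w-1}\right)^{n}.
\]

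The key observation is that the coefficient $\tfrac{1}{12}(n+3)(n+4)P_n^{(2,2)}(\tilde{w})$ is precisely a Gegenbauer polynomial. Using the standard connection $C_n^{(\lambda)}(x)=\frac{(2\lambda)_{(n)}}{(\lambda+1/2)_{(n)}}P_n^{(\lambda-1/2,\lambda-1/2)}(x)$ with $\lambda=5/2$, the parameters $(\lambda-1/2,\lambda-1/2)$ become $(2,2)$ and $\frac{(5)_{(n)}}{(3)_{(n)}}=\frac{(n+3)(n+4)}{12}$, so that $\frac{(n+3)(n+4)}{12}P_n^{(2,2)}(x)=C_n^{(5/2)}(x)$. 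The classical Gegenbauer generating function $\sum_{n\ge0}C_n^{(\lambda)}(x)t^n=(1-2xt+t^2)^{-\lambda}$, specialised to $\lambda=5/2$, $x=\tilde{w}$ and $t=s/(w-1)$, then sums the series in closed form as $(1-2\tilde{w}t+t^2)^{-5/2}$.

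The final step is purely algebraic: with $\tilde{w}=(w+1)/(w-1)$ as in~\eqref{eq:tildew} one computes $1-2\tilde{w}\tfrac{s}{w-1}+\tfrac{s^2}{(w-1)^2}=\bigl(s^2-2(w+1)s+(w-1)^2\bigr)/(w-1)^2$, so that raising to the power $-5/2$ produces a factor $(w-1)^5$ which cancels the $(w-1)^{-5}$ in front, leaving exactly~\eqref{beta2p2delaylemma}. Reassuringly, the quadratic under the root is the very same one that already appeared at leading order in~\eqref{eq:gen_fun_w}.

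Once the Gegenbauer identity is in place the computation is essentially bookkeeping, so I do not expect a serious obstacle; the only points that require care are verifying the normalisation $\frac{(5)_{(n)}}{(3)_{(n)}}=\frac{(n+3)(n+4)}{12}$ in the Jacobi-to-Gegenbauer conversion, and fixing the branch of $(1-2\tilde{w}t+t^2)^{-5/2}$ — namely the one equal to $1$ at $s=0$, which makes the series analytic near the origin as required by the convention stated after~\eqref{genfns}. If one wished to avoid invoking Gegenbauer polynomials altogether, an equivalent route is to use the Jacobi generating function $\sum_{n\ge0} P_n^{(2,2)}(x)t^n = 4\,R^{-1}(1+R-xt)^{-2}$ with $R=(1-2xt+t^2)^{1/2}$ and to realise the prefactor $(n+3)(n+4)$ as the operator $t^{-2}\frac{d^2}{dt^2}t^4$; there the hard part shifts to the two differentiations and the ensuing simplification of $\frac{d^2}{dt^2}\bigl(t^4 R^{-1}(1+R-xt)^{-2}\bigr)$ down to $12\,t^2R^{-5}$, which is exactly the messy manipulation that the Gegenbauer shortcut bypasses.
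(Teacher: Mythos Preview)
Your proof is correct and takes a genuinely different route from the paper's own argument. The paper does \emph{not} sum the series directly; instead it proceeds by verification via a recurrence: it differentiates the candidate~\eqref{beta2p2delaylemma} once in $s$ to obtain a first-order linear ODE, reads off a three-term recurrence for the Taylor coefficients $p_k(w)$, and then observes that inserting~\eqref{delaysubleadingtwo} into that recurrence collapses precisely to the three-term relation~\eqref{threetermrec} for $P^{(2,2)}_j$. Initial conditions at $k=1,2$ finish the argument.

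Your approach is more direct: you recognise that $\tfrac{(n+3)(n+4)}{12}P_n^{(2,2)}$ is the Gegenbauer polynomial $C_n^{(5/2)}$ and invoke the classical generating function $\sum_n C_n^{(\lambda)}(x)t^n=(1-2xt+t^2)^{-\lambda}$. This explains, rather than merely verifies, why the exponent is exactly $5/2$ and why the quadratic $s^2-2(w+1)s+(w-1)^2$ reappears from the leading-order result~\eqref{eq:gen_fun_w}. The paper's ODE-recurrence method, by contrast, is the one the authors reuse systematically elsewhere (Lemma~\ref{unitarygenfntrans}, Propositions~\ref{unitarysubsubbeta2prop}, \ref{delaysubsub1prop}, \ref{transsubsub1prop}), where no single special-function identity is available; so their proof fits a template, while yours is a shortcut specific to this ultraspherical case. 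Your alternative outline via the Jacobi generating function $4R^{-1}(1+R-xt)^{-2}$ together with the operator $t^{-2}\tfrac{d^2}{dt^2}t^4$ is also valid but, as you note, reintroduces the algebra that the Gegenbauer identity bypasses.
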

\begin{proof}
  The idea of proof is straightforward. We first show that the Taylor
  coefficients of (\ref{beta2p2delaylemma}) satisfy a particular three
  term recurrence relation. Then, we demonstrate that the moments
  (\ref{delaysubleadingtwo}) satisfy exactly the same recurrence
  relation.  Finally, we are left to verify the initial conditions
  \begin{align*}
  \bigl[s\bigr]\mathcal D_{2}^{(2)}(w,s) &= 0,\\
  \bigl[s^{2}\bigr] \mathcal D_{2}^{(2)}(w,s) &= \frac{w}{(w-1)^{5}},
  \end{align*}
  which can be easily checked. 

  Differentiating (\ref{beta2p2delaylemma}) once with respect to $s$
  leads to the differential equation 
\begin{multline*}
  s\bigl(s^{2}-2(w+1)s+(w-1)^{2}\bigr)\frac{d\mathcal{D}_{2}^{(2)}(w,s)}{ds}\\
  +\bigl(5s(s-w-1)+4(w+1)s-2(w-1)^{2}-2s^{2}\bigr)\mathcal{D}_2^{(2)}(w,s)=0.
\end{multline*}
Inserting the power series expansion 
\[
\mathcal D_2^{(2)}(w,s) = \sum_{k=1}^{\infty}p_{k}(w)s^{k}
\]
 and equating coefficients of $s^{k}$ gives
\begin{equation}
\label{delaysubleadingtworec}
(w-1)^{2}(k-2)p_{k}(w)-(w+1)(2k-1)p_{k-1}(w)+(k+1)p_{k-2}(w)=0.
\end{equation}

We now have to prove that the moments (\ref{delaysubleadingtwo})
satisfy the recurrence relation
(\ref{delaysubleadingtworec}). Inserting (\ref{delaysubleadingtwo})
into the left-hand side of (\ref{delaysubleadingtworec}) and
simplifying leads to
\begin{equation*}
(k+2)(k-2)P^{(2,2)}_{k-2}(\tilde{w})-(2k-1)k\tilde{w}
P^{(2,2)}_{k-3}(\tilde{w})+(k-1)kP^{(2,2)}_{k-4}(\tilde{w}),
\end{equation*}
which is just the three term recurrence relation (\ref{threetermrec})
for Jacobi polynomials with $\alpha=\beta=2$ and $n=k-3$.
\end{proof}

As we go beyond the next to leading order in the
expansion~\eqref{eq:asy_ex_tra}, the corrections to the moments of the
density of the transmission eigenvalues become increasingly involved.
Therefore, although we need the explicit expression of such
coefficients in the proof, it is more convenient to state our results
only in terms of generating functions.

We first need the following lemma.

\begin{lemma}
\label{unitarygenfntrans}
The Taylor coefficients of
\begin{equation}
\label{transgensubsub2diff}
f(s;u)= \frac{u^{2}s}{(1-s)^{1/2}\bigl((u+1)^{2}-s(u-1)^{2}\bigr)^{5/2}} 
\end{equation}
as a function of $s$ are
\begin{equation}
  \label{eq:coef_tay}
\begin{split}
\bigl[s^{k}\bigr]f(s;u)& =u^{2}\frac{k(k+1)(u-1)^{k-2}}{(u+1)^{k+5}}
\biggl(\bigl(1-u^{2}\bigr)\bigl((6k)^{-1}P^{(1,1)}_{k-1}(\tilde{u}) \\
& \quad - \frac{2}{3} P^{(0,0)}_{k-1}(\tilde{u})\bigr)
-\frac{2u}{3}P^{(1,1)}_{k-2}(\tilde{u})\biggr),
\end{split}
\end{equation}
where
\begin{equation}
  \label{eq:u_tilde}
  \tilde u := \frac{u^2+1}{u^2-1}.
\end{equation}
\end{lemma}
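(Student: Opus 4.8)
The plan is to follow the template of Corollary~\ref{lemma1wiggen}: extract a three-term recurrence for the Taylor coefficients $p_k := \bigl[s^k\bigr]f(s;u)$, show that the closed form on the right-hand side of~\eqref{eq:coef_tay} satisfies the same recurrence, and then fix the solution by matching the first coefficients. First I would produce a first-order linear ODE for $f$. Apart from the prefactor $u^2 s$, the function~\eqref{transgensubsub2diff} is a product of powers of the two linear functions $1-s$ and $(u+1)^2 - s(u-1)^2$, so logarithmic differentiation gives
\[
\frac{f'}{f} = \frac{1}{s} + \frac{1}{2(1-s)} + \frac{\tfrac{5}{2}(u-1)^2}{(u+1)^2 - s(u-1)^2}.
\]
Multiplying through by $s(1-s)\bigl((u+1)^2 - s(u-1)^2\bigr)$ clears the denominators and yields $P(s)f'(s) = Q(s)f(s)$ with $P$ cubic and $Q$ quadratic in $s$. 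Reading off the coefficient of $s^k$ on each side then produces
\[
(u+1)^2(k-1)\,p_k + \bigl(u^2+4u+1 - 2(u^2+1)k\bigr)p_{k-1} + (u-1)^2 k\,p_{k-2} = 0, \qquad k \geq 1.
\]

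Because the leading coefficient $(u+1)^2(k-1)$ is nonzero for $k\geq 2$, this recurrence determines the whole sequence from $p_0$ and $p_1$, which I would read off directly from~\eqref{transgensubsub2diff}: the factor $s$ gives $p_0 = 0$, and the next order gives $p_1 = u^2/(u+1)^5$. One checks that the candidate $q_k$ on the right of~\eqref{eq:coef_tay} reproduces these, using $P^{(1,1)}_0 = P^{(0,0)}_0 = 1$ and $P^{(\cdot,\cdot)}_{-1} = 0$; indeed the bracket at $k=1$ collapses to $(1-u^2)(\tfrac16 - \tfrac23) = (u^2-1)/2$, and the prefactor then gives $q_1 = u^2/(u+1)^5 = p_1$ (and $q_0 = 0$ from the factor $k(k+1)$).

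The substantive step is to verify that $q_k$ satisfies the recurrence for $k\geq 2$. I would substitute $q_k, q_{k-1}, q_{k-2}$, factor out the common weight $u^2 (u-1)^{k-2}/(u+1)^{k+5}$, and reduce the result to an identity among the polynomials $P^{(1,1)}$ and $P^{(0,0)}$ evaluated at $\tilde u$ over the index range $k-1,\dots,k-4$. Following the procedure of Remark~\ref{linindepremark}, I would first put everything into one family: the connection formula~\eqref{connection1} raises both parameters of each Legendre polynomial, writing $P^{(0,0)}_{m}(\tilde u)$ as a two-term combination of $P^{(1,1)}_{m}(\tilde u)$ and $P^{(1,1)}_{m-2}(\tilde u)$, after which repeated use of the three-term recurrence~\eqref{threetermrec} collapses the whole expression onto a fixed consecutive pair $P^{(1,1)}_{m}(\tilde u), P^{(1,1)}_{m-1}(\tilde u)$. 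In carrying this out it is convenient to record $\tilde u - 1 = 2/(u^2-1)$, $\tilde u + 1 = 2u^2/(u^2-1)$ and $\tilde u^2 - 1 = 4u^2/(u^2-1)^2$, so that the polynomial-in-$u$ coefficients coming from the recurrence can be matched against the rational functions of $\tilde u$ produced by the Jacobi relations. The lemma then reduces to checking that the two coefficients of $P^{(1,1)}_{m}(\tilde u)$ and $P^{(1,1)}_{m-1}(\tilde u)$ vanish identically in $u$.

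The hard part will be precisely this final reduction. In Corollary~\ref{lemma1wiggen} only the single family $P^{(2,2)}$ appeared, whereas here the candidate mixes $P^{(1,1)}$ and $P^{(0,0)}$ and carries the index-dependent weight $1/(6k)$, so after clearing denominators the $k$-bookkeeping is heavier and the intermediate polynomials are considerably bulkier; as flagged in Remark~\ref{linindepremark}, this algebra is best executed algorithmically in a computer algebra package, which turns the claim into a finite verification. Once both coefficients are confirmed to be zero, $p_k$ and $q_k$ satisfy the same recurrence with the same values at $k=0$ and $k=1$, whence $p_k = q_k$ for all $k$, which is the assertion~\eqref{eq:coef_tay}.
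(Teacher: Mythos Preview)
Your proposal is correct and follows essentially the same route as the paper: derive a first-order linear ODE for $f$ (the paper writes it as $2s(1-s)(A-Bs)f' + (4Bs^2-3Bs+As-2A)f = 0$ with $A=(u+1)^2$, $B=(u-1)^2$, which after dividing by $2$ is exactly your recurrence), verify initial values, and then reduce the check that the closed form satisfies the recurrence to a Jacobi-polynomial identity handled via the procedure of Remark~\ref{linindepremark} in the $P^{(1,1)}$ basis. The only cosmetic difference is that the paper takes $p_1,p_2$ as its initial data rather than your $p_0,p_1$.
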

\begin{proof}
  The idea of the proof is the same as that one of
  Corollary~\ref{lemma1wiggen} and consists of two parts. First we
  obtain a recurrence relation for the Taylor coefficients of the
  generating function (\ref{transgensubsub2diff}); then we argue that
  the right-hand side of~\eqref{eq:coef_tay} satisfies the same
  recurrence equation. 

  One easily sees from (\ref{transgensubsub2diff}) that
\begin{equation}
\begin{split}
\label{icdiffeqtranssubsub2}
\bigl[s^{1}\bigr]f(s;u) &= \frac{u^{2}}{(u+1)^{5}},\\
\bigl[s^{2}\bigr]f(s;u) &= \frac{u^{2}(3u^{2}-4u+3)}{(u+1)^{7}}.
\end{split}
\end{equation}
Differentiating (\ref{transgensubsub2diff}) once with respect to $s$
leads to the differential equation
\begin{equation}
\label{diffeqtranssubsub2}
2s(1-s)(A-Bs)\frac{df(s;u)}{ds}+(4Bs^{2}-3Bs+As-2A)f(s;u)=0,
\end{equation}
where 
\[
A=(u+1)^{2} \quad \text{and} \quad  B=(u-1)^{2}.
\]

Substituting the expansion 
\[
f(s;u) = \sum_{k=1}^{\infty}p_{k}(u)s^{k}
\]
 into (\ref{diffeqtranssubsub2}) and equating the coefficients of
 $s^{k}$ gives
\begin{equation}
\label{eq:nrr}
2A(k-1)p_{k}(u)+\bigl(3A-B-2(A+B)k\bigr)p_{k-1}(u)+2Bk p_{k-2}(u)=0,
\end{equation}
with initial conditions given by
(\ref{icdiffeqtranssubsub2}). 

Inserting the right-hand side of~\eqref{eq:coef_tay} into the
recurrence relation~\eqref{eq:nrr} shows that the statement of the
lemma is equivalent to an identity among Jacobi polynomials, which can
be proved using the strategy outlined at the end of
Proposition~\ref{delaysubleadingtwoprop} and in Remark
\ref{linindepremark}. For this particular case it is more convenient
to express each Jacobi polynomial in the basis
$P^{(1,1)}_{j}(\tilde{u})$.
\end{proof}

It turns out that $f(s;u)$ is the generating function $\Delta
\mathcal{T}_2^{(2,0)}(u,s)$.  This result was first obtained using
periodic orbit theory by Berkolaiko and Kuipers\cite{BK11} --- more
precisely, they computed $\mathcal{T}_2^{(2,0)}(u,s)$, which is
related to $\Delta \mathcal{T}_2^{(2,0)}(u,s)$
by~\eqref{diffscorrespondence}. The RMT proof of this fact is a
particular case of the next proposition, where we obtain the same
generating function for Andreev billiards.

\begin{proposition}
\label{unitarysubsubbeta2prop}
Let $\beta=2$ and $\delta > -2$. The generating function for the
coefficients $\mathcal{T}_{k,2}^{(2,\delta)}(u)$ is
\begin{equation}
  \label{eq:nd2te}
  \mathcal{T}^{(2,\delta)}_{2}(u,s) =
  \frac{\delta^{2}us((u+1)^{2}-s(u-1)^{2})-
   4s^{2}u^{2}}{4(1-s)^{3/2}((u+1)^{2}-s(u-1)^{2})^{5/2}}.
\end{equation}
\end{proposition}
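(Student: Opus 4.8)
The plan is to compute the generating function of the differences, $\Delta\mathcal{T}_2^{(2,\delta)}(u,s)$, and then recover $\mathcal{T}_2^{(2,\delta)}(u,s)$ from the functional relation~\eqref{diffscorrespondence} together with the Aomoto value
\[
  \mathcal{T}_{1,2}^{(2,\delta)}(u)=\frac{u\delta^{2}}{4(u+1)^{3}},
\]
read off from~\eqref{eq:aom_ex} at $p=2$, $\beta=2$. To reach the coefficients $\Delta\mathcal{T}_{k,2}^{(2,\delta)}(u)$ I would set $a=\delta/2$ and $b=(u-1)n$ in~\eqref{coefficientexample}, expand each Pochhammer ratio with the three-term formula~\eqref{subleadingpochhammer}, and extract the $O(n^{-2})$ part of $n^{-1}\Delta M_{\J}^{(2)}(k,n)$. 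Exactly as in the proof of Proposition~\ref{unitarytermsvanishing}, this produces a sum of the shape $\frac{1}{k}\sum_{j}\binom{k}{j}\binom{k}{j-1}(\cdots)F_{k,2}^{(2,\delta)}(u,j)$, in which the polynomial $F_{k,2}^{(2,\delta)}(u,j)$ is at most quadratic in $\delta$, since $a=\delta/2$ enters the expansion coefficients in~\eqref{subleadingpochhammer} linearly and we only work to second order in $1/n$.

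I would then split $\Delta\mathcal{T}_2^{(2,\delta)}(u,s)$ into its contributions of degrees $0$, $1$ and $2$ in $\delta$. The $\delta^{0}$ part is $\Delta\mathcal{T}_2^{(2,0)}(u,s)$; I would identify it with the function $f(s;u)$ of~\eqref{transgensubsub2diff} by matching its $k$-th coefficient $\Delta\mathcal{T}_{k,2}^{(2,0)}(u)$ against the closed form for $\bigl[s^{k}\bigr]f(s;u)$ supplied by Lemma~\ref{unitarygenfntrans} (this is the RMT proof alluded to just after that lemma). The $\delta^{1}$ part I expect to vanish identically: the target~\eqref{eq:nd2te} and the Aomoto value above both carry no linear term in $\delta$, so by~\eqref{diffscorrespondence} the linear-in-$\delta$ generating function of the differences must be zero. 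I would show this directly, proving that the associated $j$-sum cancels after the index shift $j\to j-1$, the same mechanism that annihilated a sum at next-to-leading order in the proof of Proposition~\ref{unitarytermsvanishing}.

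The bulk of the work is the $\delta^{2}$ part $G_2(u,s)$. Following Remark~\ref{linindepremark} and the strategy of Proposition~\ref{delaysubleadingtwoprop}, I would first rewrite its $k$-th Taylor coefficient as a short, $k$-independent combination of Jacobi polynomials $P^{(\alpha,\beta)}_{j}(\tilde{u})$ with $\tilde{u}=(u^{2}+1)/(u^{2}-1)$. To identify $G_2$ in closed form I would use the shape of~\eqref{eq:nd2te} to conjecture
\[
  G_2(u,s)=\frac{u}{4}\left(\frac{1}{(u+1)^{3}}
  -\frac{1}{(1-s)^{1/2}\bigl((u+1)^{2}-s(u-1)^{2}\bigr)^{3/2}}\right),
\]
and then prove it by the differential-equation-to-recurrence method used in Corollary~\ref{lemma1wiggen} and Lemma~\ref{unitarygenfntrans}: differentiate the candidate once in $s$ to obtain a first-order linear ODE, convert it into a three-term recurrence for the Taylor coefficients, and verify that the Jacobi-polynomial representation of $\bigl[s^{k}\bigr]G_2$ satisfies the same recurrence (which should collapse to the three-term recurrence for the relevant Jacobi family), checking the first two coefficients as initial data.

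Assembling the three pieces gives $\Delta\mathcal{T}_2^{(2,\delta)}(u,s)=f(s;u)+\delta^{2}G_2(u,s)$; inserting this and $\mathcal{T}_{1,2}^{(2,\delta)}(u)$ into~\eqref{diffscorrespondence} and simplifying with $s/(s-1)=-s/(1-s)$ collapses the expression to~\eqref{eq:nd2te}, and the specialization $\delta=0$ recovers the known $\mathcal{T}_2^{(2,0)}(u,s)$. I expect the main obstacle to be the $\delta^{2}$ computation: both finding the compact Jacobi-polynomial representation of the $j$-sum and carrying out the recurrence verification involve heavy but systematic algebra, best handled by a computer algebra package as flagged in Remark~\ref{linindepremark}. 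A secondary point requiring care is confirming the \emph{exact} vanishing of the $\delta^{1}$ sum rather than merely its cancellation against a boundary contribution.
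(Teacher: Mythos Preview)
Your plan is correct and essentially parallels the paper's proof: work with the differences, expand~\eqref{coefficientexample} to $O(n^{-1})$ to get a sum with a coefficient $F_{k,2}^{(2,\delta)}(u,j)$ quadratic in $\delta$, split by degree in $\delta$, identify the $\delta^{0}$ part with $f(s;u)$ via Lemma~\ref{unitarygenfntrans}, show the $\delta^{1}$ part vanishes, and handle the $\delta^{2}$ part separately before reassembling with~\eqref{diffscorrespondence}.

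Two tactical differences are worth noting. First, for the $\delta^{2}$ part the paper avoids a second ODE/recurrence altogether: since $\Delta\mathcal{T}_{2}^{(2,\delta)}(u,s)$ can be written as $\Delta\mathcal{T}_{2}^{(2,0)}(u,s)\bigl(1+\tfrac{\delta^{2}(u-1)^{2}}{4u}-\tfrac{\delta^{2}(u+1)^{2}}{4us}\bigr)+\tfrac{u\delta^{2}}{4(u+1)^{3}}$, its $k$-th Taylor coefficient is a linear combination of $\bigl[s^{k}\bigr]\Delta\mathcal{T}_{2}^{(2,0)}$ and $\bigl[s^{k+1}\bigr]\Delta\mathcal{T}_{2}^{(2,0)}$, both already supplied by Lemma~\ref{unitarygenfntrans}. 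This recycles the $\delta^{0}$ work and sidesteps your separate treatment of $G_2$; your route is equally valid but a little longer. Second, the $\delta^{1}$ vanishing in the paper is \emph{not} achieved by a simple index shift as at $p=1$: the relevant $j$-sum is first converted to a short combination of Jacobi polynomials $P^{(1,0)}_{k-2}$, $P^{(1,1)}_{k-2}$, $P^{(1,2)}_{k-3}$ (see~\eqref{NTcoeffa1}), and that combination is then shown to vanish via the connection coefficients~\eqref{connection1} and the three-term recurrence~\eqref{threetermrec}. You should expect to need this Jacobi-polynomial mechanism rather than a bare reindexing.
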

\begin{proof}
  Our techniques to compute the moments of the transmission
  eigenvalues are best suited to compute the differences of the
  moments.  Therefore, we shall study 
  \begin{equation}
\label{unitaryandreevgenfn}
\Delta \mathcal{T}^{(2,\delta)}_{2}(u,s) =
\frac{4su^{2}+\delta^{2}u\bigl(s(u-1)^{2}
-(u+1)^{2}\bigr)}{4\sqrt{1-s}\bigl((u+1)^{2}-s(u-1)^{2}\bigr)^{5/2}}
+\frac{u\delta^{2}}{4(u+1)^{3}}.
\end{equation}

  Expanding the coefficient~\eqref{coefficientexample} to order
  $O(n^{-1})$ and inserting it into~\eqref{transmissionexample} gives
\begin{equation}
\label{unitarysubsubbeta2expr}
\Delta \mathcal{T}_{k,2}^{(2,\delta)}(u) = \frac{1}{k}
\sum_{j=0}^{k}\binom{k}{j}\binom{k}{j-1}\frac{u^{2k-2j}}{(u+1)^{2k+3}}
F_{k,2}^{(2,\delta)}(u,j),
\end{equation}
where $F_{k,2}^{(2,\delta)}(u,j)$ is a polynomial quadratic in
$\delta$ and of $4$-th order in $u$; the definition of
$F_{k,2}^{(2,\delta)}(u,j)$ is reported in Appendix~\ref{ap:add_exs},
Eq.~\eqref{p2tksubsub}.  Thus, the right-hand side
of~\eqref{unitarysubsubbeta2expr} is a quadratic form in $\delta$ too.

In order to prove this proposition we need to show that
Eq.~\eqref{unitarysubsubbeta2expr} coincides with the $k$-th
Taylor coefficient of the right-hand side
of~\eqref{unitaryandreevgenfn} in a neighborhood of $s=0$ and for any
$u>1$.  The proof is systematic and we shall outline the main steps.  

Take, for example, 
\[
\bigl[\delta^0\bigr]\bigl[u^1\bigr]F_{k,2}^{(2,\delta)}(u,j) = 
-(k-j)(k-j+1)(2k-2j+1)/3
\]
and consider its contribution to the right-hand side of
Eq.~\eqref{unitarysubsubbeta2expr}: 
\begin{equation}
\label{eq:hyp_eq_b2_tc}
-\frac{2k(k-1)}{3}\sum_{j=0}^{k}\binom{k-1}{j}\binom{k-2}{j-1}
  \frac{u^{2k-2j}}{(u+1)^{2k+3}}
-\frac{k}{3}\sum_{j=0}^{k}\binom{k-1}{j}\binom{k-1}{j-1}
\frac{u^{2k-2j}}{(u+1)^{2k+3}}.
\end{equation}
The ratio of consecutive terms in these two sums are rational
functions of the summation indices.  Thus, they are special cases of
hypergeometric functions, which, in turn, can be expressed in terms of
Jacobi polynomials using the correspondence~\eqref{jacobihypergeom}.
Therefore, Eq.~\eqref{eq:hyp_eq_b2_tc} becomes
\begin{equation*}
 -\frac{u^{2}(u^{2}-1)^{k-2}}{(u+1)^{2k+3}}
  \left(2k(k-1)P^{(1,0)}_{k-2}(\tilde u)/3+kP^{(1,1)}_{k-2}(\tilde u)
   /3\right).
\end{equation*}
Repeating this procedure for each monomial in $u$ and $\delta$ in
$F_{k,2}^{(2,\delta)}(u,j)$, we see that
(\ref{unitarysubsubbeta2expr}) becomes a sum of different Jacobi
polynomials, leading to an explicit (though complicated) formula for
each coefficient of $\delta$. For convenience we report these formulae
in Appendix~\ref{ap:add_exs}, Eqs. (\ref{NTcoeffa0}),
(\ref{NTcoeffa1}) and (\ref{NTcoeffa2}). 

Up to now we have proved that $\Delta
\mathcal{T}^{(2,\delta)}_{k,2}(u)$ can be written as a particular
combination of Jacobi polynomials.  The rest of the proof proceeds as
follows: a) we shall express $\bigl[s^k\bigr]\Delta
\mathcal{T}^{(2,\delta)}_2(u,s)$ as a linear combination of Jacobi
polynomials too; b) by using the same procedure as in the proof of
Proposition~\ref{delaysubleadingtwoprop} --- points (1)--(4) --- it
follows that
\[
\bigl[s^k\bigr]\Delta
\mathcal{T}^{(2,\delta)}_2(u,s) - \Delta
\mathcal{T}^{(2,\delta)}_{k,2}(u) =0. 
\]
We shall only outline how to achieve a), as we have already discussed
how to prove b).

First note that $\Delta
\mathcal{T}^{(2,\delta)}_{2}(u,s)$ can be written in terms of $\Delta
\mathcal{T}^{(2,0)}_{2}(u,s)$:
\begin{equation*}
  \Delta \mathcal{T}^{(2,\delta)}_{2}(u,s)  = \Delta
  \mathcal{T}^{(2,0)}_{2}(u,s)\left(1+\frac{\delta^{2}(u-1)^{2}}{4u}-
    \frac{\delta^{2}(u+1)^{2}}{4us}\right) +\frac{u\delta^{2}}{4(u+1)^{3}}.
\end{equation*}
Thus, the Taylor coefficients of (\ref{unitaryandreevgenfn}) are 
\begin{equation}
\label{eq:tay_c_2del}
\begin{split}
\bigl[s^{k}\bigr]\Delta \mathcal{T}^{(2,\delta)}_{2}(u,s) & =
 \bigl[s^{k}\bigr]\Delta \mathcal{T}^{(2,0)}_{2}(u,s)+\delta^{2}
\biggl(\frac{(u-1)^{2}}{4u}\bigl[s^{k}\bigr]\Delta
  \mathcal{T}^{(2,0)}_{2}(u,s)\\
& \quad -\frac{(u+1)^{2}}{4u}\bigl[s^{k+1}\bigr]
\Delta \mathcal{T}^{(2,0)}_{2}(u,s)\biggr).
\end{split}
\end{equation}
Now, from Lemma \ref{unitarygenfntrans} and
Eq.~\eqref{unitaryandreevgenfn} we have $f(u;s)  = 
\Delta \mathcal{T}_2^{(2,0)}(u,s)$; therefore
\begin{equation*}
\frac{\bigl[s^{k}\bigr]6\Delta 
\mathcal{T}^{(2,0)}_{2}(u,s)(u+1)^{k+5}}%
{u^{2}k(k+1)(u-1)^{k-2}}=(1-u^{2})\biggl(k^{-1}P^{(1,1)}_{k-1}(\tilde{u})
-4P^{(0,0)}_{k-1}(\tilde{u})\biggr)
-4uP^{(1,1)}_{k-2}(\tilde{u}).
\end{equation*}
Inserting this expression into the right-hand side
of~\eqref{eq:tay_c_2del} gives $\bigl[s^k\bigr]\Delta
\mathcal{T}^{(2,\delta)}_2(u,s)$ in terms of Jacobi polynomials. 
\end{proof}

\section{Orthogonal and Symplectic Symmetries}
\label{se:orth_sym}

Our aim in this section is to compute the first two corrections to the
leading order terms (\ref{2leadjacobileading}) and
(\ref{wignerdelayleading}) when $\beta=1$ and $\beta=4$. The finite
$n$ formulae for the averages~\eqref{trans_mom}
and~\eqref{eq:mom_time_del} that we obtained\cite{MS11a} for
ensembles with both orthogonal and symplectic symmetries have a
similar structure. Thus, the techniques and calculations used to study
their asymptotic behaviour are almost identical for both symmetry
classes. Although we will state the results for $\beta=1$ and
$\beta=4$, most of the proofs will focus on $\beta=1$ ensembles, as
they contain the essential details of the arguments.

\subsection{From Finite-$n$ to Asymptotics --- Preliminaries} 
\label{sse:exact_results}
The finite-$n$ moments of the transmission eigenvalues and proper
delay times are
\begin{subequations}
\begin{align}
\label{orthogonaljacobiexample}
M_{\J}^{(1)}(k,n) & =M_{\J}^{(2)}(k,n-1) \\
& \quad - 2\sum_{j=1}^{\lfloor k/2 \rfloor}
\sum_{i=0}^{k-2j}\binom{k}{i}\binom{k}{i+2j}
S_{i,j}^{a/2,b/2}(k,(n-1)/2) \notag \\
&  \quad +I_{\J}(k,n) \notag
\intertext{and}
\label{orthogonalwishartexample}
M_{\Lag}^{(1)}(-k,n) & =M_{\Lag}^{(2)}(-k,n-1) \\
&\quad - 2^{1-k}\sum_{j=1}^{n/2-1}\sum_{i=0}^{n-2j}
\binom{k+j-1}{k-1}\binom{k+i+2j-1}{k-1}
S_{i,j}^{b/2}(-k,(n-1)/2)\notag \\
& \quad +I_{\Lag}(-k,n).  \notag
\end{align}
\end{subequations}
We give the explicit expressions of $S_{i,j}^{a/2,b/2}(k,n)$ and
$S_{i,j}^{b/2}(-k,n)$ in Eqs. (\ref{Sab}) and (\ref{Sb}); the
terms $I_{\J}(k,n)$ and $I_{\Lag}(-k,n)$ are defined in (\ref{eq:Iab})
and (\ref{Ib}) respectively. The quantities $M_{\J}^{(2)}(k,n-1)$ and
$M_{\Lag}^{(2)}(-k,n-1)$ are the moments for $\beta=2$, whose formulae
are given in Eqs.~\eqref{eq:expl_mom_j}
and~\eqref{wignerexample}. 

\begin{remark}
\label{re:beta_in}
  Consider the asymptotic expansion of $S_{i,j}^{a/2,b/2}(k,(n-1)/2)$
  and $I_{\J}(k,n)$.  Inserting the leading order
  of~\eqref{fullgammaratio} gives
\begin{subequations}
\begin{align}
\label{sijasymptex}
2S_{i,j}^{a/2,b/2}(k,(n-1)/2) &\sim 
\frac{u^{2k-2j-2i}}{(1+u)^{2k}}+O(n^{-1}), \quad  n \to \infty \\
\label{iknex}
I_{\J}(k,n) &\sim
\sum_{j=0}^{k}\binom{2k}{2j}\frac{u^{2j}}{(1+u)^{2k}}
+O(n^{-1}),\quad  n \to \infty.
\end{align}
\end{subequations}
The computation of the right-hand sides requires an analysis similar
to that one used to obtain~\eqref{leadingestimatetrans} for $\beta=2$.
It follows immediately that $S_{i,j}^{a/2,b/2}(k,n)$ and $I_{\J}(k,n)$
do not contribute at leading order. As a consequence the leading order
term in the expansion~\eqref{eq:asy_ex_tra} is independent of $\beta$
and $\delta$.  The same is true for the moments of the proper delay
times~\eqref{eq:asy_ex_td}. In other words, the contributions made by the
orthogonal and symplectic symmetries of the ensembles affect only the
higher order terms.  This is a general property of the density of the
eigenvalues in random matrix ensembles.
\end{remark}

Higher order coefficients in the right-hand sides of
~\eqref{eq:asy_ex_tra} and \eqref{eq:asy_ex_td} are computed by
determining the contributions of higher order terms in the expansions
of $S_{i,j}^{a/2,b/2}(k,n)$, $S_{i,j}^{b/2}(k,n)$, $I_{\J}(k,n)$ and
$I_{\Lag}(-k,n)$.  Thus, for example, in order to compute a given
contribution in the expansion of~\eqref{orthogonaljacobiexample}, we
shall separately study the series
\begin{subequations}
  \begin{align}
    \label{uniexpansions}
    \frac{1}{n}M_{\mathrm{J}_{a,b}}^{(2)}(k,n-1) & \sim 
  \sum_{p=0}^{\infty}\mathcal{U}^{\J}_{k,p}(u)n^{-p}, \\
  \label{jacsympexpansion}
\sum_{j=1}^{\lfloor k/2 \rfloor}\sum_{i=0}^{k-2j}\binom{k}{i}
\binom{k}{i+2j}2S_{i,j}^{a/2,b/2}(k,(n-1)/2) 
& \sim \sum_{p=0}^{\infty}\mathcal{S}^{\J}_{k,p}(u)n^{-p} \\
\intertext{and}
\label{iknexexpan}
I_{\J}(k,n)&  \sim \sum_{p=0}^{\infty}\mathcal{I}^{\J}_{k,p}(u)n^{-p}
  \end{align}
\end{subequations}
as $n \to \infty$.  It follows that the coefficients in 
the expansion~\eqref{eq:asy_ex_tra} can be represented as 
\begin{equation}
\label{transasymptrule}
\mathcal{T}_{k,p}^{(1,\delta)}(u) = \mathcal{U}_{k,p}^{(1,\delta)}(u) 
- \mathcal{S}^{(1,\delta)}_{k,p}(u)+\mathcal{I}^{(1,\delta)}_{k,p}(u).
\end{equation}
For convenience, we have expressed the parameters $a$ and $b$ of the
Jacobi ensemble in terms of $\beta$ and $\delta$ (see
Eq.~\eqref{eq:jacobi_parameters}); thus, in~\eqref{transasymptrule} we
have replaced the superscript $\J$ with $(\beta,\delta)$. We shall
adopt this notation in the rest of the paper.  Similarly, for the
proper delay times we shall use the notation $(\beta)$ instead of
$\Lag$ (see Eq.~\eqref{eq:lag_par}). The contributions to other
quantities can be broken down in the same way.  Therefore, we shall
write
\begin{subequations}
\begin{align}
\label{transasymptruled}
\Delta \mathcal{T}_{k,p}^{(1,\delta)}(u) & = \Delta 
\mathcal{U}_{k,p}^{(1,\delta)}(u) - 
\Delta \mathcal{S}^{(1,\delta)}_{k,p}(u)+\Delta \mathcal{I}^{(1,\delta)}_{k,p}(u)\\
\label{delayasymptruled}
\mathcal{D}_{k,p}^{(1)}(w) & = \mathcal{U}^{(1)}_{-k,p}(w) - 
\mathcal{S}^{(1)}_{-k,p}(w)
+\mathcal{I}^{(1)}_{-k,p}(w). 
\end{align}
\end{subequations}

Without loss of generality, we shall refer to
$\mathcal{U}_{k,p}^{(1,\delta)}(u)$, $\Delta \mathcal{U}_{k,p}^{(1,\delta)}(u)$ and
$\mathcal{U}^{(1)}_{-k,p}(w)$ as \emph{unitary contributions;} similarly, we shall call
$\mathcal{S}_{k,p}^{(1,\delta)}(u)$, $\Delta \mathcal{S}_{k,p}^{(1,\delta)}(u)$ and
$\mathcal{S}^{(1)}_{-k,p}(w)$ \emph{symplectic contributions.}

\subsection{Next to Leading Order --- Negative Moments in the
  Laguerre Ensembles}

\begin{proposition}
\label{subleadingwignerdelayrmt}
Let $\beta \in \{1,2,4\}$ , the next to leading order term in the
expansion~\eqref{eq:asy_ex_tra} is 
\begin{equation}
\label{subleadingwignerdelayeqn}
\mathcal{D}_{k,1}^{(\beta)}(w) = \frac{1}{2(w-1)^{2k}}
\left(\frac{2}{\beta}-1\right)
\sum_{j=0}^{k}\left(\binom{2k}{2j}
-\binom{k}{j}^{2}\right)w^{j}.
\end{equation}
\end{proposition}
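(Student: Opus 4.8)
The plan is to work from the finite-$n$ decomposition~\eqref{orthogonalwishartexample} together with the additive splitting~\eqref{delayasymptruled} of the asymptotic coefficient into a unitary, a symplectic and an $I$ contribution,
\[
\mathcal{D}_{k,1}^{(1)}(w) = \mathcal{U}^{(1)}_{-k,1}(w) - \mathcal{S}^{(1)}_{-k,1}(w) + \mathcal{I}^{(1)}_{-k,1}(w),
\]
computing each term at order $p=1$ separately. As announced in Section~\ref{se:orth_sym}, I would carry out the calculation in full for $\beta=1$ and then treat $\beta=4$ by the parallel argument based on the corresponding finite-$n$ formula; the two results are unified by the prefactor $(2/\beta-1)$, which vanishes at $\beta=2$ in agreement with Proposition~\ref{unitarytermsvanishing}.

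For the unitary contribution I would use~\eqref{wignerexample} with $n \to n-1$ and with the $\beta=1$ value $b = n(w-1) + 2/\beta - 1 = n(w-1)+1$ dictated by~\eqref{eq:lag_par}. The convenient feature is that then $b + (n-1) = nw$ exactly, so the $n$-dependent factor is $(nw)_{(-k-j)}/(n)_{(-j-1)}$. Inserting the three-term expansion~\eqref{subleadingpochhammer} and extracting the $O(n^{-1})$ coefficient yields a single convergent series; comparing it termwise with the $\beta=2$ computation~\eqref{easysimpwig}, which vanishes identically by~\eqref{neglagzero}, isolates the mismatch coming purely from the shifted denominator, leaving
\[
\mathcal{U}^{(1)}_{-k,1}(w) = -\frac{1}{k}\sum_{j=0}^{\infty}(j+1)\binom{k+j}{k-1}\binom{k+j-1}{k-1}w^{-k-j}.
\]
Since the ratio of consecutive summands is rational in $j$, this is a ${}_2F_1$ that I would sum (for instance by applying the weight $(j+1)$ to the leading-order series~\eqref{leadingwignerderivation}) into a rational function of $w$ built on Jacobi polynomials in $\tilde w$.

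The symplectic and $I$ contributions form the more laborious part. For each I would insert~\eqref{subleadingpochhammer} into the explicit finite-$n$ expressions for the symplectic kernel $S^{b/2}_{i,j}(-k,\cdot)$ and the remainder $I_{\Lag}(-k,n)$, extract the $O(n^{-1})$ terms, and then perform the inner summation over $i$. By analogy with the Jacobi estimates~\eqref{iknex} and~\eqref{sijasymptex} I expect the $\binom{2k}{2j}$ pattern of the target polynomial to originate from the $I$ contribution and the $\binom{k}{j}^2$ pattern from the symplectic double sum, the inner sums being Vandermonde-type convolutions of binomials that collapse to these closed forms. Summing the three pieces and collecting the coefficient of $w^j$ should then produce $\tfrac12(w-1)^{-2k}\bigl(\binom{2k}{2j}-\binom{k}{j}^2\bigr)$ after the cancellations.

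The main obstacle is the double-sum asymptotics of the symplectic and $I$ contributions: extracting the subleading term while keeping both summation indices under control, and then showing that the inner convolutions over $i$ reduce exactly to $\binom{2k}{2j}$ and $\binom{k}{j}^2$. Organising the bookkeeping so that the genuinely $\beta$-dependent pieces assemble into the single prefactor $(2/\beta-1)$ — consistent with the palindromic-in-$(-2/\beta)$ structure recalled in Remark~\ref{re:de_com} — is the combinatorial crux; once it is in place, the $\beta=4$ case follows by the same finite-$n$ formula without new ideas, and the check at $\beta=2$ is automatic.
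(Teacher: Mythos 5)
Your proposal follows essentially the same route as the paper: the same decomposition $\mathcal{D}_{k,1}^{(1)}(w) = \mathcal{U}^{(1)}_{-k,1}(w) - \mathcal{S}^{(1)}_{-k,1}(w) + \mathcal{I}^{(1)}_{-k,1}(w)$, the same expansion of the Gamma-function ratios, and the same Chu--Vandermonde evaluation of the inner sums; your unitary contribution $-\tfrac{1}{k}\sum_{j\ge 0}(j+1)\binom{k+j}{k-1}\binom{k+j-1}{k-1}w^{-k-j}$ agrees exactly with the paper's $-\sum_{j\ge 0}\binom{k+j-1}{k-1}\binom{k+j}{k}w^{-k-j}$. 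The only slight imprecision is in the attribution of the two binomial patterns: after Chu--Vandermonde the symplectic double sum splits into a $\binom{2k+2j+1}{2k-1}$ piece and a $-\binom{k+j}{k-1}^{2}$ piece, so it feeds both the $\binom{2k}{2j}$ part (together with $\mathcal{I}^{(1)}_{-k,1}$, resummed by the binomial theorem) and the $\binom{k}{j}^{2}$ part (only after combining with $\mathcal{U}^{(1)}_{-k,1}$ to form $(w-1)\sum_{j}\binom{k+j}{k}^{2}w^{-k-j-1}$, which is then resummed via $P^{(0,0)}_{k}(\tilde{w})$).
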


\begin{proof}
  When $\beta=2$ the right-hand side
  of~\eqref{subleadingwignerdelayeqn} is zero, which is consistent
  with Proposition~\ref{unitarytermsvanishing}; we shall only discuss
  the proof for $\beta=1$, as it is identical to that for $\beta=4$.

  Take the next to leading order term as $n\to \infty$ of
  formulae~\eqref{Sb} and~\eqref{Ib}. Then, by proceeding as outlined
  in Sec.~\ref{sse:exact_results} we arrive at
\begin{subequations}
\begin{align}
\label{unideriv1}
\mathcal{U}^{(1)}_{-k,1}(w) & = -\sum_{j=0}^{\infty}
\binom{k+j-1}{k-1}\binom{k+j}{k}w^{-k-j},\\
\label{incderiv1}
\mathcal{I}^{(1)}_{-k,1}(w) & = \sum_{j=0}^{\infty}
\binom{2k+2j-1}{2k-1}w^{-k-j}. 
\end{align}
\end{subequations}

To obtain the next to leading order symplectic contribution we need to
work a little bit more.  We obtain
\begin{align}
\label{lastlinederiv1}
  \mathcal{S}^{(1)}_{-k,1}(w) &= \sum_{j=1}^{\infty}
  \sum_{i=0}^{\infty}\binom{k+i-1}{k-1}\binom{k+2j+i-1}{ k-1}w^{-k-i-j}  \\
\label{midlinederiv1}  
&= \sum_{j=0}^{\infty}\sum_{i=0}^{j}\binom{k+j-i-1}{k-1}
  \binom{k+j+i+1}{k-1}w^{-k-j-1}\tag{93'} \\
  &=\frac{1}{2}\left(\sum_{j=0}^{\infty}\binom{2k+2j+1}{2k-1}w^{-k-j-1}
  -\sum_{j=0}^{\infty}\binom{k+j}{k-1}^{2}w^{-k-j-1}\right).\notag 
\end{align}

The inner sum in line~\eqref{midlinederiv1} was evaluated using
Chu-Vandermonde's summation (see Example~\ref{ex:chu_sum1},
Appendix~\ref{sse:chu_vandermonde}).  Combining Eqs.
(\ref{unideriv1}), (\ref{incderiv1}) and (\ref{lastlinederiv1})
into~(\ref{delayasymptruled}) yields
\begin{equation}
\label{eq:temp_result}
\mathcal{D}_{k,1}^{(1)}(w) = \frac{1}{2}\left(\sum_{j=0}^{\infty}
\binom{2k+2j-1}{2k-1}w^{-k-j}-(w-1)
\sum_{j=0}^{\infty}\binom{k+j}{k}^{2}w^{-k-j-1}\right).
\end{equation}

The identity
\begin{equation}
\label{subleadingsimplesum}
(w-1)^{-2k}\sum_{j=0}^{k}\binom{2k}{2j}w^{j} 
= \sum_{j=0}^{\infty}\binom{2k+2j-1}{2k-1}w^{-k-j},
\end{equation}
which is a simple consequence of the binomial theorem, takes care of
the first series in~\eqref{eq:temp_result}; the second sum is a
hypergeometric function:
\begin{equation*}
\begin{split}
\sum_{j=0}^{\infty}\binom{k+j}{k}^{2}w^{-k-j-1} & = 
w^{-k-1}{}_2F_1(k+1,k+1,1,w^{-1})\\
&=(w-1)^{-k-1}P^{(0,0)}_{k}(\tilde{w}).
\end{split}
\end{equation*}
The explicit representation of the Jacobi polynomials~\eqref{jacdef} 
combined with~\eqref{subleadingsimplesum} completes the proof.
\end{proof}

\begin{corollary}
\label{co:gf_td_1}
The generating function of the next to leading order corrections of
the moments~\eqref{subleadingwignerdelayeqn} is
\begin{equation}
\label{momslaggenp1prop}
\begin{split}
\mathcal D^{(\beta)}_{1}(w,s) & =\frac12 \left( \frac{2}{\beta} -1\right)\left(
\frac{(w-1)^{2}-(w+1)s}{s^{2}-2s(w+1)+(w-1)^{2}}\right.\\
 & \quad \left.- \frac{(w-1)\sqrt{s^{2}-2s(w+1)+(w-1)^{2}}}{s^{2}-2s(w+1)+(w-1)^{2}}\right).
\end{split}
\end{equation}
\end{corollary}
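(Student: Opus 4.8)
The plan is to compute the generating function directly from the definition \eqref{eq:gen_fun_momttd} by inserting the explicit coefficients \eqref{subleadingwignerdelayeqn}. Since the prefactor $\tfrac12(\tfrac2\beta-1)$ is independent of $k$, I would factor it out and reduce the problem to evaluating
\[
\Sigma_1 - \Sigma_2 := \sum_{k=1}^{\infty}\frac{s^k}{(w-1)^{2k}}\sum_{j=0}^{k}\binom{2k}{2j}w^j - \sum_{k=1}^{\infty}\frac{s^k}{(w-1)^{2k}}\sum_{j=0}^{k}\binom{k}{j}^2 w^j,
\]
and showing that it equals the bracketed expression in \eqref{momslaggenp1prop}. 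The two sums are treated separately, since each inner sum has a well-known closed form.

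For $\Sigma_1$, the identity $\sum_{j=0}^{k}\binom{2k}{2j}w^j=\tfrac12\bigl((1+\sqrt w)^{2k}+(1-\sqrt w)^{2k}\bigr)$, obtained by symmetrizing the binomial theorem in $\pm\sqrt w$, turns the $k$-sum into two geometric series with ratios $A=s(1+w+2\sqrt w)/(w-1)^2$ and $B=s(1+w-2\sqrt w)/(w-1)^2$. Summing them gives $\Sigma_1=\tfrac12(A+B-2AB)/\bigl((1-A)(1-B)\bigr)$; the elementary values $A+B=2s(w+1)/(w-1)^2$ and $AB=s^2/(w-1)^2$ (so that every $\sqrt w$ cancels) then produce the rational function $\Sigma_1=s(w+1-s)/D$, where $D:=s^2-2s(w+1)+(w-1)^2$ is precisely the quadratic appearing under the radical of the leading-order generating function \eqref{eq:gen_fun_w}. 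For $\Sigma_2$, I would use $\sum_{j=0}^{k}\binom{k}{j}^2 w^j=(w-1)^kP^{(0,0)}_k(\tilde w)$ (the Legendre case of the Jacobi polynomials \eqref{jacdef}, already exploited via the ${}_2F_1$ representation in the proof of Proposition~\ref{subleadingwignerdelayrmt}), so that $\Sigma_2=\sum_{k\ge1}\bigl(s/(w-1)\bigr)^kP^{(0,0)}_k(\tilde w)$. The classical Legendre generating function $\sum_{k\ge0}P^{(0,0)}_k(\tilde w)t^k=(1-2\tilde w t+t^2)^{-1/2}$, evaluated at $t=s/(w-1)$ with $\tilde w=(w+1)/(w-1)$, collapses to $(w-1)/\sqrt D$; subtracting the $k=0$ term yields $\Sigma_2=(w-1)/\sqrt D-1$.

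Combining the two, $\Sigma_1-\Sigma_2 = s(w+1-s)/D + 1 - (w-1)/\sqrt D$, and I would finish by absorbing the stray $+1$: writing $1=D/D$ and adding the first two terms gives numerator $s(w+1)-s^2+D=(w-1)^2-s(w+1)$, which reproduces exactly $\bigl((w-1)^2-(w+1)s\bigr)/D-(w-1)/\sqrt D$, the content of \eqref{momslaggenp1prop} after noting $(w-1)\sqrt D/D=(w-1)/\sqrt D$. The points requiring care are bookkeeping rather than conceptual: checking that $\sqrt w$ genuinely cancels in $\Sigma_1$ (it does, because the odd part of the symmetrization drops out), tracking the $k=0$ correction from the Legendre series so that it recombines correctly with $\Sigma_1$, and restricting to $w>1$ and $s$ near $0$ so that both the geometric series and the Legendre expansion converge, the stated formula being the analytic continuation. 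I expect the main (though still routine) obstacle to be this last recombination step, where the $+1$ left over from dropping the constant Legendre term is exactly what converts the naive numerator $s(w+1-s)$ into the symmetric $(w-1)^2-(w+1)s$.
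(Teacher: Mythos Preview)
Your proof is correct and follows essentially the same approach as the paper: both split the sum into the $\binom{2k}{2j}$ and $\binom{k}{j}^2$ pieces, evaluate the first via the symmetrized binomial identity and geometric series, and the second via the Legendre generating function. The only cosmetic difference is that the paper computes the two generating functions first without the factor $(w-1)^{-2k}$ and then applies the rescaling $s\to s/(w-1)^2$ at the end, whereas you incorporate this factor from the outset.
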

\begin{proof}
We now show that the moments~\eqref{subleadingwignerdelayeqn} are the
Taylor coefficients of the function~\eqref{momslaggenp1prop}. 

Consider  the identity
\begin{equation}
  \sum_{j=0}^{k}\binom{2k}{2j}w^{j} = 
\frac{1}{2}\left((1-\sqrt{w})^{2k}+(1+\sqrt{w})^{2k}\right).
\end{equation}
For $s$ sufficiently small we can write
\begin{equation}
\label{2k2jgenfn}
\begin{split}
\frac{1}{2}\sum_{k=1}^{\infty}\sum_{j=0}^{k}\binom{2k}{2j}w^{j}s^{k} 
&= \frac{1}{4}\left(\frac{s(1-\sqrt{w})^{2}}%
{1-s(1-\sqrt{w})^{2}}+\frac{s(1+\sqrt{w})^{2}}{1-s(1+\sqrt{w})^{2}}\right)\\
&=\frac{1-s(w+1)}{2(s^{2}(w-1)^{2}-2s(w+1)+1)}-\frac{1}{2}. 
\end{split}
\end{equation}

A direct inspection of the definition of the Jacobi
polynomials~\eqref{jacdef} gives the generating function
\begin{equation}
\label{eq:fs_gf}
\frac{1}{2}\sum_{k=1}^{\infty}\sum_{j=0}^{k}\binom{k}{j}^{2}w^{j}s^{k} 
= \frac{1}{2}\sum_{k=1}^{\infty}(w-1)^{k}P_{k}^{(0,0)}(\tilde{w})s^{k}.
\end{equation}
Recall that $\tilde{w}=(w+1)/(w-1)$.  Furthermore, the generating
function of the Jacobi polynomials $P_k^{(\alpha,\beta)}(w)$ when
$\alpha = \beta=0$ is (see Ref.~\citenum{AS72}, Eq. (22.9.1))
\begin{equation}
\label{eq:jp_gf}
\sum_{k=1}^{\infty}P_{k}^{(0,0)}(w)s^{k} = \frac{1}{\sqrt{s^{2}-2ws+1}}-1.
\end{equation}
Combining Eqs.~\eqref{eq:fs_gf} and~\eqref{eq:jp_gf} leads to
\begin{equation}
\label{genfnsquarebinoms}
\frac{1}{2}\sum_{k=1}^{\infty}\sum_{j=0}^{k}\binom{k}{j}^{2}w^{j}s^{k}
=\frac{1}{2\sqrt{s^{2}(w-1)^{2}-2(w+1)s+1}}-\frac{1}{2}.
\end{equation}

Finally, to complete the proof we introduce the scaling $s \to
(w-1)^{-2}s$ and subtract \eqref{genfnsquarebinoms} from
(\ref{2k2jgenfn}).
\end{proof}

\begin{remark}
When $w=2$ and $\beta=1$ Eq.~\eqref{momslaggenp1prop} reduces to
\begin{equation*}
  \mathcal D^{(1)}_1(2,s) = \frac{1-3s-\sqrt{1-6s+s^{2}}}{2(1-6s+s^{2})},
\end{equation*}
which was computed from periodic orbit theory.\cite{BK11}

\end{remark}

\subsection{Next to Leading Order --- Transmission Eigenvalues}
The following proposition extends
formula~(\ref{transfirstunitarycorr}) to the cases $\beta=1$ and
$\beta=4$. In Ref.~\citenum{DBB10}, Eq.~ (9), they obtain the
next to leading order correction of the eigenvalue density, though no
proof was given. For the Wigner-Dyson symmetries, $\delta=0$; the
corresponding result first appeared in Ref.~\citenum{BB96} and was
derived via integration over matrix elements combined with a
perturbative approach based on diagrammatic methods. We will present a
derivation using a completely alternative method.

\begin{proposition}
\label{pr:ntl_teig}
For $\beta \in \{1,2,4\}$ and $\delta > -2$ one
has
\begin{equation}
\label{jacobisubleadingchaos}	
\begin{split}
\Delta \mathcal{T}^{(\beta,\delta)}_{k,1}(u) &= 
\left(\frac{2}{\beta}-1\right)\frac{u}{(u+1)^{2}}
\left(\frac{u-1}{u+1}\right)^{2k}\\
&\quad +\frac{\delta}{\beta}\frac{1}{(u+1)^{2k+2}}
\left(\sum_{j=0}^{k}\binom{k}{j}^{2}u^{2j+1}-\sum_{j=0}^{k}\binom{k}{j}
\binom{k}{j+1}u^{2j+2}\right). 
\end{split}
\end{equation}
\end{proposition}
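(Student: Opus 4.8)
The plan is to build $\Delta \mathcal{T}^{(\beta,\delta)}_{k,1}(u)$ out of the three pieces of the finite-$n$ identity~\eqref{orthogonaljacobiexample}, that is, to use the decomposition~\eqref{transasymptruled},
\[
\Delta \mathcal{T}_{k,1}^{(1,\delta)}(u) = \Delta \mathcal{U}_{k,1}^{(1,\delta)}(u) - \Delta \mathcal{S}^{(1,\delta)}_{k,1}(u) + \Delta \mathcal{I}^{(1,\delta)}_{k,1}(u),
\]
and to compute each of the unitary, symplectic and incomplete coefficients separately at order $O(1/n)$. By Remark~\ref{re:beta_in} the symplectic and incomplete terms are invisible at leading order, so only their $n^{-1}$ coefficients are needed here. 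Throughout I fix the $\beta=1$ values $a=1+\delta$ and $b=(u-1)n$, and I treat $\beta=4$ by the identical computation (with $a=\frac{1}{2}(1+\delta/2)-1$), exactly as in Proposition~\ref{subleadingwignerdelayrmt}.

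For the unitary contribution I start from the $\beta=2$ difference formula~\eqref{transmissionexample}--\eqref{coefficientexample}, evaluate it at size $n-1$ with $a=1+\delta$, divide by $n$, and extract the coefficient of $n^{-1}$ via the three-term expansion~\eqref{subleadingpochhammer}. Two effects now feed into this order: the size shift $n\to n-1$, and the fact that $a=1+\delta$ is no longer the genuine $\beta=2$ value $\delta/2$ of Proposition~\ref{unitarytermsvanishing}. Since the $1/z$ coefficient in~\eqref{subleadingpochhammer} is linear in its shift parameters and $a$ enters those linearly, $\Delta \mathcal{U}_{k,1}^{(1,\delta)}(u)$ emerges as a single sum over $j$ of products $\frac{1}{k}\binom{k}{j}\binom{k}{j-1}$ with powers of $u$ and $(u+1)$, multiplied by a polynomial in $j$ that is at most linear in $\delta$, the same mechanism that kept $F^{(2,\delta)}_{k,1}(u,j)$ linear in $\delta$. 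For the remaining two contributions I insert~\eqref{subleadingpochhammer} into the explicit finite-$n$ expressions~\eqref{Sab} for $S_{i,j}^{a/2,b/2}$ and~\eqref{eq:Iab} for $I_{\J}$ and keep the $O(1/n)$ part; each yields a double binomial sum whose inner sum I collapse by Chu--Vandermonde summation (Example~\ref{ex:chu_sum1}), precisely as in the Laguerre computation of Proposition~\ref{subleadingwignerdelayrmt}.

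Adding the three pieces through~\eqref{transasymptruled} and sorting by powers of $\delta$, I expect the outcome to split into a part proportional to $\delta$ and a $\delta$-independent part. The $\delta$-linear part should reassemble into the bracket $\sum_{j}\binom{k}{j}^2 u^{2j+1} - \sum_j \binom{k}{j}\binom{k}{j+1}u^{2j+2}$ of~\eqref{floorceilingpoly}, now carried by the prefactor $\delta/\beta$; note that this is exactly $\frac{2}{\beta}$ times the $\beta=2$ result of Proposition~\ref{unitarytermsvanishing}, so it is consistent at $\beta=2$ and may, if one prefers, be repackaged through the Vivo identity~\eqref{eq:vivocombidentity}. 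The genuinely new ingredient is the $\delta$-independent remnant, which I expect to collapse to the single term $\left(\frac{2}{\beta}-1\right)\frac{u}{(u+1)^2}\left(\frac{u-1}{u+1}\right)^{2k}$.

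The main obstacle is exactly this last collapse: showing that the $\delta$-independent leftovers of the unitary, symplectic and incomplete sums cancel against one another down to one monomial-like term bearing the universal factor $2/\beta-1$. This requires careful bookkeeping of the competing double binomial sums, several applications of Chu--Vandermonde, and binomial manipulations of the kind used in~\eqref{subleadingsimplesum} and~\eqref{genfnsquarebinoms}; verifying it cleanly is where almost all of the algebra sits, and it is the step I would carry out last and check most carefully.
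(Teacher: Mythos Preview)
Your plan is correct and mirrors the paper's own proof almost step for step: the same decomposition~\eqref{transasymptruled}, the same computation of $\Delta\mathcal{U}^{(1,\delta)}_{k,1}$ by shifting $n\to n-1$ with $a=\delta+1$ in~\eqref{transmissionexample}--\eqref{coefficientexample}, the same reduction of the symplectic double sum via Chu--Vandermonde, and the same direct evaluation of $\Delta\mathcal{I}^{(1,\delta)}_{k,1}$. What you call ``the main obstacle'' is made explicit in the paper as the combinatorial identity~\eqref{finaltransfirstcorr}, which is dispatched by comparing coefficients of $u^{2j}$ and $u^{2j+1}$ using nothing more than Pascal's rule---so the final collapse is in fact less delicate than you anticipate.
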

\begin{proof}
  As previously we shall discuss only the symmetry $\beta=1$. Let us
  set $a=\delta+1$, $b=(u-1)n$ in formula
  (\ref{orthogonaljacobiexample}). 

Following the general strategy
  outlined in Sec.~\ref{sse:exact_results}, we see that the unitary
  contribution in (\ref{uniexpansions}) is 
\begin{equation*}
  \Delta \mathcal{U}^{(1,\delta)}_{k,1}(u) =
  \frac{1}{k}\sum_{j=0}^{k}
\binom{k}{j}\binom{k}{j-1}
\frac{u^{2k-2j+1}}{(u+1)^{2k+2}}F^{(1,\delta)}_{k,1}(u,j),
\end{equation*}
where 
\[
F^{(1,\delta)}_{k,1}(u,j) = (1+k-j)(k+\delta-j-1)+uk(1-\delta)
+u^{2}j(\delta-j).
\]
Simplifying, we find
\begin{equation}
\label{firstcorunitrans1}
\Delta \mathcal{U}^{(1,\delta)}_{k,1}(u) =
\frac{(\delta-1)}{(u+1)^{2k+2}}
\left(\sum_{j=0}^{k}\binom{k}{j}^{2}u^{2j+1}
-\sum_{j=0}^{k-1}\binom{k}{j}\binom{k}{j+1}u^{2j+2}\right),
\end{equation}
which is obtained using the same method which led from
(\ref{easysimptra}) to the equation (\ref{floorceilingpoly}), provided
that the we substitute $a=\delta+1$ and $n \to n-1$ in
(\ref{coefficientexample}). 

The symplectic contribution at order $p=1$ is obtained by replacing
(\ref{sijasymptex}) into (\ref{jacsympexpansion}).  Using the
Chu-Vandermonde summation of Lemma~\ref{chupropjacobi}, we obtain
\begin{equation}
\label{subleadingdoublesum}
\begin{split}
\mathcal{S}^{(1,\delta)}_{k,1}(u) &=\sum_{j=1}^{k-1}
\sum_{i=1}^{j}\binom{k}{j-i}\binom{k}{j+i}\frac{u^{2j}}{(1+u)^{2k}}\\
& =\frac{1}{2}\sum_{j=1}^{k-1}\left(\binom{2k}{2j}
-\binom{k}{j}^{2}\right)\frac{u^{2j}}{(1+u)^{2k}}.
\end{split}
\end{equation}

Finally, we must examine the term $I_{\J}(k,n)$ in
(\ref{orthogonaljacobiexample}).  Straightforward application of
(\ref{iknex})
gives
\begin{equation}
\label{eq:DI_Jac_1}
\Delta \mathcal{I}^{(1,\delta)}_{k,1}(u) =
2u(u-1)^{2k}(u+1)^{-2k-2}.
\end{equation}
Inserting (\ref{subleadingdoublesum}), (\ref{firstcorunitrans1}) and
\eqref{eq:DI_Jac_1} into (\ref{transasymptruled}) shows that equation
(\ref{jacobisubleadingchaos}) is equivalent to proving the
combinatorial identity
\begin{equation}
\begin{split}
\label{finaltransfirstcorr}
u(u-1)^{2k} & =
(u+1)^{2k+2}\Delta\mathcal{S}^{(1,\delta)}_{k,1}(u)\\
& \quad  +\left(\sum_{j=0}^{k}\binom{k}{j}^{2}u^{2j+1}
  -\sum_{j=0}^{k}\binom{k}{j}\binom{k}{j+1}u^{2j+2}\right)
\end{split}
\end{equation}
where $\Delta\mathcal{S}^{(\beta,\delta)}_{k,1}(u) := 
\mathcal{S}^{(1,\delta)}_{k,1}(u)-\mathcal{S}^{(1,\delta)}_{k+1,1}(u)$ 
is given in~(\ref{subleadingdoublesum}). 

Equation~\eqref{finaltransfirstcorr} can be proved simply by comparing
the coefficients of the monomials $u^{2j}$ ($u^{2j+1}$) in the
polynomials on both sides of the equation. It turns out that the
right-hand side can be reduced to the left-hand side by repeated
applications of Pascal's rule
\[
\binom{k}{j} + \binom{k}{j+1} = \binom{k+1}{j+1}.
\]
\end{proof}
Using the identity~\eqref{eq:vivocombidentity}, we can write the
statement of Proposition \ref{pr:ntl_teig} in terms of the leading
order~\eqref{marcelasymptresult}. Thus we have,
\begin{corollary}
\label{co:fc_te_1}
  The first correction for the differences of moments can be written
  in terms of formula~\eqref{marcelasymptresult} as
\begin{equation}
\label{eq:fc_te}
\Delta \mathcal{T}^{(\beta,\delta)}_{k,1}(u)=\frac{u}{(u+1)^{2}}
\left(\frac{u-1}{u+1}\right)^{2k}\left(\frac{2}{\beta}-1
+\frac{\delta(u-1)}{\beta u}\mathcal{T}^{(\beta,\delta)}_{k+1,0}(-u)\right)
\end{equation}
\end{corollary}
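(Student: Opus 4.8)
The plan is to read off Corollary~\ref{co:fc_te_1} directly from Proposition~\ref{pr:ntl_teig} by recasting the $\delta$-dependent piece in terms of the leading order coefficient~\eqref{marcelasymptresult}. The first summand on the right-hand side of~\eqref{jacobisubleadingchaos}, namely $\left(\tfrac{2}{\beta}-1\right)\tfrac{u}{(u+1)^2}\bigl(\tfrac{u-1}{u+1}\bigr)^{2k}$, already coincides with the $\left(\tfrac{2}{\beta}-1\right)$-term in the target formula~\eqref{eq:fc_te}. Hence everything reduces to verifying that the polynomial factor multiplying $\delta/\beta$ in~\eqref{jacobisubleadingchaos} can be rewritten as $\tfrac{(u-1)^{2k+1}}{(u+1)^{2k+2}}\,\mathcal{T}^{(\beta,\delta)}_{k+1,0}(-u)$, since clearing the prefactor in the $\delta/\beta$-term of~\eqref{eq:fc_te} produces exactly this expression.

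To establish that equivalence I would first invoke the identity already proved in the course of Proposition~\ref{unitarytermsvanishing}, namely the second line of~\eqref{floorceilingpoly}, which gives
\[
\sum_{j=0}^{k}\binom{k}{j}^{2}u^{2j+1}-\sum_{j=0}^{k}\binom{k}{j}\binom{k}{j+1}u^{2j+2}
= -\sum_{j=0}^{2k}\binom{k}{\lfloor j/2 \rfloor}\binom{k}{\lceil j/2 \rceil}(-u)^{j+1}.
\]
The right-hand side is, up to an overall factor, the floor/ceiling sum appearing on the left of the combinatorial identity~\eqref{eq:vivocombidentity}. Since the leading order coefficient is independent of $\delta$ (Propositions~\ref{pro:leading_order} and~\ref{unitarytermsvanishing}), the right-hand side of~\eqref{eq:vivocombidentity} is precisely $\mathcal{T}^{(\beta,\delta)}_{k+1,0}(u)$ as given by~\eqref{marcelasymptresult} with index $k+1$.

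The key manoeuvre is then the substitution $u \to -u$ in~\eqref{eq:vivocombidentity}, which turns the floor/ceiling sum with argument $(-u)^{j}$ into an expression proportional to $\mathcal{T}^{(\beta,\delta)}_{k+1,0}(-u)$; a short rearrangement combined with the previous display then delivers $P(u)=(u-1)^{2k+1}\mathcal{T}^{(\beta,\delta)}_{k+1,0}(-u)$, where $P(u)$ denotes the binomial polynomial above. I expect the main obstacle to be purely bookkeeping: one must track several sign factors carefully — the explicit minus in~\eqref{floorceilingpoly}, the shift $(-u)^{j+1}=-u\,(-u)^{j}$, the ratio $u/(-u)=-1$ coming from the substitution, and the odd-power relation $(1-u)^{2k+1}=-(u-1)^{2k+1}$ — which together conspire to produce the \emph{positive} prefactor $(u-1)^{2k+1}$ in front of $\mathcal{T}^{(\beta,\delta)}_{k+1,0}(-u)$. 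Assembling the $\left(\tfrac{2}{\beta}-1\right)$-term with this reorganised $\delta/\beta$-term then yields~\eqref{eq:fc_te} exactly.
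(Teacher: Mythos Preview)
Your proposal is correct and follows essentially the same route as the paper: the paper deduces the corollary directly from Proposition~\ref{pr:ntl_teig} by invoking the combinatorial identity~\eqref{eq:vivocombidentity}, and your argument spells out precisely this step, passing through the floor/ceiling rewriting~\eqref{floorceilingpoly} and then applying~\eqref{eq:vivocombidentity} with $u\to -u$. The sign bookkeeping you flag is accurate and the resulting identity $P(u)=(u-1)^{2k+1}\mathcal{T}^{(\beta,\delta)}_{k+1,0}(-u)$ is exactly what is needed.
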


From Corollary~\ref{co:fc_te_1} we can compute the generating function
of~\eqref{eq:fc_te}.
\begin{corollary}
  \label{co:ntogfte}
  The next to leading order generating function of the moments of the
  transmission eigenvalues is
  \begin{equation}
  \label{eq:ntogfte2}
  \begin{split}
\mathcal T^{(\beta,\delta)}_{1}(u,s) &= \left(\frac{2}{\beta}-1\right)
  \left(\frac{us}{(s-1)\bigl((u+1)^{2}-s(u-1)^{2}\bigr)}\right)\\
    &\quad +\frac{\delta}{2\beta}\left(\frac{(u+1)}{\sqrt{1-s}
   \sqrt{(u+1)^{2}-s(u-1)^{2}}}+\frac{1}{s-1}\right).
\end{split}
\end{equation}
\end{corollary}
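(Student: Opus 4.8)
The plan is to assemble $\mathcal T^{(\beta,\delta)}_1(u,s)$ from the difference generating function $\Delta\mathcal T^{(\beta,\delta)}_1(u,s)$ by means of the functional relationship \eqref{diffscorrespondence}, which at order $p=1$ reads $\mathcal T^{(\beta,\delta)}_1(u,s) = \frac{s}{s-1}\bigl(\Delta\mathcal T^{(\beta,\delta)}_1(u,s) - \mathcal T^{(\beta,\delta)}_{1,1}(u)\bigr)$. The first moment $\mathcal T^{(\beta,\delta)}_{1,1}(u) = \frac{u}{(u+1)^2}\bigl(1 - \frac{2}{\beta} - \frac{\delta}{\beta}\bigr)$ is read off directly from the Aomoto expansion \eqref{eq:aom_ex} at $p=1$, so the entire task reduces to computing $\Delta\mathcal T^{(\beta,\delta)}_1(u,s)$ in closed form and then carrying out a single algebraic simplification.

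First I would substitute Corollary~\ref{co:fc_te_1}, i.e. formula \eqref{eq:fc_te}, into the definition \eqref{gendiffs} of $\Delta\mathcal T^{(\beta,\delta)}_1(u,s)$. This splits the sum into two pieces matching the two terms of \eqref{eq:fc_te}. The $\bigl(\frac{2}{\beta}-1\bigr)$ piece carries a factor $\bigl(\frac{u-1}{u+1}\bigr)^{2k}$ and is therefore a plain geometric series in $s$, summing to a rational function with denominator $(u+1)^2-s(u-1)^2$. The $\frac{\delta}{\beta}$ piece carries $\bigl(\frac{u-1}{u+1}\bigr)^{2k}\mathcal T^{(\beta,\delta)}_{k+1,0}(-u)$; after the change of variable $t = (u-1)^2 s/(u+1)^2$ and a shift of the summation index, I would recognize this as $\frac{1}{t}$ times the leading-order generating function $\mathcal T^{(\beta,\delta)}_0(-u,t)$ supplied by Corollary~\ref{co:lead_gen_fs}, equation \eqref{eq:lead_gen_fun_te}, with its $k=1$ term $\mathcal T^{(\beta,\delta)}_{1,0}(-u)=\frac{u}{u-1}$ removed.

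The pleasant key step is that under this substitution the square-root argument in \eqref{eq:lead_gen_fun_te} collapses: one finds $1+\frac{4(-u)t}{(1-u)^2(1-t)} = \frac{(u+1)^2(1-s)}{(u+1)^2-s(u-1)^2}$, so that $\mathcal T^{(\beta,\delta)}_0(-u,t)$ becomes the clean expression $\frac{1-u}{2}\bigl(\frac{(u+1)\sqrt{1-s}}{\sqrt{(u+1)^2-s(u-1)^2}}-1\bigr)$. This delivers $\Delta\mathcal T^{(\beta,\delta)}_1(u,s)$ as the sum of a rational term proportional to $\frac{2}{\beta}-1$ and a term containing the single square root $\sqrt{1-s}\,\sqrt{(u+1)^2-s(u-1)^2}$ proportional to $\frac{\delta}{\beta}$, together with a few elementary constants.

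Finally I would apply the factor $\frac{s}{s-1}$ and subtract $\mathcal T^{(\beta,\delta)}_{1,1}(u)$. I expect the main obstacle to be the rational bookkeeping rather than the square root: the $\frac{\delta u}{\beta(u+1)^2}$ constants coming from the $\delta$-piece and from $\mathcal T^{(\beta,\delta)}_{1,1}(u)$ must cancel, the $\bigl(\frac{2}{\beta}-1\bigr)$ rational remnant must recombine over the common denominator $(u+1)^2\bigl((u+1)^2-s(u-1)^2\bigr)$ into $\frac{u}{(u+1)^2-s(u-1)^2}$, and the identity $\frac{\sqrt{1-s}}{s-1}=-\frac{1}{\sqrt{1-s}}$ must be invoked to turn the square-root term into the stated $\frac{u+1}{\sqrt{1-s}\sqrt{(u+1)^2-s(u-1)^2}}$. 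Tracking these cancellations carefully then yields exactly \eqref{eq:ntogfte2}.
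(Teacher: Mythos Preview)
Your proposal is correct and follows essentially the same route as the paper: compute $\Delta\mathcal T^{(\beta,\delta)}_1(u,s)$ from Corollary~\ref{co:fc_te_1} by summing the geometric piece and recognizing the $\delta$-piece as $\mathcal T^{(\beta,\delta)}_0(-u,t)$ with $t=s(u-1)^2/(u+1)^2$, then apply \eqref{diffscorrespondence}. Your writeup is in fact more explicit than the paper's, which compresses the square-root simplification and the rational bookkeeping into the phrase ``and some algebra''.
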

\begin{proof}
 The generating function of the factor
 $\mathcal{T}_{k+1,0}^{(\beta,\delta)}(-u)$ in
 Eq.~\eqref{eq:fc_te} is easily obtained
 from~\eqref{eq:lead_gen_fun_te}. Then,  the substitution
 \begin{equation*}
    s \to s \left(\frac{u-1}{u+1}\right)^2
 \end{equation*}
and some algebra lead to
  \begin{equation}
    \label{eq:ntogfte}
    \begin{split}
     \Delta \mathcal{T}^{(\beta,\delta)}_{1}(u,s) &=
     \left(\frac{2}{\beta}-1\right)
          \left(\frac{u}{(u+1)^{2}-s(u-1)^{2}}-\frac{u}{(u+1)^{2}}\right)\\
             &\quad -\frac{\delta}{2\beta}\left(\frac{(u+1)
             \sqrt{1-s}}{s\sqrt{(u+1)^{2}-s(u-1)^{2}}}+\frac{2u}{(u+1)^{2}}
           -\frac{1}{s}\right).
    \end{split}
  \end{equation}
Finally, formula~\eqref{eq:ntogfte2} follows from~\eqref{diffscorrespondence}.
\end{proof}

\subsection{Second Corrections}
\label{se:scbeta1}

As we go higher in the order of the expansions~\eqref{eq:asy_ex_tra}
and \eqref{eq:asy_ex_td} the formulae for the coefficients become very
cumbersome.  Thus, the results are better expressed only in terms of
generating functions.  For the sake of simplicity, we shall only
discuss the symmetry classes with $\beta=1$.

\subsubsection{Negative Moments in the Laguerre Orthogonal Ensemble}
\begin{proposition}
\label{delaysubsub1prop}
The generating function of the second corrections to the moments of
the proper delay times is 
\begin{equation}
\label{delaysubsub1genl1}
\begin{split}
\mathcal D^{(1)}_2(w,s) &= s\frac{(w+1)s^{2}-(2w^{2}-3w+2)s
+(w-1)^{2}(w+1)}{(s^{2}-2(w+1)s+(w-1)^{2})^{5/2}} \\
&\quad +s\frac{(w-1)s+1-w^{2}}{(s^{2}-2(w+1)s+(w-1)^{2})^{2}} 
\end{split}
\end{equation}
\end{proposition}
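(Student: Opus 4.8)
The plan is to compute the explicit coefficient $\mathcal{D}^{(1)}_{k,2}(w)$ from the decomposition~\eqref{delayasymptruled} into unitary, symplectic and incomplete contributions, and then to identify~\eqref{delaysubsub1genl1} as its generating series by the recurrence argument of Corollary~\ref{lemma1wiggen}. For the unitary part $\mathcal{U}^{(1)}_{-k,2}(w)$ I would take the $\beta=2$ formula~\eqref{wignerexample}, insert the physical value $b=(w-1)n+1$ together with the surrogate dimension $n-1$, expand using the three-term Pochhammer expansion~\eqref{subleadingpochhammer}, and read off the coefficient of $n^{-2}$ in $n^{k-1}M^{(2)}_{\Lag}(-k,n-1)$. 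For the symplectic part $\mathcal{S}^{(1)}_{-k,2}(w)$ I would substitute the order-$n^{-2}$ term of $S_{i,j}^{b/2}(-k,(n-1)/2)$ from~\eqref{Sb} into the double sum of~\eqref{orthogonalwishartexample} and collapse the inner sum by Chu--Vandermonde (Lemma~\ref{chupropjacobi}), exactly as in the first-order step~\eqref{midlinederiv1}; the incomplete part $\mathcal{I}^{(1)}_{-k,2}(w)$ comes from the order-$n^{-2}$ term of $I_{\Lag}(-k,n)$ in~\eqref{Ib}.

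I would then combine the three series through~\eqref{delayasymptruled} and rewrite the result as a fixed, $k$-independent combination of Jacobi polynomials $P^{(\alpha,\beta)}_{j}(\tilde w)$, using the passage from hypergeometric sums to Jacobi polynomials already employed in Proposition~\ref{delaysubleadingtwoprop} and Corollary~\ref{co:gf_td_1}. To close the argument I would follow Corollary~\ref{lemma1wiggen}: differentiate~\eqref{delaysubsub1genl1} in $s$, clear the radical $\sqrt{s^{2}-2(w+1)s+(w-1)^{2}}$ to obtain a linear ODE with polynomial coefficients, and convert it into a finite-term recurrence for the Taylor coefficients $\bigl[s^{k}\bigr]\mathcal{D}^{(1)}_2(w,s)$; I would then check that the Jacobi-polynomial form of $\mathcal{D}^{(1)}_{k,2}(w)$ satisfies the same recurrence by the mechanical use of the three-term recurrence~\eqref{threetermrec} and the connection formulae~\eqref{connection1} described in Remark~\ref{linindepremark}, matching a few low-order coefficients to fix the initial data. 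Since~\eqref{delaysubsub1genl1} carries both a $5/2$-power and an integer-power denominator, it is convenient to split it along the same lines as $\mathcal{U}-\mathcal{S}+\mathcal{I}$ and to verify the algebraic and the rational pieces separately.

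The main obstacle will be the unitary contribution. Unlike the clean $\beta=2$ situation, here the parameter $b$ scales as $(w-1)n$ while the surrogate dimension is only $n-1$, and in addition the physical $n$-independent shift $+1$ in $b$ (the $2/\beta-1$ term) feeds into the subleading orders; this asymmetry mixes the $p=0,1,2$ data of the unshifted $\beta=2$ expansion, which is precisely why $\mathcal{U}^{(1)}_{-k,1}(w)$ in~\eqref{unideriv1} is nonzero even though $\mathcal{D}^{(2)}_{k,1}(w)=0$. Carrying this mixing correctly to order $n^{-2}$ is what turns the simple $\beta=2$ generating function~\eqref{beta2p2delaylemma} into the two-term expression~\eqref{delaysubsub1genl1}. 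The second-order Chu--Vandermonde collapse of the symplectic double sum is also more delicate than at first order, since the order-$n^{-2}$ term of $S_{i,j}^{b/2}$ is genuinely quadratic in the summation indices and each monomial must be summed separately before the pieces can be recombined.
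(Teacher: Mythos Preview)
Your plan is essentially the paper's proof: compute $\mathcal{U}^{(1)}_{-k,2}$, $\mathcal{S}^{(1)}_{-k,2}$, $\mathcal{I}^{(1)}_{-k,2}$ separately, recast the result as a short combination of Jacobi polynomials $P^{(\alpha,\beta)}_{j}(\tilde w)$, and then match this against the Taylor coefficients of~\eqref{delaysubsub1genl1}, treating the rational and the $5/2$-power pieces separately. Your diagnosis of the difficulties (the $n\to n-1$ and $b\to (w-1)n+1$ shifts in the unitary part, the quadratic-in-$(i,j)$ kernel in the symplectic part) is exactly what drives the computation.

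One point where the paper is more economical than your outline: for the algebraic piece with denominator $(s^{2}-2(w+1)s+(w-1)^{2})^{5/2}$ the paper does \emph{not} derive a fresh ODE/recurrence. Instead it observes that this piece is a polynomial in $s$ times $\mathcal{D}^{(2)}_{2}(w,s)/(ws^{2})$ from Corollary~\ref{lemma1wiggen}, so its $k$-th Taylor coefficient is a fixed linear combination of $\mathcal{D}^{(2)}_{k-1,2}$, $\mathcal{D}^{(2)}_{k,2}$, $\mathcal{D}^{(2)}_{k+1,2}$, each already known in closed Jacobi form from Proposition~\ref{delaysubleadingtwoprop}. After one application of~\eqref{threetermrec} this yields the coefficients directly, and the final equality with the RMT side is then the mechanical steps (1)--(4) of Remark~\ref{linindepremark}, not a new recurrence check. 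Your ODE route would also work, but reusing the $\beta=2$ result is shorter. (Minor point: for the Laguerre double sum the relevant index rearrangement is Lemma~\ref{chuproplaguerre}, not Lemma~\ref{chupropjacobi}.)
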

\begin{proof}
 The proof is split into two parts: first we compute the third
 coefficients in the asymptotic expansion of
 formula~\eqref{orthogonalwishartexample}; then we show that they
 coincide with the Taylor coefficients of the right-hand side of
 Eq.~\eqref{delaysubsub1genl1}.
   
 We start from the coefficients of the unitary contribution
 $\mathcal{U}_{k,2}^{(1)}(w)$. Inserting $n \to n-1$ and then
 $b=1+(w-1)n$ in (\ref{wignerexample}) we find
\begin{equation}
\label{eq:usc} 
\begin{split}
\mathcal{U}_{k,2}^{(1)}(w) & = \frac{w(k+1)}{(w-1)^{k+3}}
\biggl((w-1)\bigl(P_{k-1}^{(1,1)}(\tilde{w})-(k/2)
P_{k}^{(0,1)}(\tilde{w})\bigr) \\
& \quad  -\frac{2+k}{12}P_{k-1}^{(1,2)}(\tilde{w})
+\frac{7(k+2)}{12}P_{k-1}^{(2,1)}(\tilde{w})\biggr).
\end{split}
\end{equation}
The derivation of this formula is almost identical to the discussion
of Eqs.~\eqref{polyexample2k_1} and (\ref{polyexampled2k}) in the
proof of Proposition~\ref{delaysubleadingtwoprop}.

The term involving the asymptotics of $I_{\mathrm{L}_{b}}(-k,n)$ is
the easiest to compute; it is an application of the
expansion~\eqref{subleadingpochhammer} to~\eqref{Ib}. Its
contribution is
\begin{equation}
\label{eq:Isc}
\mathcal{I}_{k,2}^{(1)}(w) 
=-\frac{k}{2(w-1)^{2k+1}}\sum_{j=0}^{k}\binom{2k+2}{2j+1}w^{j}.
\end{equation}

The coefficient of the symplectic contribution leads to the double sum
\begin{equation}
\label{sympdelaysubsub1polydef}
\mathcal{S}_{k,2}^{(1)}(w)
= \sum_{j=1}^{\infty}\sum_{i=0}^{\infty}\binom{k+i-1}{k-1}
\binom{k+2j+i-1}{k-1}w^{-k-i-j-1}G^{(1)}_{k,2}(w,i,j),
\end{equation}
where 
\[
G^{(1)}_{k,2}(w,i,j) = (k+i+2j-1)(k+i+2j)/2-j(j-1) 
-w\bigl(j^{2}+i(4j+i+1)/2\bigr). 
\]
Now consider, for example, the contribution from the coefficient of
$w$ in $G^{(1)}_{k,2}(w,i,j)$. We have
\begin{multline*}
\sum_{j=1}^{\infty}\sum_{i=0}^{\infty}\binom{k+i-1}{k-1}
\binom{k+2j+i-1}{k-1}w^{-k-i-j}\bigl(j^{2}+i(4j+i+1)/2\bigr)\\
=\sum_{j=1}^{\infty}\sum_{i=1}^{j}
\binom{k+j-i-1}{k-1}\binom{k+j+i-1}{k-1}w^{-k-j}\bigl(i^{2}+(j-i)
(4i+(j-i)+1)/2\bigr).
\end{multline*}
The finite sum in this equation can be computed exactly in terms of
binomial coefficients using Chu-Vandermonde's summation techniques
discussed in Appendix \ref{sse:chu_vandermonde}. The remaining
infinite series is a combination of hypergeometric sums which may be
evaluated in terms of Jacobi polynomials using the identity
(\ref{wignerhypergeom}). Repeating this procedure and inserting the
contributions~\eqref{eq:usc}, \eqref{eq:Isc} and
$\mathcal{S}_{k,2}^{(1)}(w)$ into (\ref{delayasymptruled}) we arrive at 
\begin{equation}
\label{delaysubsubleading1final}
\begin{split}
  2(w-1)^{k+2}\mathcal{D}_{k,2}^{(1)}(w) &= -k(w-1)
  P^{(1,0)}_{k-1}(\tilde{w})-k(k+1)\bigl(P^{(1,0)}_{k}(\tilde{w})
  +wP^{(0,1)}_{k}(\tilde{w})\bigr)  \\
  &\quad +(w/6)(w-1)^{-1}(k+1)(k+2)\bigl(13P^{(2,1)}_{k-1}(\tilde{w})
  -P^{(1,2)}_{k-1}(\tilde{w})\bigr)  \\
  &\quad -\frac{k}{2(w-1)^{k-1}}\sum_{j=0}^{k}\binom{2k+2}{2j+1}w^{j}\\
  & \quad -kwP^{(0,2)}_{k-1}(\tilde{w}) -\bigl(1-3w(k+1)\bigr)
   P^{(1,1)}_{k-1}(\tilde{w}).
\end{split}
\end{equation}

We need to show that the Taylor coefficients of the right-hand side of
Eq.~\eqref{delaysubsub1genl1} are given by
$\mathcal{D}_{k,2}^{(1)}(w)$ in~\eqref{delaysubsubleading1final}.

First notice that
\begin{equation}
\label{eq:alth}
  -\sum_{k=0}^{\infty}\frac{k}{4(w-1)^{2k+1}}
\sum_{j=0}^{k}\binom{2k+2}{2j+1}w^{j}s^{k} 
= s\frac{(w-1)s+1-w^{2}}{(s^{2}-2(w+1)s+(w-1)^{2})^{2}},
\end{equation}
which follows from elementary manipulations involving the binomial
theorem and geometric series.  Then, a straightforward application of
Corollary~\ref{lemma1wiggen} gives
\begin{multline}
\label{wiggen2ndcorrl3}
\bigl[s^{k}\bigr]\frac{(w+1)s^{3}-(2w^{2}-3w+2)s^{2}+(w-1)^{2}(w+1)s}%
{(s^{2}-2(w+1)s+(w-1)^{2})^{5/2}}\\
=\frac{(w+1)}{w}\mathcal{D}_{k-1,2}^{(2)}(w) 
-\frac{(2w^{2}-3w+2)}{w}\mathcal{D}_{k,2}^{(2)}(w)
+\frac{(w-1)^{2}(w+1)}{w}\mathcal{D}_{k+1,2}^{(2)}(w).
\end{multline}
Insert the explicit formula~\eqref{delaysubleadingtwo} for
$\mathcal{D}_{k,2}^{(2)}(w)$ into the right-hand side
of Eq.~\eqref{wiggen2ndcorrl3} and apply the three term recurrence
realation~\eqref{threetermrec} to remove $P^{(2,2)}_{k-3}(\tilde{w})$.
We obtain
\begin{equation*}
\frac{(w+1)P^{(2,2)}_{k-1}(\tilde{w})(k+3)}{4(w-1)^{k+2}}
+\frac{(8w+7wk-3w^{2}-3)(k+1)P^{(2,2)}_{k-2}(\tilde{w})}{12(w-1)^{k+3}}.
\end{equation*}
Combining this expression with (\ref{eq:alth}) leads to
\begin{equation}
\label{eq:tcgf}
\begin{split}
\bigl[s^{k}\bigr] \mathcal D_{2}^{(1)}(w,s) &=
\frac{(w+1)P^{(2,2)}_{k-1}(\tilde{w})(k+3)}%
{4(w-1)^{k+2}}-\frac{k}{4(w-1)^{2k+1}}\sum_{j=0}^{k}\binom{2k+2}{2j+1}w^{j}
\\ 
& \quad +\frac{(8w+7wk-3w^{2}-3)(k+1)P^{(2,2)}_{k-2}(\tilde{w})}%
{12(w-1)^{k+3}}.
\end{split}
\end{equation}

The demonstration that coefficients~\eqref{delaysubsubleading1final}
are equivalent to the right-hand side of Eq.~\eqref{eq:tcgf}
involves cumbersome algebraic manipulations, but it is otherwise
straightforward.  It can be carried out systematically following the
steps (1)--(4) of the procedure outlined at the end of the proof of
Proposition~\ref{delaysubleadingtwoprop} and further discussed in
Remark~\ref{linindepremark}.
\end{proof}

\subsubsection{Moments in the Jacobi Orthogonal Ensemble}

As for the second correction of the moments of the proper delay times
the results for the transmission eigenvalues are better expressed only
in terms of generating functions.    For simplicity, we first discuss
the moments of the transmission  for the classical Dyson's ensemble,
\textit{i.e.}  when $\delta=0$.

\begin{proposition}
\label{transsubsub1prop}
The second correction of the moments of the transmission eigenvalues
for $\beta=1$ and $\delta=0$ is
\begin{equation}
\label{transgensubsub1diff}
\mathcal T^{(1,0)}_{2}(u,s) = -\frac{us\bigl(s^{2}(u-1)^{2}+3us
-(u+1)^{2}\bigr)}{\bigl((u+1)^{2}-s(u-1)^{2}\bigr)^{5/2}(1-s)^{3/2}}.
\end{equation}
\end{proposition}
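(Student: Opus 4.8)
The plan is to follow the same three-step architecture used in Propositions~\ref{unitarysubsubbeta2prop} and~\ref{delaysubsub1prop}: extract the second-order coefficients from the finite-$n$ formula~\eqref{orthogonaljacobiexample}, assemble them into a fixed combination of Jacobi polynomials, and then match this combination against the Taylor coefficients of the proposed generating function~\eqref{transgensubsub1diff}. Since our techniques are best adapted to the differences of consecutive moments, I would first determine $\Delta\mathcal{T}^{(1,0)}_{k,2}(u)$ and recover $\mathcal{T}^{(1,0)}_{2}(u,s)$ only at the end, through the functional relation~\eqref{diffscorrespondence}, using that $\mathcal{T}^{(1,0)}_{1,2}(u)=u/(u+1)^{3}$ by~\eqref{eq:aom_ex}. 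Throughout I set $a=\delta+1=1$ and $b=(u-1)n$, exactly as in the proof of Proposition~\ref{pr:ntl_teig}.

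First I would compute the three contributions in~\eqref{transasymptruled} at order $p=2$. The unitary piece $\Delta\mathcal{U}^{(1,0)}_{k,2}(u)$ arises from expanding the $\beta=2$ coefficient~\eqref{coefficientexample} with $a=1$, $b=(u-1)n$ and the shift $n\to n-1$, keeping the $O(n^{-2})$ term via~\eqref{subleadingpochhammer}; as in Proposition~\ref{unitarysubsubbeta2prop}, the ratios of consecutive summands are rational in the summation index, so each resulting hypergeometric sum collapses to a Jacobi polynomial evaluated at $\tilde u=(u^{2}+1)/(u^{2}-1)$ through~\eqref{jacobihypergeom} and~\eqref{eq:u_tilde}. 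The incomplete contribution $\Delta\mathcal{I}^{(1,0)}_{k,2}(u)$ is the easiest, obtained by pushing the expansion of $I_{\J}(k,n)$ one order beyond~\eqref{iknex}. The genuinely laborious term is the symplectic contribution $\Delta\mathcal{S}^{(1,0)}_{k,2}(u)$, whose second-order part produces a \emph{double} sum over the indices $i$ and $j$ in~\eqref{jacsympexpansion}; here I would evaluate the inner finite sum in closed form by the Chu--Vandermonde summations of Lemma~\ref{chupropjacobi} and Appendix~\ref{sse:chu_vandermonde}, exactly as was done for~\eqref{sympdelaysubsub1polydef} in Proposition~\ref{delaysubsub1prop}, and then rewrite the remaining infinite hypergeometric series in terms of Jacobi polynomials via~\eqref{jacobihypergeom}.

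Combining the three pieces through~\eqref{transasymptruled} would express $\Delta\mathcal{T}^{(1,0)}_{k,2}(u)$ as a linear combination of Jacobi polynomials $P^{(\alpha,\beta)}_{j}(\tilde u)$ of shifted degrees, with coefficients rational in $u$. In parallel I would compute the Taylor coefficients $\bigl[s^{k}\bigr]\Delta\mathcal{T}^{(1,0)}_{2}(u,s)$ of the algebraic function obtained from~\eqref{transgensubsub1diff} through~\eqref{diffscorrespondence}. Because this function shares the denominator structure $(1-s)^{-3/2}\bigl((u+1)^{2}-s(u-1)^{2}\bigr)^{-5/2}$ with the $\beta=2$ result~\eqref{eq:nd2te} and with the generating function $f(s;u)$ of Lemma~\ref{unitarygenfntrans}, I expect to express its coefficients in terms of those of Lemma~\ref{unitarygenfntrans} plus an explicit algebraic remainder; alternatively, differentiating~\eqref{transgensubsub1diff} once in $s$ yields a first-order linear ODE with polynomial coefficients, which translates into a three-term recurrence for the $\bigl[s^{k}\bigr]$, precisely the mechanism used in Corollary~\ref{lemma1wiggen}. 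Matching the two representations then reduces to an identity among Jacobi polynomials at $\tilde u$, to be verified by the systematic procedure of steps~(1)--(4) in Proposition~\ref{delaysubleadingtwoprop} and Remark~\ref{linindepremark}: expand every polynomial in a single basis via the connection formulae~\eqref{connection1}, reduce to $P^{(\alpha,\beta)}_{k}(\tilde u)$ and $P^{(\alpha,\beta)}_{k-1}(\tilde u)$ using~\eqref{threetermrec}, and check that the two scalar coefficients vanish identically.

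I expect the main obstacle to be the symplectic contribution: organizing the double sum so that the Chu--Vandermonde step applies cleanly, and then keeping track of the many Jacobi polynomials of shifted degrees and parameters that it generates when assembled with the unitary and incomplete pieces. The concluding identity among Jacobi polynomials, while algebraically heavy, is entirely routine and is most safely discharged with a computer algebra package, as noted in Remark~\ref{linindepremark}.
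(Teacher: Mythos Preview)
Your proposal is correct and follows essentially the same architecture as the paper's own proof. Two execution details worth noting: the paper observes that $\Delta\mathcal{I}^{(1,0)}_{k,2}(u)=0$ outright (by the same mechanism that made $\Delta\mathcal{T}^{(2,0)}_{k,1}(u)=0$ in Proposition~\ref{unitarytermsvanishing}), so that term drops out entirely; and to organise the heavy algebra the paper splits $(u+1)^{2k+3}\Delta\mathcal{T}^{(1,0)}_{k,2}(u)$ into its odd and even monomials in $u$, handling each half separately before the final Jacobi-polynomial matching. One small slip: in the Jacobi (transmission) setting all the sums in~\eqref{jacsympexpansion} are \emph{finite}, so after the Chu--Vandermonde step there is no ``remaining infinite hypergeometric series''---you were presumably thinking of the Laguerre case in Proposition~\ref{delaysubsub1prop}.
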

\begin{proof}
  The proof follows the usual strategy: firstly, we compute the
  coefficient at $p=2$ of the expansion~\eqref{eq:asy_ex_tra}; then we
  show that they coincide with the Taylor coefficients
  of~\eqref{transgensubsub1diff}.  Furthermore, rather 
  than computing~\eqref{transgensubsub1diff} directly we will look at
\begin{equation}
\label{eq:gen_diff}
  \Delta \mathcal T^{(1,0)}_{2}(u,s) =
  \frac{u\bigl(s^{2}(u-1)^{2}+3us-(u+1)^{2}\bigr)}%
  {\bigl((u+1)^{2}-s(u-1)^{2}\bigr)^{5/2}\sqrt{1-s}}
  +\frac{u}{(u+1)^{3}}.
\end{equation}

According to~\eqref{transasymptruled}, to begin with we split $\Delta
\mathcal{T}^{(1,0)}_{k,2}(u)$ into the contributions of $\Delta
\mathcal{U}^{(1,0)}_{k,2}(u)$, $\Delta \mathcal{S}^{(1,0)}_{k,2}(u)$
and $\Delta \mathcal{I}^{(1,0)}_{k,2}(u)$. First we note that
straightforward calculation analogous to the proof that
$\Delta \mathcal{T}^{(2,0)}_{k,1} (u) = 0$ in
Proposition~\ref{unitarytermsvanishing} shows that $\Delta
\mathcal{I}^{(1,0)}_{k,2}(u)=0$; therefore, we need not worry about
this term.

The contribution of the symplectic term involving the double summation
in (\ref{jacsympexpansion}) can be written as
\begin{equation}
\label{symptranssubsub1polydef}
\mathcal{S}_{k,2}^{(1,0)}(u) = \frac{1}{(u+1)^{2k+1}}
\sum_{j=1}^{\lfloor k/2 \rfloor}
\sum_{i=0}^{k-2j}\binom{k}{i}\binom{k}{i+2j}u^{2k-2i-2j-1}
F^{(1)}_{k,2}(u,i,j),
\end{equation}
where 
\[
F^{(1)}_{k,2}(u,i,j) = (j+i^{2}+4ij+2j^{2}-2ki+k^{2}-4kj) 
+2ukj+u^{2}(j-i^{2}-2j^{2}-4ij) .
\]
For example, taking the coefficient of $u^{2}$ in
$F^{(1)}_{k,2}(u,i,j)$ leads to 
\begin{multline}
\label{symptransexpoly}
  \frac{1}{(u+1)^{2k+1}}\sum_{j=1}^{\lfloor k/2 \rfloor}
  \sum_{i=0}^{k-2j}\binom{k}{i}\binom{k}{i+2j}u^{2k-2i-2j+1}(j-i^{2}-2j^{2}-4ij)
  \\
  =\frac{1}{(u+1)^{2k+1}}\sum_{j=1}^{k-1}
\sum_{i=1}^{j}\binom{k}{j+i}\binom{k}{j-i}u^{2k-2j+1}
\bigl(i-(j-i)^{2}-2i^{2}-4(j-i)i\bigr).
\end{multline}
The inner sums in this equation are of the form 
\begin{equation}
\sum_{i=1}^{j}\binom{k}{j+i}\binom{k}{j-i}i^{p}, \quad p=0,1,2
\end{equation}
They are Chu-Vandermonde's summations discussed in
Appendix~\ref{sse:chu_vandermonde} and can be evaluated explicitly in
terms of binomial coefficients.  This allows us to
write~\eqref{symptransexpoly} as a sum of Jacobi polynomials.
Finally, including the contributions of the linear and constant terms
in $u$ in $F^{(1)}_{k,2}(u,i,j)$ we arrive at
\begin{equation}
\label{eq:sc2te}
\begin{split}
\frac{(u+1)^{k+3}}{(u-1)^{k-2}u}\mathcal{S}_{k,2}^{(1,0)}(u) 
&= k(k-1)(1-u^{2})P^{(0,1)}_{k-2}(\tilde{u})+(k/2)(1-2k)P^{(1,1)}_{k-2}(\tilde{u})  \\
& \quad +(k/2)u^{2}P^{(1,1)}_{k-2}(\tilde{u})+k^{2}uP^{(1,1)}_{k-2}(\tilde{u}),
\end{split}
\end{equation}
where $\tilde{u}$ is defined in~\eqref{eq:u_tilde}

Now, $(u+1)^{2k +3}\Delta \mathcal{T}^{(1,0)}_{k,2}(u)$ is a polynomial
in $u$ of degree $2k+1$; for convenience we separate it  into the sum
of two polynomials containing only monomials of even and odd degree
respectively. We write
\begin{equation}
\label{eq:totmom2}
(u+1)^{2k+3}\Delta \mathcal{T}^{(1,0)}_{k,2}(u) = 
\Delta \mathcal{T}^{(1,0),\mathrm o}_{k,2}(u)+\Delta
\mathcal{T}^{(1,0),\mathrm e}_{k,2}(u)
\end{equation}
Let us focus first on the term containing only odd monomials. We have 
\begin{equation}
\label{oddmonomialsrule}
\Delta \mathcal{T}^{(1,0),\mathrm o}_{k,2} (u) = 
\Delta \mathcal{U}^{(1,0),\mathrm o}_{k,2}(u)-
\Delta \mathcal{S}^{(1,0),\mathrm o}_{k,2}(u).
\end{equation}

The contribution of the unitary term gives 
\begin{equation}
\label{unitarytranssubsub1}
\Delta \mathcal{U}^{(1,0),\mathrm o}_{k,2}(u) = 
-u(u^{2}-1)^{k}(k+1)\left((k/6)(\tilde{u}^{2}-1)P^{(1,1)}_{k-2}(\tilde{u})
+(k+1)P^{(0,0)}_{k}(\tilde{u})\right).
\end{equation}
This formula is obtained by inserting~\eqref{subleadingpochhammer}
into~\eqref{transmissionexample}, by shifting $n \to n-1$, and setting
$a=1$ and $b=(u-1)n$.  

Since the numerator of $\tilde u$ is quadratic
in $u$, the monomials of odd degree in 
\[
(u+1)^{2k+3}\Delta
\mathcal{S}_{k,2}^{(1,0)}(u)
\]
can be extracted from~\eqref{eq:sc2te} by simple inspection; then,
Eqs.~\eqref{oddmonomialsrule} and~\eqref{unitarytranssubsub1}
give
\begin{subequations}
\label{notfromgen}
\begin{equation}
\label{oddmonomialsfromRMT}
\begin{split}
\frac{\Delta \mathcal{T}^{(1,0),\mathrm o}_{k,2} (u)}{u(u^2-1)^{k-2}}
&= -(k/2)(1-2k+2u^{2}+2ku^{2}+u^{4})P^{(1,1)}_{k-2}(\tilde{u})\\
& \quad + (u^{4}-1)k(k-1)P^{(0,1)}_{k-2}(\tilde{u})
- k(k+1)(1-u^{2})^{2}P^{(0,1)}_{k-1}(\tilde{u})\\
&\quad  -(1/2)(u^{2}-1)(k+1)(2k+1-u^{2})P^{(1,1)}_{k-1}(\tilde{u})\\
&\quad -(u^{2}-1)^{2}(k+1)\left((k/6)(\tilde{u}^{2}-1)P^{(1,1)}_{k-2}(\tilde{u})
+P^{(0,0)}_{k}(\tilde{u})\right).
\end{split}
\end{equation}

The contribution of $\Delta \mathcal{T}^{(1,0),\mathrm e}_{k,2}(u)$ can
be computed in the same way.  We obtain
\begin{equation}
  \label{eq:emon}
   \begin{split}
  \frac{\Delta \mathcal{T}^{(1,0),\mathrm e}_{k,2}
    (u)}{u^2(u^2-1)^{k-2}} & =
  (k-1)\bigl((k/2)(\tilde{u}+1)P^{(1,2)}_{k-3}(\tilde u)+u^{2}
     P^{(0,2)}_{k-2}(\tilde u)-(k/3)P^{(1,0)}_{k-2}(\tilde u)\bigr) \\
&\quad +(k/6)(5u^{2}-1)
  P^{(1,1)}_{k-2}(\tilde u) +(2k+1)(u^{2}-1)P^{(0,1)}_{k-1}(\tilde u) \\
& \quad   +(2k/3)(k-1)u^{2}P^{(0,1)}_{k-2}(\tilde u) -
2k(k-1)(1-u^{2})P^{(0,1)}_{k-2}(\tilde u)\\
& \quad - (k+1)^{2}(1-u^{2})P^{(1,1)}_{k-1}(\tilde u) - 
k(k+1)u^{2}P^{(1,1)}_{k-2}(\tilde u)\\
& \quad + k(k-1)P^{(1,1)}_{k-2}(\tilde u).
\end{split}
\end{equation}
\end{subequations}
Inserting Eqs.~\eqref{oddmonomialsfromRMT} and~\eqref{eq:emon}
into~\eqref{eq:totmom2} gives the explicit expression of the second
corrections of the moments.  We now need to demonstrate that they
coincide with the Taylor coefficients of the right-hand side
of~\eqref{eq:gen_diff}.

By comparing~\eqref{eq:gen_diff} with $\Delta \mathcal T_2^{(2,0)}(u,s)$
in Eq.~\eqref{unitaryandreevgenfn} we write 
\begin{equation}
\begin{split}
\Delta \mathcal T^{(1,0)}_2(u,s) & = 3\Delta
\mathcal{T}^{(2,0)}_2(u,s)-\frac{4(u+1)^2}{u}
\Delta \mathcal{T}^{(2,0)}_2(u,s)\\
& \quad +\frac{s(u+1)^2}{u}
\Delta \mathcal{T}^{(2,0)}_2(u,s)
-\frac{(u+1)^2}{u s}\Delta \mathcal{T}^{(2,0)}_2(u,s).
\end{split}
\end{equation}
From Proposition~\ref{unitarysubsubbeta2prop} we know the Taylor
coefficients of $\Delta \mathcal{T}^{(2,0)}_2(u,s)$.  Thus, we end up
with the formulae
\begin{subequations}
\label{fromgen}
\begin{equation}
\label{oddmonomialsfromgen}
\begin{split}
\frac{\bigl[s^k\bigr]^{\mathrm o}\Delta \mathcal T^{(1,0)}_2(u,s)}%
{u(u^{2}-1)^{k}}&= (1/6)(k+2)P^{(1,1)}_{k}(\tilde{u})-(2/3)(k+1)(k+2)
  P^{(0,0)}_{k}(\tilde{u}) \\
& \quad  -(k/6)P^{(1,1)}_{k-2}(\tilde{u})  +(2k/3)(k-1)P^{(0,0)}_{k-2}(\tilde{u})
\\
& \quad  -(k/2)(k+1)(\tilde{u}-1)^{2}u^{2}P^{(1,1)}_{k-2}(\tilde{u})
\end{split}
\end{equation}
for the component including only monomials of odd powers of $u$ and
\begin{equation}
\label{evenmonomialsfromgen}
\begin{split}
\frac{\bigl[s^k\bigr]^{\mathrm e}\Delta \mathcal T^{(1,0)}_2(u,s)}%
{u^{2}(u^{2}-1)^{k-1}}&= (2/3)(k+2)(k+1)P^{(1,1)}_{k-1}(\tilde{u})
-(1/2)(k+1)P^{(1,1)}_{k-1}(\tilde{u}) \\
&\quad +2k(k+1)P^{(0,0)}_{k-1}(\tilde{u})-(2/3)(k-1)kP^{(1,1)}_{k-3}(\tilde{u})
\end{split}
\end{equation}
\end{subequations}
for the monomials of even degree.

Finally, the equivalence of the pair of Eqs.~\eqref{notfromgen}
to~\eqref{fromgen} is proved systematically through steps (1)--(4)
discussed in the proof of Proposition~\ref{delaysubleadingtwoprop}.
\end{proof}
\begin{remark}
  Formula~\eqref{transgensubsub1diff} coincides with the generating
  function computed with semiclassical techniques by Berkolaiko and
  Kuipers.\cite{BK11}
\end{remark}
The computation of the second correction for
Andreev billiards ($\delta \neq 0$) is almost identical to the proof of
Proposition~\ref{transsubsub1prop}, but obviously the algebraic
manipulations are much more involved. We did not carry out the proof in
its entirety.  The expression of such a generating function, however,
can be inferred from general methods and by inspecting lower order
terms. 
\begin{conjecture}
  \label{co:gfgab}
  The generating function of the second correction of the moments of
  the transmission eigenvalues for Andreev billiards with $\beta=1$
  and $\delta > -2$ is 
\begin{equation*}
\begin{split}
\mathcal T^{(1,\delta)}_{2}(u,s) &= -\frac{us\bigl(s^{2}(u-1)^{2}
+3us-(u+1)^{2}\bigr)}%
{\bigl((u+1)^{2}-s(u-1)^{2}\bigr)^{5/2}(1-s)^{3/2}}+\frac{3\delta us}{2(u+1)^{3}(s-1)}\\
&\quad +\delta s\frac{\bigl((u+1)^{2}-s(u-1)^{2}\bigr)^{2}
+2u\delta\bigl((u+1)^{2}-s(u-1)^{2}\bigr)}%
{2\bigl((u+1)^{2}-s(u-1)^{2}\bigr)^{5/2}(1-s)^{3/2}}\\
&\quad +\delta
s\frac{(u+1)^{2}(u^{2}-5u+1)-s(u-1)^{2}(u^{2}-4u+1)}%
{2(s-1)(u+1)\bigl((u+1)^{2}-s(u-1)^{2}\bigr)^{2}}\\
& \quad - \delta \frac{3s^{3}u(u-1)^{4}}%
{2(s-1)(u+1)^{3}\bigl((u+1)^{2}-s(u-1)^{2}\bigr)^{2}}.
\end{split}
\end{equation*}
\end{conjecture}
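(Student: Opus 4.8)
The plan is to follow exactly the RMT strategy already used for the Dyson case $\delta=0$ in Proposition~\ref{transsubsub1prop}, starting from the finite-$n$ formula~\eqref{orthogonaljacobiexample} but now retaining the full dependence on $\delta$. Recalling that for $\beta=1$ one has $a=\delta+1$, I would set $a=\delta+1$ and $b=(u-1)n$ in~\eqref{orthogonaljacobiexample} and decompose the second correction according to~\eqref{transasymptruled}, writing $\Delta \mathcal{T}^{(1,\delta)}_{k,2}(u)$ as the unitary contribution $\Delta\mathcal{U}^{(1,\delta)}_{k,2}(u)$, minus the symplectic contribution $\Delta\mathcal{S}^{(1,\delta)}_{k,2}(u)$, plus the incomplete contribution $\Delta\mathcal{I}^{(1,\delta)}_{k,2}(u)$. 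The key structural observation --- already visible in the $\beta=2$ analogue~\eqref{eq:nd2te} and in the conjectured formula itself --- is that each of these three pieces is a polynomial of degree at most two in $\delta$. I would therefore split every contribution into its coefficients of $\delta^{0}$, $\delta^{1}$ and $\delta^{2}$ and treat them independently; the $\delta^{0}$ part is furnished directly by Proposition~\ref{transsubsub1prop}, so that only the $\delta^{1}$ and $\delta^{2}$ pieces require genuinely new work.

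For the unitary term I would insert the three-term expansion~\eqref{subleadingpochhammer} into the coefficient~\eqref{coefficientexample} (with $a=\delta+1$ and $n\to n-1$), extract the $O(n^{-2})$ component, and substitute it into~\eqref{transmissionexample}; this produces sums whose summand is a product of two binomial coefficients times a polynomial $F^{(1,\delta)}_{k,2}(u,j)$ that is quadratic in $\delta$ and quartic in $u$, exactly as in the proof of Proposition~\ref{unitarysubsubbeta2prop}. Each monomial in $u$ then yields a hypergeometric series that I would rewrite as a Jacobi polynomial in $\tilde{u}$, with $\tilde u$ defined by~\eqref{eq:u_tilde}. The symplectic term is the laborious piece: taking the next order beyond~\eqref{sijasymptex} in the double sum~\eqref{jacsympexpansion} gives an inner finite sum of Chu-Vandermonde type, which I would evaluate in closed form using the summation lemmas of Appendix~\ref{sse:chu_vandermonde} (in particular Lemma~\ref{chupropjacobi}), leaving an infinite series that again collapses to Jacobi polynomials. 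Finally the incomplete term is obtained by pushing the expansion~\eqref{iknex} of $I_{\J}(k,n)$ one order further; unlike the $\delta=0$ case, its $\delta$-dependent part need not vanish. Assembling the three contributions expresses $\Delta \mathcal{T}^{(1,\delta)}_{k,2}(u)$, coefficient by coefficient in $\delta$, as an explicit linear combination of Jacobi polynomials $P^{(\alpha,\beta)}_{j}(\tilde u)$.

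On the generating-function side I would mirror the device used in Proposition~\ref{transsubsub1prop}: expanding the conjectured expression in powers of $\delta$, each coefficient can be written as a rational combination of $\Delta\mathcal{T}^{(2,0)}_{2}(u,s)$ and its shifts in $s$, whose Taylor coefficients are already known from Lemma~\ref{unitarygenfntrans}. This both explains how the formula was inferred and reduces $\bigl[s^{k}\bigr]\mathcal{T}^{(1,\delta)}_{2}(u,s)$ --- via the correspondence~\eqref{diffscorrespondence} --- to the same family of Jacobi polynomials in $\tilde u$. The proof is then completed by the systematic procedure of Proposition~\ref{delaysubleadingtwoprop} and Remark~\ref{linindepremark}: write every Jacobi polynomial in a single consistent basis using the connection formulae, reduce with the three-term recurrence to a pair of polynomials of adjacent degree, and verify that the two resulting rational coefficients vanish identically. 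I expect the main obstacle to be purely one of scale. Because the $\delta^{1}$ and $\delta^{2}$ coefficients each carry the full symplectic double sum, the number of distinct Jacobi polynomials that enter the final identity is far larger than in the $\delta=0$ case, so the bookkeeping in steps (1)--(4), although entirely mechanical and best delegated to a computer algebra system, is heavy enough that it is the sole reason the statement is recorded here as a conjecture rather than a proposition.
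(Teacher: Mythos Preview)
The paper does not contain a proof of this statement: it is explicitly labelled a \emph{Conjecture}, and the authors state just before it that ``the computation of the second correction for Andreev billiards ($\delta\neq 0$) is almost identical to the proof of Proposition~\ref{transsubsub1prop}, but obviously the algebraic manipulations are much more involved. We did not carry out the proof in its entirety.'' Your proposed approach is precisely the route the authors indicate --- decompose via~\eqref{transasymptruled}, expand each piece as a polynomial in $\delta$, reduce to Jacobi polynomials in $\tilde u$, and verify the identity by the connection-formula/three-term-recurrence procedure of Proposition~\ref{delaysubleadingtwoprop} --- and you correctly identify the reason the statement is recorded as a conjecture rather than a proposition.

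One small slip: for the transmission eigenvalues the symplectic double sum~\eqref{jacsympexpansion} is \emph{finite} (the outer sum runs to $\lfloor k/2\rfloor$), so after applying Lemma~\ref{chupropjacobi} you are left with a finite sum over $j=1,\dotsc,k-1$, not an infinite series; the conversion to Jacobi polynomials then goes through~\eqref{jacobihypergeom} rather than~\eqref{wignerhypergeom}. This does not affect the strategy, only the bookkeeping.
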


\section{Asymptotics of Selberg-like Integrals}
\label{se:selberg_like}
In this section we will be interested in asymptotics as $n \to \infty$
of the integrals
\begin{equation}
\label{transmissionintegral2}
\begin{split}
\mathcal{M}_{k}^{(\beta)}(u,v) & = 
\frac{1}{C}\int_{0}^{1}\dotsi \int_{0}^{1}
\left(\sum_{j=1}^{n}x_{j}^{k}\right)
\prod_{j=1}^{n}x_{j}^{\beta/2(b+1)-1}
(1-x_{j})^{\beta/2(a+1)-1} \\
& \quad \times \prod_{1 \leq j < k \leq n}\left
   \lvert x_{k}-x_{j}\right \rvert^{\beta}dx_{1}\dotsm dx_{n}, \quad k
 =1,2,\dotsc
\end{split}
\end{equation}
with the assumption that $\beta \in \{1,2,4\}$ and 
\begin{equation}
\label{eq:absc}
a=(v-1)n \quad \text{and} \quad b=(u-1)n, \quad u,v\ge 1.
\end{equation}

The integrals~\eqref{transmissionintegral2} have been referred to
\emph{Selberg-like} by several authors.\cite{Nov10,Kra10,CDLV10}
Indeed, the normalization constant
\begin{equation}
\label{eq:selberg_int}
\begin{split}
C_n(\beta,a,b)& =\int_{0}^{1}\dotsi \int_{0}^{1}
\prod_{j=1}^{n}x_{j}^{\beta/2(b+1)-1}(1-x_{j})^{\beta/2(a+1)-1}\\
& \quad \times \prod_{1 \leq j < k \leq n}\left
   \lvert x_{k}-x_{j}\right \rvert^{\beta}dx_{1}\dotsm dx_{n}
\end{split}
\end{equation}
was evaluated for any $\beta>0$ by Selberg in 1944,\cite{Sel44} who
obtained the formula
\begin{equation}
C_n(\beta,a,b)= \prod_{j=0}^{n-1}\frac{\Gamma\bigl(\beta/2(b+1+j)\bigr)
\Gamma\bigl(\beta/2(a+1+j)\bigr)\Gamma\bigl(1+(j+1)\beta/2\bigr)}%
{\Gamma\bigl(\beta/2(a+b+1+n+j)\bigr)\Gamma(1+\beta/2)}.
\end{equation}
The integral~\eqref{eq:selberg_int} has since been named
\emph{Selberg's integral}.

In the first part of this work\cite{MS11a} we gave finite-$n$ formulae
for~\eqref{transmissionintegral2} which imply the existence of an
asymptotic expansion in powers of $1/n$
\begin{equation}
\label{doublescaleasymptoticexpansion}
\frac{1}{n}\mathcal{M}^{(\beta)}_{k}(u,v) \sim 
\sum_{p=0}^{\infty}\mathcal{M}^{(\beta)}_{k,p}(u,v)n^{-p}, \quad n \to
\infty.
\end{equation} 
We now compute the first two coefficients in this expansion.

Expressions for the $\beta$-independent leading order term appeared in
Refs.~\citenum{CDLV10} and~\citenum{Kra10} with the assumption that
$\beta=2$; a different formula is also available in the
literature,\cite{Nov10} whose proof was valid for any $\beta$. We
introduce a different approach, which gives a simpler characterization
of the leading order term and allows to compute the next to leading
order coefficient.

\subsection{Leading Order Term}
\label{se:lead_order_selberg}
As for the moments of the transmission eigenvalues, we introduce the
differences 
\begin{subequations}
\label{eq:diff_selberg_like}
\begin{align}
\Delta \mathcal{M}_{k}^{(\beta)}(u,v) & :=
\mathcal{M}_{k}^{(\beta)}(u,v)-\mathcal{M}_{k+1}^{(\beta)}(u,v);
\intertext{similarly, for the coefficients of their asymptotic series
we write}
\Delta \mathcal{M}_{k,p}^{(\beta)}(u,v) & :=
\mathcal{M}_{k,p}^{(\beta)}(u,v)-\mathcal{M}_{k+1,p}^{(\beta)}(u,v).
\end{align}
\end{subequations}
Furthermore, we shall set
\begin{equation*}
  \mathcal{M}^{(\beta)}_{0}(u,v)=  \Delta
  \mathcal{M}^{(\beta)}_{0}(u,v) =0 .
\end{equation*}

\begin{proposition}
  Let $\beta \in \{1,2,4\}$, $a=(v-1)n$ and $b=(u-1)n$.   We have
\begin{equation}
\label{jacobileadingdifferences}
\mathcal{M}_{k,0}^{(\beta)}(u,v)= \frac{u}{u+v} - \sum_{j=1}^{k-1}\frac{1}{j}
\sum_{i=0}^{j}\binom{j}{i}\binom{j}{i-1}\frac{v^{i}u^{j-i+1}
  (u+v-1)^{j-i+1}}{(u+v)^{2j+1}}.
\end{equation}
\end{proposition}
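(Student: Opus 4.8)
The plan is to follow the proof of Proposition~\ref{pro:leading_order} almost verbatim, the only novelty being that the Jacobi parameter $a$ now grows linearly in $n$ as well. By definition $\mathcal{M}_k^{(\beta)}(u,v)$ is the moment $M^{(\beta)}_{\J}(k,n)$ of the Jacobi ensemble~\eqref{eq:jacobi_density} with $a=(v-1)n$ and $b=(u-1)n$, so first I would argue, exactly as in Remark~\ref{re:beta_in}, that under this double scaling the symplectic and ``incomplete'' contributions $S^{a/2,b/2}_{i,j}$ and $I_{\J}$ remain $O(1)$; after dividing by $n$ they are therefore subleading and the leading order is independent of $\beta$ and $\delta$. (This is consistent with the known $\beta$-independence of the leading term.) This reduces the computation to $\beta=2$, where the exact finite-$n$ formula~\eqref{transmissionexample}--\eqref{coefficientexample} is available with general $a$ and $b$.

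Next I would insert $a=(v-1)n$ and $b=(u-1)n$ into~\eqref{coefficientexample} and extract the leading behaviour of each Pochhammer symbol via $(x)_{(m)}\sim x^{m}$. The relevant combinations collapse to $a+b+2n\sim(u+v)n$, $a+b+n\sim(u+v-1)n$, $a+n\sim vn$ and $b+n\sim un$, while the prefactor $a+b+2n-2j+k+1\sim(u+v)n$ is linear in $n$ and $(n+1)_{(-j)}\sim n^{-j}$. Adding and subtracting the subscripts of the Pochhammer symbols exactly as in~\eqref{simpleadding} shows that the overall power of $n$ is again $(2k-j+3)-(2k+2-j)=1$, independent of $j$, so that
\begin{equation*}
U^{a,b}_{n,k,j}\sim n\,\frac{v^{j}u^{k-j+1}(u+v-1)^{k-j+1}}{(u+v)^{2k+1}}+O(1),\qquad n\to\infty.
\end{equation*}
Substituting this into~\eqref{transmissionexample} and taking the limit gives the leading-order difference
\begin{equation*}
\Delta\mathcal{M}_{k,0}^{(\beta)}(u,v)=\lim_{n\to\infty}n^{-1}\Delta M^{(2)}_{\J}(k,n)=\frac{1}{k}\sum_{j=0}^{k}\binom{k}{j}\binom{k}{j-1}\frac{v^{j}u^{k-j+1}(u+v-1)^{k-j+1}}{(u+v)^{2k+1}}.
\end{equation*}

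To finish, I would supply the initial term from the leading order of Aomoto's integral~\eqref{eq:Aomoto_int}: with the double scaling $M^{(\beta)}_{\J}(1,n)=n(b+n)/(a+b+2n)=un^{2}/((u+v)n)$, whence $\mathcal{M}_{1,0}^{(\beta)}(u,v)=u/(u+v)$. Combining this with the telescoping relation $\mathcal{M}_{k,0}^{(\beta)}(u,v)=\mathcal{M}_{1,0}^{(\beta)}(u,v)-\sum_{j=1}^{k-1}\Delta\mathcal{M}_{j,0}^{(\beta)}(u,v)$, which is the leading-order analogue of~\eqref{eq:gen_mom}, and relabelling the inner summation index $j\to i$ in $\Delta\mathcal{M}_{j,0}^{(\beta)}$ reproduces~\eqref{jacobileadingdifferences} directly. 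The only point requiring genuine care is the bookkeeping of the powers of $n$ in the double-scaled coefficient~\eqref{coefficientexample} and the verification that the $j$-independence of the exponent survives once $a$ scales with $n$; granting this, the rest is a routine adaptation of Proposition~\ref{pro:leading_order}, and no new combinatorial identity is needed.
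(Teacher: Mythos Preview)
Your proposal is correct and follows essentially the same route as the paper's own proof: reduce to $\beta=2$ via Remark~\ref{re:beta_in}, expand $U^{a,b}_{n,k,j}$ with the double scaling to obtain the leading-order differences, and supply the first moment from Aomoto's integral. The paper presents these steps in a slightly different order but the argument is the same.
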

\begin{proof}
  It is sufficient to consider $\beta=2$ because, as for the moments
  of the transmission eigenvalues, a simple calculation discussed in
  Remark~\ref{re:beta_in} shows that the
  terms in (\ref{orthogonaljacobiexample}) that do not come from the
  unitary contributions contribute at subleading order.

 The first term in~\eqref{jacobileadingdifferences} is the limit of
 Aomoto's integral, namely
 \begin{equation*}
  \lim_{n \to \infty}n^{-1}M^{(\beta)}_{\J}(1,n) =\lim_{n \to
    \infty}\frac{b + n}{a + b + 2n} = \frac{u}{u+v}.
 \end{equation*}

 Inserting the leading order term of the asymptotic
 expansion~\eqref{fullgammaratio} into~\eqref{coefficientexample}
 gives
\begin{equation}
\label{leadingestimatejac}
\begin{split}
U^{a,b}_{n,k,j} &= \frac{\bigl((u+v)n-2j+k+1\bigr)\bigl((u+v-1)n\bigr)_{(k-j+1)}
(vn-j+1)_{(j)}(un)_{(k-j+1)}}{\bigl((u+v)n-j\bigr)_{(k+2)}
\bigl((u+v)n-j+1\bigr)_{(k)}(n+1)_{(-j)}}\\
& \sim n\frac{v^{j}u^{k-j+1}(u+v-1)^{k-j+1}}{(u+v)^{2k+1}}+O(1), 
\quad n \to \infty, 
\end{split}
\end{equation}
where $a$ and $b$ depend on $u$ and $v$ through the
scaling~\eqref{eq:absc}. Substituting the right-hand side
of~\eqref{leadingestimatejac} into~(\ref{transmissionexample}) 
leads to a formula for the differences:
\begin{equation}
\label{eq:lead_diff_s_int}
\Delta \mathcal{M}_{k,0}^{(\beta)}(u,v)=\frac{1}{k}\sum_{j=0}^{k}\binom{k}{j}
\binom{k}{j-1}\frac{v^{j}u^{k-j+1}
  (u+v-1)^{k-j+1}}{(u+v)^{2k+1}}.
\end{equation}
\end{proof}

\subsection{Next to Leading Order Term}
\begin{proposition}
\label{pr:ntlsi}
Let $\beta \in \{1,2,4\}$, $a=(v-1)n$ and $b=(u-1)n$. The next to
leading order coefficient in the
expansion~\eqref{doublescaleasymptoticexpansion} is 
\begin{equation*}
\mathcal{M}_{k,1}^{(\beta)}(u,v) = \left(\frac{2}{\beta}-1\right)
\frac{v^{k}}{2(u+v)^{2k}}\sum_{j=0}^{k}
\left(\binom{2k}{2j}-\binom{k}{j}^{2}\right)\left(\frac{u(u+v-1)}{v}\right)^{j}.
\end{equation*}
\end{proposition}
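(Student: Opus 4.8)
The plan is to adapt the proof of Proposition~\ref{subleadingwignerdelayrmt}, whose conclusion has exactly the same shape as the one we want: a factor $(2/\beta-1)$ multiplying $\sum_{j}\bigl(\binom{2k}{2j}-\binom{k}{j}^2\bigr)X^{j}$, now with $X=u(u+v-1)/v$ in place of $X=w$. Since this factor vanishes at $\beta=2$, the first step is to confirm $\mathcal{M}^{(2)}_{k,1}(u,v)=0$. Under the symmetric scaling $a=(v-1)n$, $b=(u-1)n$ both Jacobi exponents grow with $n$ with no order-one offset, so I expect the $O(1/n)$ correction to vanish by the symmetry anticipated in Remark~\ref{re:de_com}. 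Concretely I would insert the two subleading terms of~\eqref{subleadingpochhammer} into the coefficient~\eqref{coefficientexample} with $a,b$ as in~\eqref{eq:absc}, substitute into~\eqref{transmissionexample}, and check that the order-$n^{-1}$ part telescopes to zero after a shift of the summation index, exactly as the cancellation in~\eqref{easysimpwig}. This disposes of $\beta=2$ and leaves $\beta=1$ and $\beta=4$, which are handled identically; I would present $\beta=1$.

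For $\beta=1$ I would mirror the leading-order computation of~\eqref{eq:lead_diff_s_int}: work with the differences, $\Delta\mathcal{M}^{(1)}_{k,1}=\Delta\mathcal{U}_{k,1}-\Delta\mathcal{S}_{k,1}+\Delta\mathcal{I}_{k,1}$ via~\eqref{transasymptruled} and the finite-$n$ formula~\eqref{orthogonaljacobiexample} with $a=(v-1)n$, $b=(u-1)n$, and recover $\mathcal{M}^{(1)}_{k,1}$ by telescoping as in~\eqref{jacobileadingdifferences}. The key point for the unitary piece is that, although the unshifted $\beta=2$ correction is zero, writing $a=(v-1)(n-1)+(v-1)$ and $b=(u-1)(n-1)+(u-1)$ exposes order-one offsets $(v-1)$ and $(u-1)$ produced by the shift $n\to n-1$; these play the role that the fixed offset $\delta$ played in Proposition~\ref{pr:ntl_teig} and make $\Delta\mathcal{U}_{k,1}$ genuinely nonzero. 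I would obtain it by expanding~\eqref{coefficientexample} with those offsets through~\eqref{subleadingpochhammer} and reducing the resulting hypergeometric sums to Jacobi polynomials via~\eqref{jacobihypergeom}. The incomplete piece $\Delta\mathcal{I}_{k,1}$ is read off from the leading order of $I_{\J}(k,n)$ under the Selberg scaling, the analogue of~\eqref{iknex}, and supplies the $\binom{2k}{2j}$-type terms.

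The symplectic piece is the core of the computation. Following the pattern of~\eqref{subleadingdoublesum}, I would insert the leading order of $2S^{a/2,b/2}_{i,j}\bigl(k,(n-1)/2\bigr)$ --- the Selberg analogue of~\eqref{sijasymptex}, now carrying both $u$ and $v$ --- into the double sum~\eqref{jacsympexpansion} and evaluate the inner sum in closed form with the Chu--Vandermonde summations of Lemma~\ref{chupropjacobi} (Appendix~\ref{sse:chu_vandermonde}). As in~\eqref{subleadingdoublesum}, this should collapse to the combination $\frac{1}{2}\bigl(\binom{2k}{2j}-\binom{k}{j}^2\bigr)$ against the appropriate monomial. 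The leading-order difference~\eqref{eq:lead_diff_s_int}, obtained from~\eqref{leadingestimatejac}, already fixes the expected power $v^{k-j}u^{j}(u+v-1)^{j}/(u+v)^{2k}$, and hence the prefactor $v^{k}/\bigl(2(u+v)^{2k}\bigr)$ together with the variable $u(u+v-1)/v$.

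Finally I would assemble the three contributions and check that the $\binom{2k}{2j}$ terms and the $\binom{k}{j}^2$ terms each fuse into the single symmetric block $\binom{2k}{2j}-\binom{k}{j}^2$, while the leftover powers of $u$, $v$ and $u+v$ collapse into the stated prefactor. As in Proposition~\ref{pr:ntl_teig}, where the analogous step reduced to the identity~\eqref{finaltransfirstcorr} through repeated use of Pascal's rule, this final cancellation is a finite combinatorial identity provable by comparing the coefficients of each monomial on the two sides. The main obstacle is exactly this bookkeeping: tracking the order-one offsets $(v-1)$ and $(u-1)$ and the $n\to n-1$ shift inside the unitary term, and confirming that after the Chu--Vandermonde evaluation the contributions align so that every spurious monomial cancels and only $(2/\beta-1)\,\frac{v^{k}}{2(u+v)^{2k}}\sum_{j}\bigl(\binom{2k}{2j}-\binom{k}{j}^2\bigr)\bigl(u(u+v-1)/v\bigr)^{j}$ survives. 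These manipulations are elementary but lengthy and, as elsewhere in the paper, are most safely carried out with a computer algebra package.
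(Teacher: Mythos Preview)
Your plan is essentially the paper's own argument: decompose the $\beta=1$ moment into unitary, symplectic and $I$-contributions via~\eqref{orthogonaljacobiexample}, evaluate the symplectic double sum with the Chu--Vandermonde summation of Lemma~\ref{chupropjacobi} to produce the $\tbinom{2k}{2j}-\tbinom{k}{j}^{2}$ block, read off the $I$-term from the leading asymptotics, and reduce everything to a closed-form check on the unitary piece. Two organisational differences are worth noting. First, the paper does \emph{not} pass to differences for $\mathcal{S}$ and $\mathcal{I}$: both are computed directly as functions of $k$, and only for the unitary term does it take differences (so as to access~\eqref{transmissionexample}); this saves the telescoping step you propose. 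Second, and more substantively, the final verification in the paper is \emph{not} a Pascal's-rule coefficient match in the style of~\eqref{finaltransfirstcorr}. With two independent scaling parameters $u,v$ the target identity for $\Delta\mathcal{U}^{\J}_{k,1}$ is naturally phrased in terms of Jacobi polynomials $P^{(\alpha,\beta)}_{k}(\tilde y)$ in the auxiliary variable $y=u(u+v-1)/v$, and the paper closes by the systematic procedure (1)--(4) of Proposition~\ref{delaysubleadingtwoprop} (connection coefficients plus the three-term recurrence). Your coefficient-matching route is valid in principle, but in practice the two-variable structure makes it messier than the one-variable case of Proposition~\ref{pr:ntl_teig}, which is why the paper switches tools here.
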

\begin{proof}
  We discuss only the case when $\beta=1$, as when $\beta=2$ and
  $\beta=4$ the proof is analogous.

  We can write down a relation analogous to that one
  in~(\ref{transasymptrule}), where now $a=(v-1)n$ grows with $n$
  instead of being independent of $n$. For the first subleading
  correction it reads
\begin{equation}
\label{jacb1asymptrule}
  \mathcal{M}_{k,1}^{(1)}(u,v) = \mathcal{U}_{k,1}^{\J}(u,v) -
  \mathcal{S}^{\J}_{k,1}(u,v)+\mathcal{I}^{\J}_{k,1}(u,v).
\end{equation}
Insert the leading order of~\eqref{fullgammaratio} into the
symplectic contribution in~\eqref{orthogonaljacobiexample}; then, a
calculation similar to the derivation of (\ref{subleadingdoublesum})
in Proposition~\ref{pr:ntl_teig} shows that
\begin{equation*}
\begin{split}
  \mathcal{S}^{\J}_{k,1}(u,v) &= \sum_{j=1}^{\lfloor k/2 \rfloor}
\sum_{i=0}^{k-2j}\binom{k}{i}\binom{k}{i+2j}
\frac{v^{i+j}u^{k-i-j}(u+v-1)^{k-i-j}}{(u+v)^{2k}}\\
  &=\frac{v^{k}}{2(u+v)^{2k}}\sum_{j=0}^{k-1}\left(\binom{2k}{2j}-\binom{k}{
      j}^{2}\right)\left(\frac{u(u+v-1)}{v}\right)^{j}.
\end{split}
\end{equation*}
In the same way, starting from formula \eqref{eq:Iab} we arrive at
\begin{equation*}
\mathcal{I}^{\J}_{k,1}(u,v) = \frac{v^{k}}{(u+v)^{2k}}
\sum_{j=0}^{k}\binom{2k}{2j}\left(\frac{u(u+v-1)}{v}\right)^{j}.
\end{equation*}
Therefore, it is sufficient to prove that
\begin{equation*}
  \mathcal{U}_{k,1}^{\J}(u,v) = -\frac{v^{k}}{(u+v)^{2k}}
\sum_{j=0}^{k}\binom{k}{j}^{2}\left(\frac{u(u+v-1)}{v}\right)^{j} 
= -\frac{v^{k}}{(u+v)^{2k}}(y-1)^{k}P^{(0,0)}_{k}(\tilde{y}),
\end{equation*}
where we have introduced the new variables
\begin{equation}
y = \frac{u}{v}(u+v-1) \quad \text{and} \quad  \tilde{y} = \frac{y+1}{y-1}.
\end{equation}

Taking the differences, we see that the statement of the proposition
is equivalent to
\begin{equation}
\label{jacsubleadingfinal}
\Delta \mathcal{U}_{k,1}^{\J}(u,v) = 
-\frac{v^{k}(y-1)^{k}}{(u+v)^{2k}}P^{(0,0)}_{k}(\tilde{y})
+\frac{v^{k+1}(y-1)^{k+1}}{(u+v)^{2k+2}}P^{(0,0)}_{k+1}(\tilde{y}).
\end{equation}

Replace $n \to n-1$ in (\ref{transmissionexample}) and insert the
scaling~\eqref{eq:absc}. Then, use (\ref{subleadingpochhammer}) to
expand~\eqref{jacsubleadingfinal} up to the next to leading order
correction.  We obtain
\begin{equation}
\label{jacsubleadingexpr}
\begin{split}
\Delta \mathcal{U}^{\J}_{k,1}(u,v) & = \frac{1}{k}
\sum_{j=1}^{k}\binom{k}{j}\binom{k}{j-1}
\frac{u^{k-j+1}v^{j}(u+v-1)^{k-j+1}}{2(u+v)^{2k+1}}\\
& \quad \times \left(\frac{(k-j+1)(k-j-2)}{(u+v-1)} 
   -\frac{j(j+1)}{v}+\frac{(k-j+1)(k-j-2)}{u}\right.\\
& \quad \left.+\frac{6k+4-2k^{2}+4kj}{(u+v)}-j(1+j)\right),
\end{split}
\end{equation}
which may be written in terms of Jacobi polynomials as 
\begin{equation}
\label{jacsubleadingpolys}
\begin{split}
\Delta \mathcal{U}^{\J}_{k,1}(u,v) &= \frac{(y-1)^{k-2}u(u+v-1)v^{k}}%
{(u+v)^{2k+1}}\left(\left(\frac{2k}{u+v}
-\frac{1}{v}-1\right)(y-1)P^{(0,1)}_{k-1}(\tilde{y})\right.\\
& \left. \quad +\frac{k(u-1)}{v}P^{(1,1)}_{k-2}(\tilde{y})
-\left(\frac{1}{u+v-1}+\frac{1}{u}\right)(y-1)P^{(1,0)}_{k-1}(\tilde{y}) \right.\\
& \left.\quad+\frac{(6k+4-2k^{2})(y-1)}{2k(u+v)}P^{(1,1)}_{k-1}(\tilde{y})\right).
\end{split}
\end{equation}

By subtracting~\eqref{jacsubleadingfinal}
from~\eqref{jacsubleadingpolys} we obtain an expression involving
Jacobi polynomials.  In order to complete the proof, we need to show
that it vanishes.  As previously, we achieve it systematically,
following the steps (1)--(4) outlined at the end of the proof of
Proposition~\ref{delaysubleadingtwoprop}. 
\end{proof}

\subsection{Leading Order Generating Function}
\label{sse:lead_ord_gen_fun}

Let us define the generating function
\begin{equation}
  \label{eq:sel_int_gf}
  H(u,v;s) = \sum_{k=0}^{\infty}\mathcal{M}_{k,0}^{(\beta)}(u,v)s^k.
\end{equation}
We now present a new proof of the following.
\begin{theorem}[Novaes\cite{Nov10}]
  The generating function $H(u,v;s)$ obeys the quadratic equation
  \begin{equation}
     \label{lastquadraticeqn}
      H = \frac{us}{u+v-(1+u)s}-\frac{1-s}{u+v-(1+u)s}H^{2}.
  \end{equation}  
\end{theorem}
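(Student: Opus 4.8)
The plan is to imitate the strategy used for the leading-order generating functions in Corollary~\ref{co:lead_gen_fs}: rather than attacking $H(u,v;s)$ directly, I first treat the generating function of the \emph{differences}
\[
\Delta H(u,v;s) := \sum_{k=1}^{\infty}\Delta\mathcal{M}_{k,0}^{(\beta)}(u,v)\,s^{k},
\]
recognise it as a rescaled Narayana generating function $\rho$, exploit the quadratic relation that $\rho$ satisfies, and finally convert the result back into a statement about $H$ through a functional relation analogous to~\eqref{diffscorrespondence}. Since the leading-order coefficient is independent of $\beta$, I may work throughout with the explicit difference formula~\eqref{eq:lead_diff_s_int}.

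First I would recast~\eqref{eq:lead_diff_s_int} by writing its numerator as $v^{j}[u(u+v-1)]^{k-j+1}$ and observing that the coefficient $\binom{k}{j}\binom{k}{j-1}$ is invariant under $j\mapsto k+1-j$. Hence $\tfrac1k\sum_{j=1}^{k}\binom{k}{j}\binom{k}{j-1}v^{j}[u(u+v-1)]^{k+1-j}$ is a homogeneous polynomial, symmetric in the two arguments $v$ and $u(u+v-1)$, and therefore equals $v^{k+1}N_{k}(y)$ with
\[
y:=\frac{u(u+v-1)}{v},
\]
the same variable appearing in the proof of Proposition~\ref{pr:ntlsi}, where $N_{k}$ is the Narayana polynomial~\eqref{eq:narayanapolynomial}. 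This gives the compact identity $\Delta\mathcal{M}_{k,0}^{(\beta)}(u,v)=v^{k+1}(u+v)^{-2k-1}N_{k}(y)$, and summing over $k$ yields
\[
\Delta H(u,v;s)=\frac{v}{u+v}\,\rho\!\left(y,\frac{vs}{(u+v)^{2}}\right),
\]
with $\rho$ the generating function~\eqref{narayanagenerator}.

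Next I would record the quadratic satisfied by $\rho$. Squaring the square root in~\eqref{narayanagenerator} and simplifying shows that $R:=\rho(y,t)$ obeys $R=t(R+1)(R+y)$. I then set $t=vs/(u+v)^{2}$ and use the functional relation
\[
H(u,v;s)=\frac{s}{s-1}\left(\Delta H(u,v;s)-\frac{u}{u+v}\right),
\]
which follows exactly as~\eqref{diffscorrespondence} from $\mathcal{M}_{1,0}^{(\beta)}=u/(u+v)$ and the convention $\mathcal{M}^{(\beta)}_{0}=0$. Substituting $R=\tfrac{u+v}{v}\Delta H$ and inverting the functional relation expresses $R$ as an affine function of $H$, namely $R=\tfrac{u}{v}+\tfrac{(u+v)(s-1)}{vs}H$; the factors $R+1$ and $R+y$ then have clean constant terms $\tfrac{u+v}{v}$ and $\tfrac{u(u+v)}{v}$, since $\tfrac{u}{v}+y=\tfrac{u(u+v)}{v}$.

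The final step is algebraic: inserting these expressions into $R=t(R+1)(R+y)$, clearing the denominators $vs$ and $(u+v)^{2}$, and dividing out the common factor $(1-s)$, I expect the relation to collapse precisely to
\[
(1-s)H^{2}+\bigl(u+v-(1+u)s\bigr)H-us=0,
\]
which is equivalent to~\eqref{lastquadraticeqn}. The main obstacle is purely the bookkeeping of this last substitution; the one genuine subtlety is the homogeneity and symmetry identification in the first step, as it is what licenses rewriting the difference as a single Narayana polynomial in the variable $y$ and thereby brings the known quadratic for $\rho$ into play.
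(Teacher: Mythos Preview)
Your proof is correct and follows essentially the same route as the paper: identify the difference $\Delta\mathcal{M}_{k,0}^{(\beta)}(u,v)$ as $v^{k+1}(u+v)^{-2k-1}N_k(y)$ via the index symmetry, sum to obtain $\Delta H=\tfrac{v}{u+v}\,\rho\bigl(y,\tfrac{vs}{(u+v)^2}\bigr)$, invoke the quadratic $\rho=t(1+\rho)(y+\rho)$, and then pass from $\Delta H$ to $H$ through the functional relation~\eqref{diffscorrespondence}. The paper carries out the final algebra by writing the quadratic as $(us+(s-1)H)(s+(s-1)H)=(u+v)(s-1)H+us$ before rearranging, which is exactly what your substitution $R=\tfrac{u}{v}+\tfrac{(u+v)(s-1)}{vs}H$ produces once you clear denominators.
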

\begin{proof}
  By rearranging summation indices as $j \to k-j$, we can write
  formula (\ref{jacobileadingdifferences}) in terms of a  Narayana
  polynomial of degree $k$ (see~\eqref{eq:narayanapolynomial}):
\begin{equation*}
\begin{split}
\Delta \mathcal{M}_{k,0}^{(\beta)}(u,v) &=
\frac{v^{k+1}}{(u+v)^{2k+1}}
\frac{1}{k}\sum_{j=0}^{k}\binom{k}{j}\binom{k}{j-1}
\left(\frac{u(u+v-1)}{v}\right)^{j}\\
&=\frac{v^{k+1}}{(u+v)^{2k+1}}N_{k}\left(\frac{u(u+v-1)}{v}\right).
\end{split}
\end{equation*}
Thus, the generating function of $\Delta
\mathcal{M}^{(\beta)}_{k,0}(u,v)$ can be easily related to the
generating function of the Narayana polynomials, which we report
in~\eqref{narayanagenerator}.  More explicitly
\begin{equation}
\label{eq:diff_gf}
\widetilde{H}(u,v;s) := \sum_{k=1}^{\infty}\Delta 
\mathcal{M}_{k,0}^{(\beta)}(u,v)s^{k} 
= \frac{v}{u+v}\rho\left(\frac{u(u+v-1)}{v},\frac{sv}{(u+v)^{2}}\right).
\end{equation}

The generating functions $H$ and $\widetilde{H}$ are related by
\begin{equation}
  \label{eq:diff_gf_rel}
  \widetilde{H} = \left(1 - s^{-1}\right)H +
  \mathcal{M}^{(\beta)}_{1,0}(u,v). 
\end{equation}
It follows immediately that 
\begin{equation}
     \label{genfndoublescale}
      H  = \frac{s}{s-1}\left(\frac{v}{u+v}
\rho\left(\frac{u(u+v-1)}{v},\frac{sv}{(u+v)^{2}}\right)-\frac{u}{u+v}\right)
  \end{equation}

  The generating function $\rho(x,s)$ satisfies the quadratic equation
  (see, \textit{e.g.}, Ref.~\citenum{Deu99})
\begin{equation*}  
   \rho = (x+\rho)(1+\rho)s,
\end{equation*}
which leads to
\begin{equation*}
(us+(s-1)H)(s+(s-1)H) = (u+v)(s-1)H+us.
\end{equation*}
Rearranging this expression gives~\eqref{lastquadraticeqn}.
\end{proof}

Formula (\ref{lastquadraticeqn}) is equivalent (in our notation) to
Eq.~(58) in Ref.~\citenum{Nov10} obtained by a completely different
approach.

\subsection{The Limiting Eigenvalue Density of the Jacobi Ensembles}
\label{sse:limiting_density}
The leading order term of the density~\eqref{meanevdensity} averaged
over the Jacobi ensembles first appeared in the multivariate
statistics literature\cite{Wac80,BYK87} and later in the context of
chaotic quantum transport, first when $a=0$ and $b=0$\cite{Bee93,BM94}
and then only when $a=0$.\cite{BB96} More recently it has attracted
attention in the free probability literature\cite{CC04,Col05} with the
same scaling as that adopted in this section.  It was found that
\begin{equation}
\label{limitingjacobiesd}
\rho_{\infty}(x) = \lim_{n \to \infty}n^{-1}\rho_{n}(x) = 
(u + v)\frac{\sqrt{(x-\lambda_{-})(\lambda_{+}-x)}}{2\pi x(1-x)},
\end{equation}
with support
\begin{equation}
\label{lambda}
\lambda_{\pm} = \left(\sqrt{\frac{u}{u+v}\left(1-\frac{1}{u+v}\right)} 
\pm \sqrt{\frac{1}{u+v}\left(1-\frac{u}{u+v}\right)}\right)^{2}.
\end{equation}
The density~\eqref{limitingjacobiesd} is normalized to one; in
Refs.~\citenum{CC04} and \citenum{Col05} a different normalization and
notation is used.  We now show that the leading order contributions of
the Selberg-like integrals $\mathcal{M}_{k}^{(\beta)}(u,v)$ are the
moments of the limiting density.

Formula~\eqref{limitingjacobiesd} makes it apparent why it is helpful
to consider the differences between the moments, rather than the
moments themselves: multiplication by 
\[
x^{k}-x^{k+1}=(1-x)x^{k}
\]
cancels the poles in the denominator of \eqref{limitingjacobiesd},
making the resulting integration tractable.

Our method is similar in style to that used in Ref.~\citenum{HT03} for a
similar problem involving the Laguerre ensembles.

\begin{theorem}
The difference of the moments of the limiting density
$\rho_{\infty}(x)$ are 
\begin{equation}
\label{eq:limit_diff_2}
\begin{split}
\int_{\lambda_{-}}^{\lambda_{+}}x^{k}(1-x)\rho_{\infty}(x)dx &=
\frac{1}{k}\sum_{j=0}^{k}\binom{k}{j} \binom{k}{j-1}\frac{v^{j}u^{k-j+1}
  (u+v-1)^{k-j+1}}{(u+v)^{2k+1}}\\
&  = \Delta \mathcal{M}_{k,0}^{(\beta)}(u,v).
\end{split}
\end{equation}
\end{theorem}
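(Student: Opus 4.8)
The plan is to evaluate the integral directly and then match it against the closed-form sum; the second equality in \eqref{eq:limit_diff_2} is just the definition of $\Delta\mathcal{M}_{k,0}^{(\beta)}(u,v)$ recorded in \eqref{eq:lead_diff_s_int}, so it requires no further argument. First I would substitute the explicit density \eqref{limitingjacobiesd}. The crucial simplification, already advertised in the paragraph preceding the statement, is that the factor $x^{k}(1-x)$ cancels the full denominator $x(1-x)$ of $\rho_\infty$, leaving for $k\ge 1$ the polynomial-times-square-root integrand
\[
\int_{\lambda_-}^{\lambda_+}x^{k}(1-x)\rho_\infty(x)\,dx=\frac{u+v}{2\pi}\int_{\lambda_-}^{\lambda_+}x^{k-1}\sqrt{(x-\lambda_-)(\lambda_+-x)}\,dx .
\]
Writing $p:=\tfrac{u(u+v-1)}{(u+v)^2}$ and $q:=\tfrac{v}{(u+v)^2}$, so that $\lambda_\pm=(\sqrt p\pm\sqrt q)^2$ by \eqref{lambda}, this is a moment of a shifted, rescaled semicircle weight on $[\lambda_-,\lambda_+]$.

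Next I would make the trigonometric substitution $x=(p+q)+2\sqrt{pq}\cos\theta$, under which $\sqrt{(x-\lambda_-)(\lambda_+-x)}=2\sqrt{pq}\sin\theta$ and the integral becomes $\tfrac{2pq(u+v)}{\pi}\int_0^\pi x^{k-1}\sin^2\theta\,d\theta$. Setting $z=e^{i\theta}$ and using the symmetry $\theta\mapsto-\theta$ to extend to the full unit circle, the key algebraic fact is the factorization
\[
(p+q)+2\sqrt{pq}\cos\theta=\sqrt{pq}\,z^{-1}\bigl(z+\sqrt{p/q}\bigr)\bigl(z+\sqrt{q/p}\bigr),
\]
so that $x^{k-1}$ contributes $(\sqrt{pq})^{k-1}z^{-(k-1)}(z+\sqrt{p/q})^{k-1}(z+\sqrt{q/p})^{k-1}$ while $\sin^2\theta$ contributes $-\tfrac14 z^{-2}(z^2-1)^2$. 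The resulting contour integrand is rational with a single pole inside $|z|=1$, located at $z=0$ (the zero $z=-\sqrt{q/p}$ sits inside but is harmless, while $z=-\sqrt{p/q}$ lies outside provided $p>q$, i.e.\ $u>1$). A residue computation at $z=0$ therefore collapses the integral to a single Taylor coefficient,
\[
\int_{\lambda_-}^{\lambda_+}x^{k}(1-x)\rho_\infty(x)\,dx=-\frac{(u+v)(\sqrt{pq})^{k+1}}{2}\,\bigl[z^{k+1}\bigr]\Bigl\{\bigl(z+\sqrt{p/q}\bigr)^{k-1}\bigl(z+\sqrt{q/p}\bigr)^{k-1}(z^2-1)^2\Bigr\}.
\]
I would pin down the overall constant on the case $k=1$, where both sides equal $(u+v)pq$.

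Finally I would extract the coefficient. Expanding $(z+\sqrt{p/q})^{k-1}(z+\sqrt{q/p})^{k-1}$ and using $\sqrt{p/q}\cdot\sqrt{q/p}=1$, the coefficient of $z^{m}$ is $\sum_i\binom{k-1}{i}\binom{k-1}{m-i}(p/q)^{(m-2i)/2}$, and $[z^{k+1}]\{\cdots(z^2-1)^2\}$ is the second difference of these at $m=k+1,k-1,k-3$. Restoring the prefactor and the values of $p,q$ turns the right-hand side into a Narayana-type expression, which I expect to coincide with
\[
\frac{1}{k}\sum_{j=0}^{k}\binom{k}{j}\binom{k}{j-1}\frac{v^{j}u^{k-j+1}(u+v-1)^{k-j+1}}{(u+v)^{2k+1}}
\]
after a Chu--Vandermonde convolution of the type collected in Appendix~\ref{sse:chu_vandermonde}. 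This combinatorial identification is the main obstacle: the residue naturally produces a convolution of two binomials, whereas the target carries the Narayana numbers $\tfrac1k\binom kj\binom k{j-1}$, and bridging the two requires the Chu--Vandermonde summation; equivalently, one may sum over $k$ first and match generating functions, recognising $\sum_k(\cdots)s^k$ as $\tfrac{v}{u+v}\rho$ through \eqref{narayanagenerator} and \eqref{eq:diff_gf}. A secondary technical point is the hypothesis $p>q$ used to locate the roots relative to $|z|=1$; since both sides are rational in $u$ and $v$, the degenerate boundary case $u=1$ follows by continuity.
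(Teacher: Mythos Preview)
Your approach is essentially the paper's: cancel $x(1-x)$, substitute $x=\alpha_1+\alpha_2\cos\theta$ with $\alpha_1=p+q$, $\alpha_2=2\sqrt{pq}$, and read off a Fourier/Laurent coefficient on the unit circle. The paper phrases the last step as Parseval's identity applied to $z(\theta)=(\sqrt{p}+\sqrt{q}\,e^{i\theta})^{k-1}$, which is exactly your residue at $z=0$ in different clothing. (Your parenthetical about which root of the factorisation lies inside $|z|=1$ is a red herring: both factors are polynomials in $z$, so the only pole is at the origin.)

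Your anticipated ``main obstacle'' is not one. If you actually carry out the coefficient extraction you set up, the second difference $[z^{k-3}]P-2[z^{k-1}]P+[z^{k+1}]P$ with $P(z)=(z+a)^{k-1}(z+a^{-1})^{k-1}$ gives
\[
-2\sum_{j}a^{k-1-2j}\left(\binom{k-1}{j}^{2}-\binom{k-1}{j-1}\binom{k-1}{j+1}\right),
\]
precisely the combination the paper obtains from Parseval. This collapses to the Narayana numbers via the elementary one-line identity
\[
\binom{k-1}{j}^{2}-\binom{k-1}{j-1}\binom{k-1}{j+1}=\frac{1}{k}\binom{k}{j}\binom{k}{j+1},
\]
which is \emph{not} Chu--Vandermonde (and not of the type in Appendix~\ref{sse:chu_vandermonde}); it is verified directly by writing the left-hand side as $\binom{k-1}{j}^{2}\bigl(1-\tfrac{j(k-1-j)}{(j+1)(k-j)}\bigr)$. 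The paper uses this identity at Eq.~\eqref{jacldensderiv}. So there is no combinatorial obstacle, and no need for the generating-function detour you mention; restoring $a^{2}=p/q=u(u+v-1)/v$ and the prefactor $(u+v)(\sqrt{pq})^{k+1}$ gives \eqref{eq:limit_diff_2} directly.
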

\begin{proof}
Define
\begin{equation}
\label{eq:diff_mom_def}
\begin{split}
  I_{k} & =
  \int_{\lambda_{-}}^{\lambda_{+}}x^{k}(1-x)\rho_{\infty}(x)dx\\
  & = \frac{u+v}{2\pi}\int_{\lambda_{-}}^{\lambda_{+}}x^{k-1}\left(
    \sqrt{x(\lambda_{-}+\lambda_{+})
      -\lambda_{-}\lambda_{+}-x^{2}}\right)dx.
\end{split}
\end{equation}
The substitution 
\[
x = \frac{\lambda_{-}+\lambda_{+}}{2} + 
\frac{\lambda_{+}-\lambda_{-}}{2}\cos(\theta)
\]
transforms the integral in the right-hand side
of~\eqref{eq:diff_mom_def} into
\begin{equation*}
I_{k} = \frac{\alpha_{2}^{2}(u+v)}{4\pi}
\int_{-\pi}^{\pi}\bigl(\alpha_{1}+\alpha_{2}\cos(\theta)\bigr)^{k-1}
\sin^{2}(\theta)d\theta,
\end{equation*}
where $\alpha_{1} = \frac{\lambda_{+}+\lambda_{-}}{2}$ 
and $\alpha_{2} = \frac{\lambda_{+}-\lambda_{-}}{2}$. 

Let $p$ and $q$ and $z(\theta)$ be such that
\begin{equation}
\label{defsabz}
\alpha_{1} = p^{2} + q^{2}, \quad \alpha_{2} = 2pq, \quad 
z(\theta) = (p+qe^{i\theta})^{k-1}. 
\end{equation}
Then, we have
\[
\left|z(\theta)\right|^{2} = \bigl(\alpha_{1}+\alpha_{2}\cos(\theta)\bigr)^{k-1}.
\] 
Set
\[
\zeta = e^{i\theta}z(\theta) \quad \text{and} \quad  
\eta = e^{-i\theta}z(\theta).
\] 
Using 
\[
\sin^{2}(\theta) = \frac{1-\cos(2\theta)}{2} 
= \frac{1}{2}\Re{\bigl(1-e^{2i\theta}\bigr)}
\] 
we see that
\begin{equation*}
\begin{split}
\frac{I_{k}}{u+v} &= \frac{\alpha_{2}^{2}}{8\pi}
\int_{-\pi}^{\pi}\Re{(1-e^{2i\theta})}|z(\theta)|^{2}d\theta\\
&=\frac{\alpha_{2}^{2}}{8\pi}\left(\int_{-\pi}^{\pi}\left|z(\theta)\right|^{2}d\theta
-\Re{\left(\int_{-\pi}^{\pi}\zeta(\theta)
\overline{\eta(\theta)}d\theta\right)}\right).
\end{split}
\end{equation*}
These integrals can be evaluated by calculating the Fourier modes of
$\zeta$, $\eta$ and $z$ and applying Parseval's identity. The Fourier
coefficients are
\begin{equation*}
z_{j} = p^{k-1-j}q^{j}\binom{k-1}{j}, \quad \zeta_{j} 
= p^{k-j}q^{j-1}\binom{k-1}{j-1}, \quad \eta_{j} 
= p^{k-j-2}q^{j+1}\binom{k-1}{j+1}.
\end{equation*}
Parseval's identity gives
\begin{equation}
\label{jacldensderiv}
\begin{split}
\frac{I_{k}}{u+v} &= \frac{\alpha_{2}^{2}}{4}
\sum_{j=0}^{k-1}p^{2k-2j-2}q^{2j}\binom{k-1}{j}^{2}
-\frac{\alpha_{2}^{2}}{4}\sum_{j=0}^{k-1}p^{2k-2j-2}q^{2j}
\binom{k-1}{j-1}\binom{k-1}{j+1} \\
&= \frac{\alpha_{2}^{2}}{4k}\sum_{j=0}^{k-1}p^{2k-2j-2}q^{2j}
\binom{k}{j+1}\binom{k}{j}.
\end{split}
\end{equation}
Solving $p$ and $q$ in terms of $\alpha_{1}$ and $\alpha_{2}$ 
in Eq.~\eqref{defsabz} and using~\eqref{lambda} allows us to
write 
\begin{equation*}
p = \sqrt{\frac{u(u+v-1)}{(u+v)^{2}}}, \quad 
q = \sqrt{\frac{v}{(u+v)^{2}}}, \quad
 \alpha_{2}^{2} = \frac{4uv(u+v-1)}{(u+v)^{4}}.
\end{equation*}
Substituting these expressions into \eqref{jacldensderiv}
and replacing $j \to j-1$ gives Eq.~\eqref{eq:limit_diff_2}.
\end{proof}

\section{Conclusions and Outlook}
\label{se:conclusions}
In this article we have computed the first three terms of the
asymptotic expansions in the limit as $n \to \infty$ of the moments of
the density of the transmission eigenvalues and delay times in chaotic
ballistic cavities.  The fundamental assumption is that the chaotic
dynamics allows us to model the scattering
matrix~\eqref{eq:landauer_buttiker} with the circular ensembles from
RMT.  Our formulae are available for all symmetry classes $\beta \in
\{1,2,4\}$ --- with a few exceptions for the second corrections to the
leading order terms, for which we did not perform the calculations for
$\beta=4$.  For the moments of the transmission eigenvalues we treat
Andreev billiards too.  Finally, we studied the asymptotics of the
Selberg-like integrals as well.

Our results on the moments of the transmission eigenvalues and proper
time delays for $\beta=1$ and $\beta=2$ symmetry classes agree with
those computed using semiclassical
techniques.\cite{BHN08,BK10,BK11,BK12,Nov11} It would be interesting
to know if the results that we obtain for Andreev billiards beyond the
leading order agree with semiclassics too.

In recent announcements Berkolaiko and Kuipers\cite{BK12} and
independently Novaes\cite{Nov11} show that semiclassical computations
lead to the same asymptotic expansions of the moments of the
transmission eigenvalues as RMT. Their results involve combinatorial
expressions for the correlations of the scattering trajectories.  At
this stage, however, it is not clear how to extract explicit formulae
from the combinatorics.  Their formulae include unsolved combinatorial
problems too.  It is a challenging project to go even further and to
obtain the full asymptotic expansions for the moments of the
transmission eigenvalues and proper delay times using RMT
techniques. Nevertheless, it would be interesting pursue this program,
since combinatorics would not appear in the calculations and in the
final formulae.  Furthermore, the equivalence of the two approaches
could answer unsolved combinatorial problems.

In principle such asymptotic series can be obtained because the
asymptotic properties discussed in Sec.~\ref{ss:examp} hold at all
orders in negative powers of $n$. The finite-$n$ formulae that we
computed in the first part of this project\cite{MS11a} contain
products of the form
\begin{equation}
\label{productgamma}
\prod_{j=1}^{p}\frac{\Gamma(a_{j}z+c_{j})}{\Gamma(b_{j}z+d_{j})}.
\end{equation}
One could insert the asymptotic series~\eqref{fullgammaratio}
into~\eqref{productgamma} and repeat the systematic procedures
developed in this article to find further corrections.  In practise,
however, when we go beyond the second correction it is not clear how
to organise the resulting asymptotic terms in a way that would lead to
results in closed form, as those obtained in this paper.
 
We believe that, as for the asymptotic of the ratio
(\ref{fullgammaratio}), it may be possible to establish recursion
relations for the terms in the asymptotic series of
(\ref{productgamma}). This would lead to recursion relations for the
full expansions of the moments of the proper delay times
and of the transmission eigenvalues.  

\appendix
\numberwithin{equation}{section}
\section{Exact Results for $\beta=1$ and $\beta=4$ 
Matrix Ensembles}
\label{sec:exactresults}

In this appendix we report formulae for finite-$n$ that we derived in
the first part of this work\cite{MS11a} and that are the starting
points for our asymptotic analysis.

Equations~(\ref{orthogonaljacobiexample}) and
(\ref{orthogonalwishartexample}) for $\beta=1$ were expressed in terms
of certain coefficients which are defined below. We also give the
corresponding results for $\beta=4$. The expressions for $\beta=2$ are
presented at equations (\ref{transmissionexample}) and
(\ref{wignerexample}). Recall that
\[
(x)_{(n)} =
\frac{\Gamma(x+n)}{\Gamma(x)}
\]
is the  Pochhammer symbol.

In the following we assume that the channel number satisfies
$n>k\beta/2$. This allows for some simplifications; furthermore, a
small $n$ is not required for an asymptotic analysis.

\subsection{Moments of the Proper Delay Times}
We need the following expressions for the Laguerre ensemble:
\begin{subequations}
\begin{align}
\label{Sb}
S_{i,j}^{b}(-k,n) &= \frac{(2b+2n)_{(-k-i-2j+1)}(2n-i-2j+1)_{(i)}} 
{2^{-k-2j+2}(n+1)_{(-j)}(b+n)_{(1-j)}},  \\
\label{Ib}
I_{\Lag}(-k,n)& =2^{-k}\sum_{j=0}^{n/2-1}\binom{2k+2j-1}{2j}
\frac{\bigl(\frac{1}{2}(b+n)\bigr)_{(-k-j)}}%
{\bigl(\frac{1}{2}(1+n)\bigr)_{(-j)}} 
+ \phi_{-k,n}^{\mathrm{L}},
\end{align}
\end{subequations}
where $\phi_{-k,n}^{\mathrm{L}}$ is an exponentially decaying term
which has no contribution to the asymptotic expansions
\eqref{eq:asy_ex_td} for any asymptotic order $p$. Its explicit
expression can be found in Ref.~\citenum{MS11a}, Sec. 6.3,
Eq. (6.29). The exact moments for $\beta=1$ are then given by
Eq. (\ref{orthogonalwishartexample}); for $\beta=4$ we have
\begin{equation}
\begin{split}
M_{\Lag}^{(4)}(-k,n) & = 2^{k-1}M_{\mathrm{L}_{2b}}^{(2)}(-k,2n)\\
& \quad -\sum_{j=1}^{n}\sum_{i=0}^{2n-2j}\binom{k+j-1}{k-1}
\binom{k+i+2j-1}{k-1}S_{i,j}^{b}(-k,n).
\end{split}
\end{equation}
\subsection{Moments of the Transmission Eigenvalues}
The formulae we need for the Jacobi ensemble are
\begin{equation}
\label{Sab}
\begin{split}
 S^{a,b}_{i,j}(k,n) & = \frac{2^{4j-3}(2a+2n-i-2j+1)_{(i)}
   (2b+2n)_{(k-i-2j+1)}(2a+2b+2n)_{(k-i-2j+1)}}%
 {(2n-2j+1)_{(-i)}(n+1)_{(-j)}(a+n+1)_{(-j)}(b+n)_{(1-j)}(a+b+n)_{(1-j)}} \\
 &\quad \times
 \frac{(2a+2b+4n-4j+1)(2a+2b+4n-2i-4j+k+1)}%
{(2a+2b+4n-i-2j+1)_{(1+k)}(2a+2b+4n-i-4j+1)_{(1+k)}}.
\end{split}
\end{equation}
and
\begin{equation}
\label{eq:Iab}
\begin{split}
I_{\J}(k,n) = & 4^{k}\sum_{j=0}^{k}\binom{2k}{2j}
\frac{(a+b+2n-4j-1+2k)(\frac{1}{2}(a+b+n))_{(k-j)}(\frac{1}{2}(b+n))_{(k-j)}}
{(a+b+2n-2j-1)_{(2k+1)}(\frac{1}{2}(a+n+1))_{(-j)}(\frac{1}{2}(1+n))_{(-j)}}
\end{split}
\end{equation}

The exact moments for $\beta=1$ are given in
(\ref{orthogonaljacobiexample}); for $\beta=4$ we have 
\begin{equation}
\label{symplecticjacobi}
M_{\J}^{(4)}(k,n) = \frac{1}{2}M_{\mathrm{J}_{2a,2b}}^{(2)}(k,n)
-\sum_{j=1}^{\lfloor k/2 \rfloor}
\sum_{i=0}^{k-2j}\binom{k}{i}\binom{k}{i+2j}S^{a,b}_{i,j}(k,n).
\end{equation}

\section{Jacobi Polynomials and Hypergeometric functions}
\label{jp_hf}

This appendix contains definitions and identities of Jacobi
polynomials and Gauss hypergeometric functions that are needed
throughout the paper.  They can be found in standard references, in
particular in the books by Abramowitz and Stegun,\cite{AS72} Ch.
15, and Szeg\H{o},\cite{Sze39} Ch. 4.  However, because of their
extensive use in many proofs, for the convenience of the reader we list
them here.

The Jacobi polynomials may be defined explicitly by the formula
\begin{equation}
\label{jacdef}
P_{n}^{(\alpha,\beta)}(x) := \sum_{j=0}^{n}\binom{n+\alpha}{n-j}
\binom{n+\beta}{j}\left(\frac{x-1}{2}\right)^{j}
\left(\frac{x+1}{2}\right)^{n-j}.
\end{equation}
If $\alpha > -1$ and $\beta > -1$ they are orthogonal on $[-1,1]$ with
respect to the measure 
\[
d\mu = (1-x)^{\alpha}(1+x)^{\beta}dx
\]
and satisfy the three-term recurrence equation
\begin{multline}
  \label{threetermrec}
  2n(n+ \alpha + \beta)(2n + \alpha + \beta -2)
  P^{(\alpha,\beta)}_n(x)\\
= (2n + \alpha + \beta -1)\bigl((2n + \alpha + \beta)(2n + \alpha +
\beta - 2)x + \alpha^2 - \beta^2\bigr)P^{(\alpha,\beta)}_{n-1}(x)\\
-2(n+\alpha - 1)(n + \beta - 1)(2n + \alpha +
\beta)P^{(\alpha,\beta)}_{n-2}(x), \quad  n =2,3,\dotsc,
\end{multline}
with initial conditions
\begin{equation*}
  P_0^{(\alpha,\beta)}(x)=1 \quad \text{and} \quad
  P_1^{(\alpha,\beta)}(x) = \tfrac12 (\alpha + \beta + 2 )x +
  \tfrac12(\alpha - \beta).
\end{equation*}
They have connection coefficients of the form ($p \in \mathbb{Z}$)
\begin{subequations}
\label{connection1}
\begin{align}
P_{n}^{(\alpha,\beta)}(x) & = \sum_{j=0}^{p}
\mathcal{C}_{j,p,n}^{\alpha,\beta}P_{n-j}^{(\alpha,\beta+p)}(x), \\
P_{n}^{(\alpha,\beta)}(x) & =
\sum_{j=0}^{p}(-1)^{j}
\mathcal{C}_{j,p,n}^{\beta,\alpha}P_{n-j}^{(\alpha+p,\beta)}(x),
\end{align}
\end{subequations}
where
\begin{equation*}
\mathcal{C}_{j,p,n}^{\alpha,\beta}  = \binom{p}{j}
\frac{(\alpha+\beta+n+1)_{(p-j)}(\alpha+\beta+2n-2j+1+p)}%
{(\alpha+\beta+2n-j+1)_{(1+p)}(\alpha + n+1)_{(-j)}}.
\end{equation*}

The Gauss hypergeometric function is defined in the unit circle by the series
\begin{equation}
\label{hypergeomdef}
{}_2F_1(a,b;c;z):= \sum_{j=0}^\infty \frac{(a)_j(b)_j}{(c)_j} \, \frac
{z^j} {j!}; 
\end{equation}
it can be analytically continued in the rest of the complex
plane. For special values of $a$ and $b$ this series truncates and
becomes a polynomial.  In particular, for $\alpha,\beta \in
\mathbb{Z}$ it is related to the Jacobi
polynomials by the identities 
\begin{subequations}
\begin{align}
\label{jacobihypergeom}
{}_2F_1(-n,-n+\beta;\alpha+1;x) & = \frac{\Gamma(\alpha+1)
\Gamma(n-\beta+1)}{\Gamma(n+\alpha-\beta+1)}
(x-1)^{n-\beta}P^{(\beta,\alpha)}_{n-\beta}\left(\frac{x+1}{x-1}\right),\\
\label{wignerhypergeom}
{}_2F_1(n,n+\beta;\alpha+1;x^{-1}) &=
\left(\frac{x}{x-1}\right)^{\beta+n}
\frac{\Gamma(\alpha+1)\Gamma(n-\alpha)}%
{\Gamma(n)}P^{(\alpha,\beta)}_{n-\alpha-1}\left(\frac{x+1}{x-1}\right).
\end{align}
\end{subequations}


\section{Generalizations of Chu-Vandermonde's summations}
\label{sse:chu_vandermonde}

Many proofs in Sec.~\ref{se:orth_sym} require computing sums of
the form
\begin{subequations}
\begin{gather}
\label{chumomentspos}
\sum_{i=1}^{j}\binom{k}{j-i}\binom{k}{j+i}i^{p}, \\
\label{chumomentsneg}
\sum_{i=1}^{j}\binom{k+j-i-1}{k-1}\binom{k+j+i-1}{k-1}i^{p}, 
\end{gather}
\end{subequations}
where $p$ is an integer. These may be interpreted as generalisations
of the classical Chu-Vandermonde convolution identities
\begin{subequations}
\begin{align}
\label{vandermondeconvpos}
\sum_{i=-m}^{l}\binom{r}{m+i}\binom{s}{l-i} & = \binom{r+s}{m+l}, \\
\label{vandermondeconvneg}
\sum_{i=-m}^{l}\binom{m+i}{r}\binom{l-i}{s}& =\binom{m+l+1}{r+s+1}.
\end{align}
\end{subequations}

We now present two lemmas which show how the asymptotics of the double
sums in Eqs. (\ref{orthogonaljacobiexample}) and
(\ref{orthogonalwishartexample}) are related to sums
(\ref{chumomentspos}) and (\ref{chumomentsneg}) respectively. We then give
two examples describing the basic strategy used to evaluate
(\ref{chumomentspos}) and (\ref{chumomentsneg}).
\begin{lemma}
\label{chupropjacobi}
Set $\theta = \lfloor k/2 \rfloor$. For any finite set of coefficients
$\left \{C_{i,j}\right \}$ we have the identity
\begin{equation}
\label{relationtochu}
\sum_{j=1}^{\theta}\sum_{i=0}^{k-2j}\binom{k}{i}\binom{k}{i+2j}C_{i,j}
= \sum_{j=1}^{k-1}\sum_{i=1}^{j}\binom{k}{j-i}\binom{k}{i+j}C_{k-i-j,i}.
\end{equation}
\end{lemma}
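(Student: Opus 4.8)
The plan is to prove the identity by a single change of summation variables, turning it into a term-by-term matching between the two double sums. The only analytic facts needed are the two elementary properties of binomial coefficients: the reflection symmetry $\binom{k}{a}=\binom{k}{k-a}$ and the vanishing $\binom{k}{a}=0$ for $a>k$ (or $a<0$). The coefficients $C_{i,j}$ play no active role; they are simply carried along, so it suffices to exhibit a bijection between index pairs under which \emph{both} binomial factors and the subscript of $C$ are preserved.

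First I would trim the right-hand side. On the right the factor $\binom{k}{i+j}$ vanishes whenever $i+j>k$, so the sum is unchanged if we restrict to the triangle
\[
\mathcal{R}=\{(i,j):1\le i\le j,\ i+j\le k\},
\]
the constraint $j\le k-1$ becoming automatic since $i\ge 1$ forces $j\le k-i\le k-1$. Next I would introduce the map sending a left-hand index pair $(i,j)$, subject to $1\le j\le\lfloor k/2\rfloor$ and $0\le i\le k-2j$, to $(p,q):=(j,\,k-i-j)$. Using the two binomial symmetries one checks
\[
\binom{k}{i}=\binom{k}{k-i}=\binom{k}{p+q},\qquad
\binom{k}{i+2j}=\binom{k}{k-i-2j}=\binom{k}{q-p},\qquad
C_{i,j}=C_{k-p-q,\,p},
\]
so that the left-hand summand indexed by $(i,j)$ is \emph{literally} the right-hand summand indexed by $(p,q)$. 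Thus the whole argument reduces to checking that $(i,j)\mapsto(p,q)$ is a bijection from the left-hand index set onto $\mathcal{R}$.

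That verification is routine bookkeeping. The map is injective because $(i,j)$ is recovered via $j=p$, $i=k-p-q$. For the forward range, $q-p=k-i-2j\ge 0$ (from $i\le k-2j$) gives $p\le q$, while $i\ge 0$ gives $p+q\le k$ and $p=j\ge 1$, so $(p,q)\in\mathcal{R}$; conversely, given $(p,q)\in\mathcal{R}$, the conditions $p\le q$ and $p+q\le k$ yield $2p\le k$, hence $j=p\le\lfloor k/2\rfloor$, and they translate back exactly into $0\le i\le k-2j$. The only point requiring genuine care—and hence the main (modest) obstacle—is the range trimming in the first step: the naive right-hand summation region is strictly larger than the bijective image of the left-hand domain, and the identity holds precisely because the surplus terms with $i+j>k$ are annihilated by $\binom{k}{i+j}$. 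Once this is made explicit, the proof closes with no further computation.
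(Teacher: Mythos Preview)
Your proof is correct and uses essentially the same mechanism as the paper's: the same change of variables (the paper shifts $i\to k-i-j$ and later relabels $(i,j)\to(j,i)$, which composes to your $(i,j)\mapsto(j,k-i-j)$) together with the vanishing of $\binom{k}{a}$ outside $0\le a\le k$. Your packaging as a single bijection onto the trimmed region $\mathcal{R}$ is cleaner than the paper's route through an explicit order interchange and a case split on the sign of $\theta-i$, but the underlying idea is identical.
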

\begin{proof}
  Let $S_k$ denote the left-hand side of
  Eq.~\eqref{relationtochu}. Shifting the summation index in
  the inner sum as $i \to k-i-j$ gives
\begin{equation}
\label{eq:skf}
\begin{split}
S_k& =\sum_{j=1}^{\theta}\sum_{i=j}^{k-j}\binom{k}{k-i-j}
\binom{k}{k-i+j}C_{k-i-j,j}\\
& =\sum_{i=1}^{k-1}
\sum_{j=1}^{\theta -
  |\theta-i|+1}\binom{k}{i+j}\binom{k}{i-j}C_{k-i-j,j},
\end{split}
\end{equation}
where in the last passage we interchanged the order of summation. By
decomposing the outer sum according to whether $\theta-i \leq 0$ or
$\theta-i>0$ Eq.~\eqref{eq:skf} becomes
\begin{equation*}
\begin{split}
S_k&=\sum_{i=1}^{\theta}\sum_{j=1}^{i+1}\binom{k}{i+j}
\binom{k}{i-j}C_{k-i-j,j}\\
& \quad + \sum_{i=\theta+1}^{k-1}\sum_{j=1}^{2\theta-i+1}\binom{k}{i+j}
  \binom{k}{i-j}C_{k-i-j,j}.
\end{split}
\end{equation*}
Now note that the upper limits in both inner sums can be replaced by
$i$:  in the first one because if $j=i + 1$, then
\begin{equation*}
  \binom{k}{i-j}=\binom{k}{-1}=0;
\end{equation*}
in the second inner sum $i \ge \theta +1$, thus $2\theta-i + 1\le i$.
Since $j$ lies in the range 
\[
2\theta - i + 1 < j \le i, 
\]
then
\[
j + i> 2\theta + 1>k  \quad \text{and} \quad \binom{k}{i +j}=0,
\]
yielding no contribution to the sum. Finally, relabelling $(i,j) \to
(j,i)$ gives the right-hand side of Eq. (\ref{relationtochu}).
\end{proof}
\begin{lemma}
\label{chuproplaguerre}
Suppose that $\left \{C_{i,j}\right \}$ is any set of coefficients
such that the double series 
\begin{equation}
\label{lagchu1stline}
\Theta= \sum_{j=1}^{\infty}\sum_{i=0}^{\infty}\binom{k+i-1}{k-1}
\binom{k+2j+i-1}{k-1}w^{-k-j-i}C_{i,j}
\end{equation}
is absolutely convergent. Then
\begin{equation}
\label{lagchu2ndline}
\Theta =\sum_{j=1}^{\infty}\sum_{i=1}^{j}
\binom{k+j-i-1}{k-1}\binom{k+j+i-1}{k-1}w^{-k-j}C_{j-i,i}.
\end{equation}
\end{lemma}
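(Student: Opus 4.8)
The plan is to exploit the fact that the exponent of $w$ in the summand of~\eqref{lagchu1stline} is $-k-(i+j)$, and therefore depends only on the combination $i+j$. This invites us to collect terms along the diagonals $i+j=\text{const}$. Because the double series is assumed to be absolutely convergent, Fubini's theorem guarantees that this regrouping does not alter the value of $\Theta$; this hypothesis is the entire reason the statement is phrased with an absolute-convergence assumption.

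Concretely, I would introduce the new summation variable $J=i+j$. For each fixed $J\ge 1$, the admissible pairs $(i,j)$ subject to $i\ge 0$ and $j\ge 1$ are precisely those of the form $j=1,\dots,J$ with $i=J-j$. Substituting $i=J-j$ into the two binomial coefficients gives
\[
\binom{k+i-1}{k-1}=\binom{k+J-j-1}{k-1}
\qquad\text{and}\qquad
\binom{k+2j+i-1}{k-1}=\binom{k+J+j-1}{k-1},
\]
while the coefficient becomes $C_{J-j,j}$ and the weight collapses to $w^{-k-j-i}=w^{-k-J}$.

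With these substitutions the double sum rearranges into
\[
\Theta=\sum_{J=1}^{\infty}w^{-k-J}\sum_{j=1}^{J}
\binom{k+J-j-1}{k-1}\binom{k+J+j-1}{k-1}\,C_{J-j,j},
\]
and relabelling the outer index $J\to j$ together with the inner index $j\to i$ reproduces the right-hand side~\eqref{lagchu2ndline} verbatim. Thus the proof is essentially a single diagonal reindexing.

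The only delicate point is the interchange of the order of summation, and this is precisely what the absolute-convergence hypothesis licenses. I expect this to be the main (indeed the only) obstacle, but it is a mild one: unlike the finite Jacobi analogue in Lemma~\ref{chupropjacobi}, where the truncated ranges force one to track vanishing binomial coefficients at the boundary, here both sums are infinite and no such bookkeeping arises. The argument therefore reduces to verifying the change of variables $J=i+j$ and citing absolute convergence to justify the rearrangement.
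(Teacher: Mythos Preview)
Your proposal is correct and essentially identical to the paper's own proof. The paper phrases the first step as ``shift the summation index $i\to i-j$'' and then ``interchange the order of summation and relabel,'' which is precisely your diagonal substitution $J=i+j$ followed by swapping the outer and inner sums; the absolute-convergence hypothesis is invoked for exactly the reason you state.
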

\begin{proof}
  First, shift the summation index $i \to i-j$ in
  (\ref{lagchu1stline}); then, (\ref{lagchu2ndline}) is obtained by
  interchanging the order of summation and relabelling the indices.
\end{proof}

Lemmas \ref{chupropjacobi} and \ref{chuproplaguerre} are important
because the inner sums on the right-hand side of (\ref{relationtochu})
and in (\ref{lagchu2ndline}) can be explicitly computed whenever
$C_{i,j}$ is a polynomial in $i$ and $j$.  This technique is better
illustrated with two examples.

\begin{example}
\label{ex:chu_sum1}
Set $p=0$ in (\ref{chumomentsneg}). From (\ref{vandermondeconvneg})
we have
\begin{equation*}
\sum_{i=-j}^{j}\binom{k+j-i-1}{k-1}\binom{k+j+i-1}{k-1} 
= \binom{2k+2j-1}{2k-1},
\end{equation*}
which implies
\begin{equation*}
\sum_{i=1}^{j}\binom{k+j-i-1}{k-1}\binom{k+j+i-1}{k-1} 
= \frac{1}{2}\left(\binom{2k+2j-1}{2k-1}-\binom{k+j-1}{k-1}^{2}\right).
\end{equation*}
This formula was used to obtain (among others)
Eq. (\ref{lastlinederiv1}) in
Proposition~\ref{subleadingwignerdelayrmt}.
\end{example}
\begin{example}
\label{ex:chumom} 
Now take $p=2$ in the sum \eqref{chumomentspos}. We have
\begin{equation*}
\begin{split}
\sum_{i=-j}^{j}\binom{k}{j-i}\binom{k}{j+i}i^{2} & = k\sum_{i=-j}^{j}
\binom{k}{j-i}\binom{k-1}{j+i-1}i
   -j\sum_{i=-j}^{j}\binom{k}{j-i}\binom{k}{j+i}i\\
 &  =-k^{2}\sum_{i=-j}^{j}\binom{k-1}{j-i-1}
  \binom{k-1}{j+i-1}+kj\sum_{i=-j}^{j}\binom{k-1}{j-i-1}\binom{k}{j+i}\\
&\quad -j^{2}\sum_{i=-j}^{j}\binom{k}{j-i}\binom{k}{j+i}
   +kj\sum_{i=-j}^{j}\binom{k}{j-i}\binom{k-1}{j+i-1}
\end{split}
\end{equation*}
Evaluating each sum using the identity (\ref{vandermondeconvpos}) and
simplifying give
\begin{equation*}
\sum_{i=1}^{j}\binom{k}{j-i}\binom{k}{j+i}i^{2} = \frac{k}{4}\binom{2k-2}{2j-1}.
\end{equation*}
To obtain this formula we used the symmetry of the summands in
(\ref{chumomentspos}) under the substitution $i \to -i$.
\end{example}
Completely analogous considerations allow the computation of these
sums for higher values of $p$.

\section{An Explicit Example}
\label{ap:add_exs}

Many proofs in this article are based on a systematic approach
outlined in Proposition~\ref{delaysubleadingtwoprop} and further
discussed in Remark~\ref{linindepremark}.  The algebra involved is
rather cumbersome, but can be easily done with a symbolic algebra
computer package like those contained in Maple or Mathematica.
In order to give the reader a flavour of what such formulae look like,
in this appendix we report the explicit expressions that occur in the
computation of the second correction of the differences of the moments
of the transmission eigenvalues $\Delta \mathcal T^{(2,\delta)}_{k,2}(u)$.

In the proof of Proposition~\ref{unitarysubsubbeta2prop} we showed
that 
\begin{equation}
\label{eq:scrmte}
\Delta \mathcal{T}_{k,2}^{(2,\delta)}(u) = \frac{1}{k}
\sum_{j=0}^{k}\binom{k}{j}\binom{k}{j-1}\frac{u^{2k-2j}}{(u+1)^{2k+3}}
F_{k,2}^{(2,\delta)}(u,j).
\end{equation}
The coefficient $F_{k,2}^{(2,\delta)}(u,j)$ is polynomial of $2$-nd
order in $\delta$ and of $4$-th in $u$.  It is given explicitly by 
\begin{equation}
\label{p2tksubsub}
F^{(2,\delta)}_{k,2}(u,j) = A_{j}+\frac{\delta}{2}B_{j}+\frac{\delta^{2}}{4}C_{j},
\end{equation}
where
\begin{align*}
A_{j} &= \left( j-k \right)  \left( j-1-k \right)  \left( 3{j}^{2}-6j
k-j+k+3{k}^{2}-1 \right)/6 \\
&\quad +\left( j-1-k \right)  \left( 2j-1
-2k \right)  \left( j-k \right)u/3 \\
& \quad -j\left( j-1-k \right)\left( -jk+k+1+{j}^{2}-j \right)u^{2} \\
& \quad -j\left( 2j-1 \right)\left( j-1 \right) {u}^{3}/3
+j \left( j-1 \right)  \left( 3
{j}^{2}-5j+1 \right)u^{4}/6 ,\\
\frac{2B_{j}}{(u-1)} &=\left( j-k \right)  \left( j-1-k
 \right)  \left( 2j-1-2k \right)\\
  &\quad  -j(2j-1)\left( j-1 \right) {u}^{3} 
  +\left( 1+2j \right)\left( j-k \right)  \left( j-1-k
 \right)u \\
&\quad   -j \left(j-1\right)\left(2j-2k-3 \right)u^{2} \\
\intertext{and}
C_{j} &=   1/2 \left( j-k \right)\left( j-1-k \right)+u 
\left( k+1 \right)\left( j-1-k \right) \\
 &\quad +{u}^{2} \left( j \left( k-j+1 \right) 
   +k\bigl( k+1 \right)/2  \bigr) -
j \left( k+1 \right){u}^{3}+j\left( j-1 \right) {u}^{4}/2 .
\end{align*}

One of the main steps in the demonstration of
Proposition~\ref{unitarysubsubbeta2prop} was to turn
formula~\eqref{eq:scrmte} into a linear combination of Jacobi
polynomials. Below  we give the final expression. 

Define the transform of a sequence $\xi =(\xi_{l})_{l=0}^{\infty}$
(which may also depend on parameters $u$ and $k$) by the formula
\begin{equation*}
\mathcal{N}[\xi] := \frac{1}{k}\sum_{j=0}^{k}
\binom{k}{j}\binom{k}{j-1}\frac{u^{2k-2j}}{(u+1)^{2k+3}}\xi_{j}.
\end{equation*}
The second correction $\Delta \mathcal{T}_{k,2}^{(2,\delta)}(u)$ can be
written in terms of the sequences $A= (A_{l})_{l=0}^{\infty}$, $B =
(B_{l})_{l=0}^{\infty}$ and $C = (C_{l})_{l=0}^{\infty}$ as
\begin{equation*}
\Delta \mathcal{T}_{k,2}^{(2,\delta)}(u) = \mathcal{N}[A]+
\frac{\delta}{2}\mathcal{N}[B]+\frac{\delta^{2}}{4}\mathcal{N}[C].
\end{equation*}
By the definition of Jacobi polynomials (\ref{jacdef}) and minor
algebraic manipulations we arrive at
\begin{multline}
\label{NTcoeffa0}
(u+1)^{k+6}\mathcal{N}[A]=k(1-5k+3k^{2})
P^{(1,1)}_{k-2}(\tilde{u})(u-1)^{k-2}(u^{2}+1)u^{2}/6 \\
+k(k-1)(2-6k)P^{(2,1)}_{k-3}(\tilde{u})(u-1)^{k-3}u^{2}/6+(k-1)^{2}k
P^{(1,1)}_{k-3}(\tilde{u})(u-1)^{k-1}u^{2}(u+1)^{2}/2\\
-2k(k-1)P^{(1,0)}_{k-2}(\tilde{u})(u-1)^{k-2}u^{3}(u+1)/3+
kP^{(0,0)}_{k-1}(\tilde{u})(u-1)^{k-1}u^{2}(u+1)^{2}\\
-2k^{2}P^{(1,1)}_{k-2}(\tilde{u})(u-1)^{k-2}u^{3}(u+1)/3+2k(k-1)
P^{(1,2)}_{k-3}(\tilde{u})(u-1)^{k-3}u^{5}\bigl(1-u(3k-1)/2\bigr).
\end{multline}
The coefficient of $\frac{\delta}{2}$ is 
\begin{multline}
\label{NTcoeffa1}
(u+1)^{k+6}\mathcal{N}[B] = -k(k-1)P^{(1,0)}_{k-2}(\tilde{u})
(u-1)^{k-1}(u+1)u^{2} \\
+kP^{(1,1)}_{k-2}(\tilde{u})(u-1)^{k-1}(u+1)u^{2}+k(k-1)P^{(1,2)}_{k-3}
(\tilde{u})(u-1)^{k-2}u^{4}=0.
\end{multline}
This identity is obtained  using formula
(\ref{connection1}) and the three-term recurrence relation
(\ref{threetermrec}). The coefficient of $\frac{\delta^{2}}{4}$ is
given by
\begin{multline}
\label{NTcoeffa2}
(u+1)^{k+5}\mathcal{N}[C] = kP^{(1,1)}_{k-2}(\tilde{u})(u-1)^{k-2}
u^{2}(1+u^{2})/2\\
-(1+k)uP^{(1,0)}_{k-1}(\tilde{u})(u-1)^{k-1}(u+1)/2
+ku^{2}P^{(0,0)}_{k-1}(\tilde{u})(u-1)^{k-1}(u+1)\\
-(k+1)u^{3}P^{(0,1)}_{k-1}(\tilde{u})(u-1)^{k-1}(u+1)
+(k+1)u^{2}P^{(1,1)}_{k-1}(\tilde{u})(u-1)^{k-1}(u+1)/2.
\end{multline}
\bibliographystyle{amsplain}
\bibliography{chaotic_cav_bib}

\end{document}